\newif\ifcomments
\newif\ifextended
\commentstrue{}

\documentclass[acmsmall,screen,nonacm]{acmart}

\usepackage[frozencache,cachedir=minted-cache]{minted}
\usepackage[many]{tcolorbox}
\tcbuselibrary{minted,skins,breakable}
\tcbset{listing engine=minted}
\usepackage{proof}
\usepackage{tabularx}
\usepackage{style}
\usepackage{enumitem}
\usepackage{listings}
\usepackage{xcolor}
\usepackage{minted, caption}
\usepackage{adjustbox}
\usepackage{wrapfig}
\usepackage{multicol}
\usepackage[subtle]{savetrees}

\makeatletter
\let\@authorsaddresses\@empty
\makeatother

\AtBeginDocument{%
  \providecommand\BibTeX{{%
\normalfont B\kern-0.5em{\scshape i\kern-0.25em b}\kern-0.8em\TeX}}}

\newminted[ccudaCode]{python}{
  linenos,
  autogobble,
  breaklines,
  baselinestretch=1.0,
  fontsize=\tiny,
  numbersep=3pt,
  frame=single,
}

\newtcblisting{ccudaBox}{
  minted language=python,
  minted options={
    linenos,
    autogobble,
    breaklines,
    baselinestretch=1.0,
    fontsize=\tiny,
    numbersep=3pt,
    frame=single,
  },
  before upper={\dontdofcolorbox},
  colback=white,     
  colframe=white,    
  boxrule=0pt,       
  left=0pt, right=0pt, top=0pt, bottom=0pt, 
  listing only,
  enhanced,
  sharp corners,
}

\newtcblisting{cudaBox}{
  minted language=CUDA,
  minted options={
    linenos,
    autogobble,
    breaklines,
    baselinestretch=1.0,
    fontsize=\tiny,
    numbersep=3pt,
    frame=single,
  },
  colback=white,     
  colframe=white,    
  boxrule=0pt,       
  left=0pt, right=0pt, top=0pt, bottom=0pt, 
  listing only,
  enhanced,
  sharp corners,
}

\makeatletter
\AtBeginEnvironment{minted}{\dontdofcolorbox}
\def\dontdofcolorbox{\renewcommand\fcolorbox[4][]{##4}}
\makeatother

\makeatletter
\AtBeginEnvironment{minted}{\dontdofcolorbox}
\AtBeginEnvironment{tcb@listing@minted}{\dontdofcolorbox}
\def\dontdofcolorbox{\renewcommand\fcolorbox[4][]{##4}}
\makeatother

\makeatletter
\renewcommand\fcolorbox[4][]{#4}
\makeatother

\newcommand{\unit}{perspective\xspace}
\newcommand{\Unit}{Perspective\xspace}
\newcommand{\name}{Prism\xspace}
\newcommand{\calcname}{Bundl\xspace}

\begin{document}
\title{Modular GPU Programming with Typed \Unit{}s}

\newcommand{\mb}[1]{\ifcomments\textcolor{red}{[\textbf{MB:}~#1]}\fi}
\newcommand{\dhs}[1]{\ifcomments\textcolor{orange}{[\textbf{DHS:}~#1]}\fi}
\newcommand{\jwc}[1]{\ifcomments\textcolor{green}{[\textbf{JWC:}~#1]}\fi}
\newcommand{\refline}[1]{\textcolor{black}{{#1}}}
\newcommand{\cn}{\ifcomments\textcolor{red}{\textbf{CITATION NEEDED}}\xspace\fi}
\newcommand{\saman}[1]{\ifcomments\textcolor{magenta}{[\textbf{Saman:}~#1]}\fi}

\newcommand{\tempunit}{\ifcomments\textcolor{brown}{\fi\unit{}\xspace\ifcomments}\fi}
\newcommand{\TempUnit}{\ifcomments\textcolor{brown}{\fi\Unit{}\xspace\ifcomments}\fi}

\newcommand{\code}[1]{\textbf{\texttt{#1}}}
\newcommand{\partition}{partition}
\newcommand{\ownpartition}{claim}

\author{\href{https://manya-bansal.github.io/}{Manya Bansal}}
\email{manya227@mit.edu}
\affiliation{
  \institution{Massachusetts Institute of Technology}
  \city{Cambridge}
  \state{Massachusetts}
  \country{United States of America}
}

\author{\href{https://www.sainati.pl/}{Daniel Sainati}}
\email{sainati@seas.upenn.edu}
\affiliation{
  \institution{University of Pennsylvania}
  \city{Philadelphia}
  \state{Pennsylvania}
  \country{United States of America}
}

\author{\href{https://www.cis.upenn.edu/\%7e\%6a\%77\%63/}{Joseph W. Cutler}}
\email{jwc@seas.upenn.edu}
\affiliation{
  \institution{University of Pennsylvania}
  \city{Philadelphia}
  \state{Pennsylvania}
  \country{United States of America}
}

\author{\href{https://people.csail.mit.edu/\%73\%61\%6D\%61\%6E/}{Saman Amarasinghe}}
\affiliation{
  \institution{Massachusetts Institute of Technology}
  \city{Cambridge}
  \state{Massachusetts}
  \country{United States of America}
}

\author{\href{https://people.csail.mit.edu/\%6A\%72\%6B/}{Jonathan Ragan-Kelley}}
\affiliation{
  \institution{Massachusetts Institute of Technology}
  \city{Cambridge}
  \state{Massachusetts}
  \country{United States of America}
}

\begin{abstract}
  To achieve peak performance on modern GPUs, one must balance two frames of
mind: issuing instructions to individual threads to control their behavior,
while simultaneously tracking the convergence of many threads acting in concert
to perform \emph{collective operations} like Tensor Core instructions. The tension between
these two mindsets makes modular programming error prone. Functions that
encapsulate collective operations, despite being called per-thread, must
be executed cooperatively by groups of threads. 

In this work, we introduce Prism, a new GPU language that restores
modularity while still giving programmers the low-level control over collective
operations necessary for high performance.  Our core idea is \emph{typed
perspectives}, which materialize, at the type level, the granularity at which the
programmer is controlling the behavior of threads. We describe the design of
Prism, implement a compiler for it, and lay its theoretical foundations in a
core calculus called Bundl. We implement state-of-the-art GPU kernels in
Prism and find that it offers programmers the safety guarantees needed to
confidently write modular code without sacrificing performance.


\end{abstract}

\maketitle
\vspace*{-0.5em}
\section{Introduction}\label{sec:intro}


CUDA~\cite{nvidia-cuda-2025} is a low-level, imperative programming language
for NVIDIA GPUs. These GPUs are organized into a hierarchy of compute resources.
\emph{Threads} are the basic unit of sequential execution, \emph{blocks} are
groups of threads that can cooperate through shared scratchpad memory, and the
\emph{grid} is the full collection of blocks launched for a computation. A
GPU program executes in parallel across this hierarchy, but is written \emph{from the
perspective of a single thread}.

While operations are specified per-thread, some are only
valid when executed \textit{collectively} by a group of threads. The
\code{\_\_syncthreads()} intrinsic, for instance, synchronizes all threads
within a block, and it will cause a deadlock if executed by only some of those threads. There are many such collective intrinsics, including Tensor
Core instructions ~\cite{NVIDIAPTXISAWarpLevelMatrix,
NVIDIAPTXISAAsyncWarpGroup, NVIDIAPTXISATCGen05}, warp shuffle operations \cite{warp-shuffle},
and other kinds of synchronization primitives \cite{sync-functions}. These operations
require programmers to carefully marshal compute resources to
coordinate which threads execute which lines of code. As a result, such
collective operations break the illusion of threads executing
independently. 

The conceptual clash between regular statements (executed by a single thread)
and collective operations (executed by a group of threads) impacts not only
hardware intrinsics, but also \emph{user-defined} collective operations, like functions.  
While functions are invoked individually by threads, they may encapsulate collective
behavior, creating a contradiction between the per-thread syntax of their
invocation and the cooperative semantics of their execution.

This contradiction puts modularity at odds with correctness and is apparent even in widely
used CUDA libraries that package common functionality via function interfaces 
\cite{nvidia-cub-2025, Thakkar_CUTLASS_2023,cublasdx}. Consider the
following snippet of documentation taken directly
from CUB ~\cite{nvidia-cub-2025}, a library of
parallel primitives. It describes the
\code{BlockReduce} function \cite{blockreduce}, in which all threads in a block
collaboratively apply a reduction operation, such as a maximum or prefix sum,
over an array:

\begin{figure}[h]
  \vspace*{-1em}
  \raggedright
  \colorbox{gray!20}{%
  \parbox{\dimexpr\linewidth-2\fboxsep}{
      \footnotesize
      Computes a \underline{block-wide} reduction for thread$_0$ using the specified binary reduction
      functor.
      \begin{itemize}
        \item The return value is undefined in threads other than thread$_0$.
        \item A subsequent \code{\_\_syncthreads()} threadblock barrier should be
          invoked after calling this method if the collective's temporary storage (e.g.,
          \code{temp\_storage}) is to be reused or repurposed.
      \end{itemize}
  }%
  }
  \vspace*{-1em}
  \end{figure}







The snippet above attempts to convey several
assumptions implicit in \code{BlockReduce}'s implementation.
First, the function is only well defined when invoked by all threads in a block.
Second, because it accesses memory shared among threads, any
subsequent reuse of that storage requires synchronization to ensure all
threads have completed their accesses.
In effect, CUB attempts to retrofit CUDA with information about the compute and
memory requirements of collective operations.  By prefixing functions
with identifiers such as \code{Block}, CUB creates ad-hoc ``namespaces'' for
different functions that assume similar invariants.  However, without a type
system to statically enforce these invariants, correct usage of this
function---and collective operations in general---depends on carefully reading and interpreting the
documentation.





In this work, we ask: can we provide low-level control over
collective operations while statically guaranteeing that they execute with
the necessary compute resources?
By reifying configurations of compute resources as type-level
\emph{\unit{}s}---so named
because they describe the view of GPU resources against which each statement is written---we find that we can.
Our key insight is that GPU programmers naturally map computations onto different
compute resources, and, unlike CUDA, which obscures this mapping, we can track
it with a type system. This tracking enforces that collective operations are
executed with the necessary resources, while still allowing users to access low-level intrinsics. 
Further, it allows users to define and compose their own collective operations 
by specifying the \unit with which their code must be run. 

Our approach departs from previous work, which attempts to resolve the
mismatch between per-thread syntax and collective execution by restricting
access to low-level operations.
Tile-based languages like Triton~\cite{triton},
Helion~\cite{PyTorch_Helion2025}, and
Tilus~\cite{ding2025tilustilelevelgpgpuprogramming}, limit the user's
perspective to the block level, which prevents them from writing the
highest-performance kernels. A variety of functional languages,
\cite{henriksen-futhark-2017,accelerate-grover,Elliott-Vertigo} provide
compile-time guarantees through their type systems but do not expose low-level
control over hardware. Some other efforts, like Descend
\cite{kopcke-descend-2024}, aim to provide a low-level, memory-safe GPU systems
programming language, but lack support for modern GPU features (like Tensor
Cores or asynchrony) entirely. As a result, despite its flaws, CUDA remains the de-facto standard for
writing high-performance kernels on modern GPUs \cite{deepseek}.

We introduce \name{}, a new low-level GPU programming language that enforces
that operations are only executed with the correct view of hardware resources, enabling
fearless composition.
Inspired by dependency calculi~\cite{abadi-core-1999}, \name{} statically tracks the configurations of compute
and memory resources with type-level \unit{}s. We also develop \calcname, a core
calculus underpinning \name{}, provide formal rules for manipulating the \unit{}s of
both code and data, and prove \emph{type-and-\unit{} safety}. This ensures that
\calcname is sound, and that code always has the right \unit{} to execute
operations at run time. A parallel goal, alongside safe composition, is performance. We
incorporate modern GPU features such as Tensor Cores and asynchronous data
movement into \name{}, and demonstrate that \name{} can achieve the same
performance as hand-written, highly optimized code on an H100 and a 4070 Ti
Super. Our contributions are:
\begin{enumerate}
  \item \name{}, a low-level GPU language that tracks the logical grouping of compute and memory resources with type-level \unit{}s,
  empowering users to write modular code (\Cref{sec:language});
  \item \calcname{}, a formal model of \name{} that tracks \unit{}s in its type system and operational semantics, along with a soundness theorem guaranteeing that programs execute operations only when they have been statically proven to possess the necessary \unit{} (\Cref{sec:core}); and
  \item An implementation of \name{} (Section \ref{sec:compilation}) that demonstrates that it can support modern 
  GPU features like Tensor Cores and asynchronous data movement, achieving performance comparable to hand-optimized CUDA implementations (\Cref{sec:eval}).
\end{enumerate}

Section \ref{sec:related-work} discusses related work, and 
Section \ref{sec:limitations} discusses the limitations of our approach and outlines future work.  

\vspace*{-1em}
\section{Background \& Motivation}
\label{sec:background-motiv}



Before diving into the design of \name, we begin with an overview of the GPU's
compute and memory hierarchies and outline the challenges posed by reasoning
about them collectively. We also discuss how \name{} can help solve these
problems.

\subsection{Compute Hierarchy}\label{sec:compute-hierarchy}

\begin{figure}
\centering
\begin{minipage}[t]{0.29\linewidth}
\centering
    \begin{cudaBox}
    int tid = threadIdx.x; 
    if (tid >= 0 && tid < 32){
      float A[4], B[2];
      float C[4] = { 0 };
      // Populate A with unique values.
      for (int i = 0; i < 4; i++) 
          A[i] = tid * 4 + 1;
      // Populate B with unique values.
      for (int i = 0; i < 2; i++) 
          B[i] = tid * 4 + 1;
      // Issue a warp-level Tensor Core
      // operation: D = A * B + C
      // (eliding some typecasts).
      asm("mma.sync.aligned.m16n8k8..." 
        "{
        "{
        "{
        "{
        : "=r"(C[0]),"=r"(C[1]), ... 
        :  "r"(A[0]), "r"(A[1]), ... 
           "r"(B[0]), "r"(B[1]),
           "r"(C[0]), "r"(C[1]), ...);}
    \end{cudaBox}
    \vspace{-2em}
    \caption{Warp-level Tensor Core instruction in CUDA.}
    \label{fig:cuda-simple-mma}
\end{minipage}
\hfill
\begin{minipage}[t]{0.28\linewidth}
\centering
\begin{ccudaBox}
tid : int @ thread[1] = id(); 
with group(thread[32]):
  A : float[4] @ thread[1]
  B : float[2] @ thread[1]
  C : float[4] @ thread[1]
    
  # Populate A with unique values
  for i in range(0, 4, 1): 
    A[i] = tid * 4 + 1
    C[i] = 0
  # Populate B with unique values
  for i in range(0, 2, 1): 
    B[i] = tid * 4 + 1

  # Issue Tensor Core op.
   intrinsic.mma(
     A[0], A[1], A[2], A[3],  
     B[0], B[1],
     C[0], C[1], C[2], C[3],
     out=[A[0], A[1], A[2], A[3]])

\end{ccudaBox}
\vspace{-2em}
\caption{Warp-level Tensor Core instruction in \name{}.}
\label{fig:ccuda-simple-mma}
\end{minipage}
\hfill
\begin{minipage}[t]{0.38\linewidth}
\centering
\vspace{-17em}
\includegraphics[width=0.8\linewidth]{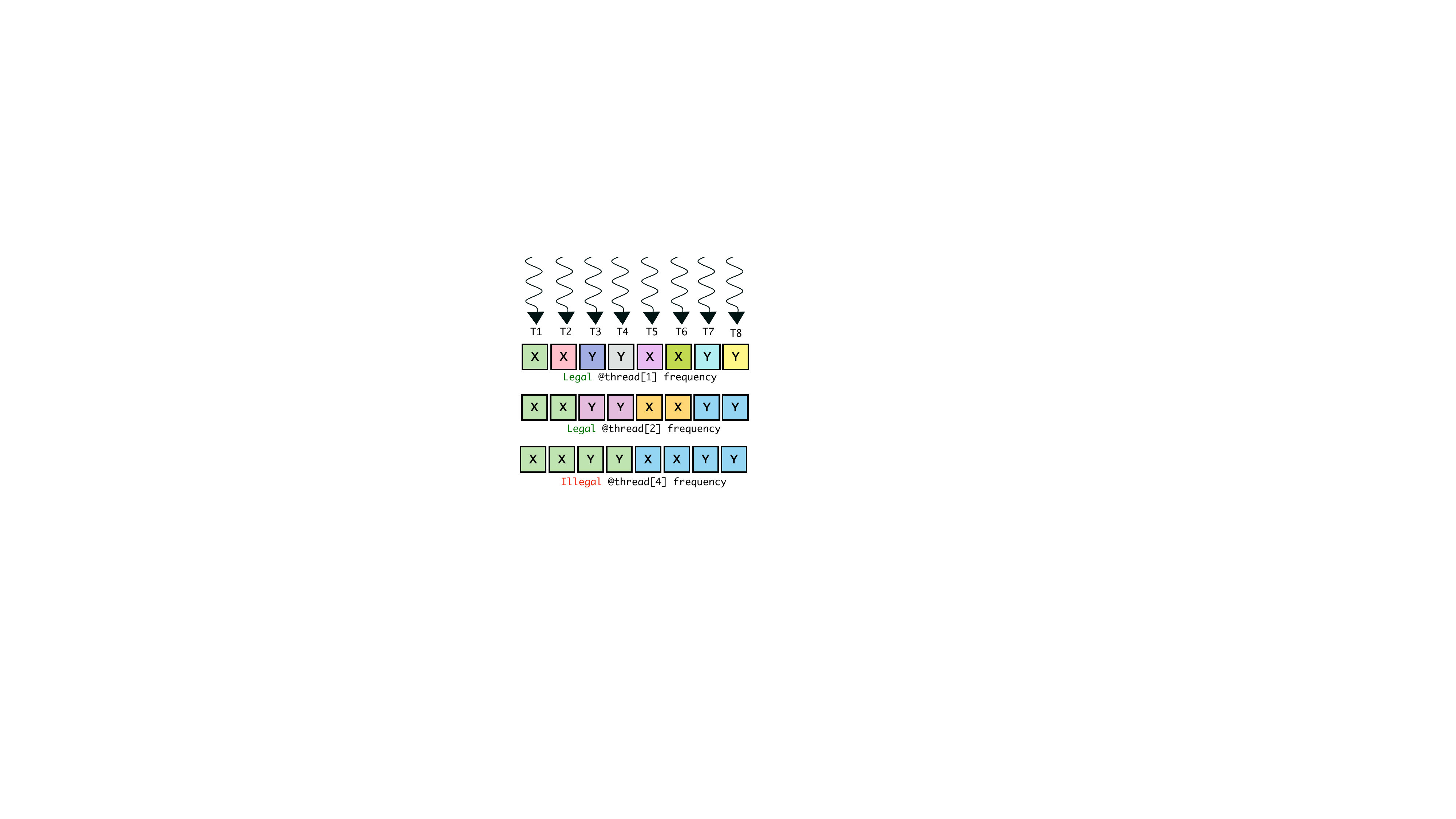}
\vspace{-1.35em}
\caption{The \code{@} syntax represents the frequency at which a variable varies
across threads \code{T1-T8}. Each \code{@} denotes a coloring of the variable;
threads that ``view'' the variable with the same color must observe the same
value.}
\label{fig:freq-color}
\end{minipage}
\vspace*{-1.5em}
\end{figure}

In CUDA, programmers launch computations that run on thousands of threads. These
threads are organized into a \textit{compute hierarchy} that defines how work is
distributed and scheduled on the GPU. At the top of this hierarchy is the
\textit{grid}, representing all threads launched as part of a single
computation. The grid is divided into \textit{blocks}, each containing a
user-specified number of threads. \textit{Threads} are the finest unit
of execution and the machine's basic unit of sequential control.  

Users describe a computation by writing a single program which
is replicated across all threads. This program controls the behavior of
individual threads by reading built-in identifiers like \code{threadIdx.x}---to
determine a thread's position within a block---and \code{blockIdx.x}---to
determine the block's position within the grid---at runtime. By making control
flow decisions based on these two variables, users can assign each thread to its
share of the full computation.

A natural and tempting way to interpret these built-in variables is to think of
them as the indices of implicit "parallel-for loops" surrounding the
program, where each iteration executes simultaneously. While this view  
is sufficient to understand CUDA's programming model when threads do independent
work, it quickly breaks down in the presence of collective operations.

Unlike most instructions, which are executed by a single thread, collective
operations must be executed collaboratively by a group of threads. For example,
on line \refline{14} of ~\Cref{fig:cuda-simple-mma}, we perform a
Tensor Core operation~\cite{NVIDIAPTXISAWarpLevelMatrix}, which is only
meaningful when invoked by a collection of 32 threads---a
\textit{warp}---acting together. For the call, each thread sets up its portion
of the operands, and the operation is performed \textit{once} for the entire
warp, with the results scattered across participating threads. In this example, the first 32
threads in a block are tasked with this operation; programmers mentally group
these 32 threads into a collective unit, and then reflect that grouping in the
program via the \code{if}-statement on line \refline{2}. If the condition on
line \refline{2} were instead \code{tid >= 0 \&\& tid < 30}, making fewer than
32 threads reach line \refline{14}, the result would be undefined. To make
matters worse, the restriction is not only on the {number} of threads executing
the operation, but also on their {alignment}. For threads to form a warp, the starting ID of
the group must be aligned to a multiple of 32. So,
if the condition on line \refline{2} were instead \code{tid >= 1 \&\& tid < 33},
the result would still be invalid, even though 32 threads would execute it.

Collective operations make reasoning about CUDA programs challenging because
they force programmers to track the \emph{convergence behavior} of threads.
Specifically, programmers must reason both about how many threads reach
each in the program and how those threads are arranged, accounting
for alignment. This difficulty is further amplified by two factors. First,
programs may have multiple points of convergence, requiring programmers
to mentally track the relative ID of a thread within a logical group as that
group evolves over the course of a program's execution. Second, threads may be
participating in multiple levels of convergence within the same program.
In our example, we only considered the warp-level Tensor Core operation, but
there are other collective operations that require convergence at different
granularities like \textit{warpgroup}-level Tensor Core operations---which
must be issued collectively by four warps---or the block-level
\code{\_\_syncthreads()} primitive---which must be executed by
all threads within a block. 

Reasoning about collective operations is already error prone within the context
of a single function, but becomes even more difficult when reasoning
across functions. A callee may assume a certain configuration
of threads and may structure its computation around that assumption. However, such assumptions are not visible in
the callee's function interface, which only exposes the input and output types.
To invoke the function correctly, users must read its documentation---or worse,
its implementation---to understand its assumptions, breaking modular
reasoning.

In \name{}, we materialize the programmer's mental grouping of 
the compute hierarchy explicitly in the program's source. Consider the example in
~\Cref{fig:ccuda-simple-mma}, which shows an equivalent rewrite of the CUDA
program in ~\Cref{fig:cuda-simple-mma} using \name{}. The call to the \code{mma} is valid only
because it is executed within a \code{group} of 32 threads, made explicit on
line \refline{2}. 
This intrinstic exposes its invariant via its function signature, which we will discuss in detail in ~\Cref{sec:collective}.
Since \name{}'s \code{group} construct enforces both the size
and alignment of participating threads, the validity of the \code{mma} operation
is guaranteed by construction. 

However, it is not sufficient to only consider the compute hierarchy
when reasoning about convergence; we must consider the memory hierarchy as well. Whenever threads see different data,
they can use it to induce divergent behavior. We've
already seen an example of this: the \code{tid} variable in
~\Cref{fig:cuda-simple-mma}. In general, the possibility of divergence
lurks anywhere threads branch on data.

\subsection{Data and the Memory Hierarchy}\label{sec:mem-hierarchy}

Data on the GPU is organized into a \textit{memory hierarchy}, mirroring the
compute hierarchy. All threads in a grid can read and write from \textit{global
memory}, where operands reside at the start of a computation and where
results are eventually written. Each block has access to a limited amount of
\textit{shared memory}, a programmer-managed scratchpad typically used to stage
repeatedly-used data. Finally, each thread maintains its own state in
\textit{registers} and \textit{local memory}. These are private to each thread,
while shared and global memory are accessible by multiple threads.

\paragraph{Thread-Local Data}

Registers and local memory are owned by individual threads, 
so variables with the same name can have different values on different threads at run
time. In general, CUDA offers no support for dealing with the divergence that arises
as a consequence.

A key challenge for \name{} is managing this divergence, ensuring that all
threads organized by other language constructs like \code{group} remain
logically unified, even in the face of data-dependent control flow. To do this,
\name{} tracks the \emph{frequency} with which values vary in space.  The
\code{@} syntax attached to each declaration denotes this frequency. For example,
\code{thread[1]} variables can vary for every thread, while \code{block[1]} variables can
vary for every block, but not for threads within that block. Intuitively, the
\code{@} construct ``colors'' a variable across threads, and any threads that
share a color are required to agree on that variable's value.
~\Cref{fig:freq-color} illustrates valid and invalid colorings of a variable.

Using this information, \name{} enforces rules for reading from and writing to
variables.  So, for example, if we introduce a condition based on
\code{tid} {inside} the \code{group}'s scope in
~\Cref{fig:ccuda-simple-mma}, \name{} will reject that program;
code that diverges at low frequency cannot read from variables that change at
higher frequency.  We explain the rules of \name{} programs
in detail in ~\Cref{sec:language}.

\begin{wrapfigure}{r}{0.35\textwidth}
  \vspace{-2em}
  \begin{cudaBox}
  int x = threadIdx.x; 
  if (x >= 0 && x< 32){
    float A[4], B[2];
    float C[4] = { 0 };
          
    // k is a_mem's stride.
    A[0] = a_mem[(x/4)*k+(x
    A[1] = a_mem[(x/4)*k+(x
    A[2] = a_mem[(x/4+8)*k+(x
    A[3] = a_mem[(x/4+8)*k+(x
          
    // n is b_mem's stride.
    B[0] = b_mem[(x
    B[1] = b_mem[(x
          
    asm("mma.sync...");
          
    // Write back into c_mem.
    // n is c_mem's stride.
    c_mem[(x/4)*n+2*(x
    c_mem[(x/4)*n+2*(x
    c_mem[(x/4+8)*n+2*(x
    c_mem[(x/4+8)*n+2*(x
  
    __syncwarp();}
  \end{cudaBox}
  \vspace{-2em}
  \caption{Invoking a warp-level Tensor Core instruction in CUDA after loading data from memory.}
  \label{fig:cuda-simple-mma-mem}
    \vspace{-0.5em}
  \begin{ccudaBox}
  @prism("device")
  @requires(thread[32])
  def simple_mma(
      a: ptr(const(float)) @ thread[32], 
      b: ptr(const(float)) @ thread[32],
      c: ptr(float) @ thread[32]):
    x : int @ thread[1] = id()
    with group(thread[32]):
      A : float[4] @ thread[1]
      B : float[2] @ thread[1]
      C : float[4] @ thread[1]
      # Reads do not need to be lowered. 
      A[0] = a_mem[(x/4)*k+(x
      A[1] = a_mem[(x/4)*k+(x
      A[2] = a_mem[(x/4+8)*k+(x
      A[3] = a_mem[(x/4+8)*k+(x
      B[0] = b_mem[(x
      B[1] = b_mem[(x
      # Skipping initialize C to 0... 
      intrinsic.mma(
        A[0], A[1], A[2], A[3],  
        B[0], B[1],
        C[0], C[1], C[2], C[3],
        out=[C[0], C[1], C[2], C[3]])
      # Must be at thread[1] to write.
      idx = \
        lambda ro, co: (x/4+ro)*n+2*(x
      with partition(c_mem,p=thread[32],f=idx) as c_thrd:
        c_thrd[0, 0] = C[0]
        c_thrd[0, 1] = C[1]
        c_thrd[8, 0] = C[2]
        c_thrd[8, 1)] = C[3]
      # --- Sync point ---
  \end{ccudaBox}
  \vspace{-2em}
  \caption{Invoking a warp-level Tensor Core instruction in \name{} after loading data from memory.}
  \label{fig:ccuda-simple-mma-mem}
  \vspace{-1em}
\end{wrapfigure}

\paragraph{Thread-Shared Data}

Shared and global memory, on the other hand, is not replicated per-thread;
instead, all threads in a block have the same view into shared memory, while
all threads on the grid have the same view into global memory.

CUDA does not explicitly model shared and global memory spaces, nor does it
track whether allocations in a given memory space exceed device limits. \name{},
on the other hand, distinguishes these spaces and restricts shared memory to
static allocations, throwing an error if an allocation exceeds device limits. We
will discuss these aspects in detail in ~\Cref{sec:data-unit}. For now, however, we
focus on how views of memory converge and diverge in tandem with the compute hierarchy.

As an example, when launching a kernel on the GPU, a global memory pointer
initially belongs to the grid because every thread sees the same pointer value at
the start of the computation. To write to that memory, each thread computes an
offset from the original base address. In doing so, the pointer effectively
diverges across threads so each
can write independently. After these writes, the view on the memory must
reconverge, ensuring that all threads have completed their updates
before it can return to its original logical owner.

Let us reconsider the Tensor Core example from ~\Cref{fig:cuda-simple-mma}, this
time initializing the operands of the Tensor Core operation from pointers into
global memory. In ~\Cref{fig:cuda-simple-mma-mem}, \code{A} and \code{B} are now
populated from \code{a\_mem} and \code{b\_mem}. After the
Tensor Core operation completes, the result is written back to \code{c\_mem},
also in global memory. This variable \code{c\_mem} is logically accessed from two different levels
of the compute hierarchy. At the beginning, each thread in a group of 32
sees the same value for \code{c\_mem}. Then, they
each locate an offset within \code{c\_mem} that they write to (lines
\refline{20-23}). To restore \code{c\_mem} back to its 32 thread-level
ownership, all 32 threads must synchronize (line \refline{25}) to ensure all per-thread writes
have completed.  This is similar to the requirement we saw documented in
CUB in ~\Cref{sec:intro}.

Similarly to the compute hierarchy, CUDA
programmers are responsible for tracking the logical owner of a view of memory
as it evolves over the course of a program, and for ensuring that appropriate
synchronization occurs.

\name{}, by contrast, makes the evolution of memory's view
explicit in the syntax and automatically inserts the synchronization
required to restore that view to its original owner. In
~\Cref{fig:ccuda-simple-mma-mem}, we show how memory is lowered through the
compute hierarchy for the same Tensor Core operation from
~\Cref{fig:cuda-simple-mma-mem}. The lowering of \code{c\_mem} is required in
this program as \name{} only permits writes from the view of a single
thread. To lower \code{c\_mem}, we use \name{}'s \code{partition} operation on
line \refline{28}. The operation takes a pointer to partition,
\code{c\_mem}, and lowers it to a single thread, assigning it a new name,
\code{c\_thrd}; \name{} will not allow use of the old variable \code{c\_mem}
within the \code{partition}'s scope.  The \code{partition} also takes an
indexing function, and each time \code{c\_thrd} is accessed, this
function is implicitly applied. Once the \code{partition}’s scope ends, \name{}
inserts a synchronization barrier before the next use of the original variable, so that
all per-thread writes have completed. In this way, at the end of the
\code{partition}, the original variable represents a convergent
view of the data once again. 

\begin{figure}
\begin{minipage}[t]{\linewidth}
  \begin{minipage}{0.49\linewidth}
    \input{figs/3-mm-1}
  \end{minipage}
  \hfill
  \begin{minipage}{0.47\linewidth}
     \input{figs/3-mm-2}
  \end{minipage}
  \vspace{-1em}
  \caption{Naive tensor float 32 matrix multiplication in \name{} (full program can be found in Appendix ~\ref{sec:tf32-func}).}
  \vspace{-1em}
  \label{fig:global-mm}
\end{minipage}
\vspace{-0.5em}
\end{figure}

\enlargethispage{\baselineskip}

\section{The \name{} Language}\label{sec:language}

\name{} is an imperative, low-level language designed at a level of
abstraction comparable to that of CUDA. Unlike CUDA, however,
\name{}'s syntax materializes the mapping of computations onto the compute and
memory hierarchy explicitly in the program source. Using this information,
\name{} enforces that programs only execute collective operations with
sufficient resources.



To guide our discussion about \name{}'s design, we use the program in
~\Cref{fig:global-mm} as a running example. It computes a matrix multiplication
between two float arrays, \code{A} and \code{B}, to produce an output matrix
\code{C}. In this program, each block computes an independent $16 \times 8 $
tile of the output. To do so, blocks first locate the tile index assigned to them
(line \refline{19-20}). Next, threads in each block load corresponding rows and
columns from \code{A} and \code{B} (line \refline{41-53}) into shared memory.
Finally, the program invokes a warp-level, Tensor Core instruction to compute
the output (line \refline{62}), requiring threads in a warp to
converge. We encapsulate this Tensor Core instruction in a
function, demonstrating how function composition works in \name{}. This function
is the same as the one in ~\Cref{fig:ccuda-simple-mma-mem}.

\subsection{Levels}

\name{} models the machine's compute hierarchy through \textit{levels}. There are three levels in \name{}---\code{grid}, \code{block} and \code{thread}---which are organized as
expected: a \code{grid} consists of multiple \code{block}s,
each of which consist of multiple \code{thread}s. Levels are ordered, with \code{thread} $<$ \code{block} $<$ \code{grid}.

There are two key differences between \name{}'s levels and those in CUDA. First, 
\name{} does not model CUDA's three-dimensional grid or block
structure. Second, there are two commonly-used ``levels'', warp and warpgroup, absent from our
hierarchy. On the hardware, the units of each level are arranged in a linear order, and the three-dimensional structures of CUDA are simply interpretations of this
ordering, not distinct hardware resources. Similarly, warps and warpgroups are
organizational constructs defined in terms of existing levels. Namely, a warp is a
group of 32 threads whose first thread ID is aligned to 32. A warpgroup, which
was introduced with the release of the Hopper architecture~\cite{hopper},
consists of 4 consecutive warps.


Rather than baking these interpretations into \name{} by adding new dimensions
and levels, we let users express multi-dimensional structures and define
groupings of custom sizes.

\vspace*{-0.5em}
\subsection{\Unit{}s}\label{sec:comp-scope}

\Unit{}s are the central concept of \name{}, representing the view of the
hierarchy from which a given statement is defining the machine's behavior.
Perspectives allow \name{} to determine which compute resources the programmer
is controlling at every point in the program, whether they are available in the program's context, and if
those resources are sufficient for a given operation.

A \unit{} is a level---\code{grid}, \code{block}, or
\code{thread}---paired with a static constant \code{n}, specifying the
number of units at that level. For example, \code{thread[2]} denotes a \unit{} of
two threads, \code{block[4]} denotes a \unit{} of four blocks, and so on.
\Unit{}s also carry \textit{alignment information}: a \unit{} of size \code{n} is aligned to
\code{n}. In this way, a warp is simply a
desugaring of \code{thread[32]}, and a warpgroup is a desugaring of
\code{thread[128]}.

Finally, \unit{}s are partially ordered. We say that
\code{level}\textsubscript{2}[\code{n}\textsubscript{2}] is \textit{broader} than \code{level}\textsubscript{1}[\code{n}\textsubscript{1}], or that 
\code{level}\textsubscript{1}[\code{n}\textsubscript{1}] is \textit{narrower} than  \code{level}\textsubscript{2}[\code{n}\textsubscript{2}],   
if either:

\begin{enumerate}
    \item \code{level}\textsubscript{1} $<$ \code{level}\textsubscript{2}; or,
    \item \code{level}\textsubscript{1} $=$ \code{level}\textsubscript{2} and \code{n}\textsubscript{1} divides \code{n}\textsubscript{2}.
\end{enumerate}

\vspace*{-0.5em}
\subsection{\Unit{}s on Code} 
\label{sec:resource-sets-for-code}

Code is associated with a set of \unit{}s, called a \textit{\unit{} bound},
which corresponds to the compute resources whose behavior it defines. At every point in the
program, the \unit{} bound for that point indicates which layer of the hierarchy
is being programmed, and how that layer can be destructed into narrower
\unit{}s.  For example, a line of code with a \code{\{block[1], thread[4]\}}
\unit{} bound tells \name{} that the current line of code is being programmed at
the  \code{block[1]} \unit{} and that the \code{block[1]} \unit{} can be
destructed into a some number of \code{thread[4]} \unit{}s. As a
shorthand, when we refer to a code's \unit{}, we mean the broadest \unit{}
available in its \unit{} bound (in this example, \code{block[1]}).

Functions begin with a top-level \unit{} bound.
In ~\Cref{fig:global-mm}, the \unit{} bound is defined on line \refline{2},
using the notation
\code{@requires(grid[n\textsubscript{1}],block[n\textsubscript{2}],thread[n\textsubscript{3}])}.
Programmers can then shape the program's
current \unit{} using two constructs: \code{group} and \code{split}.

\paragraph{\textbf{Group}}
The \code{group} construct lets programmers shift from a broader \unit{} to some
number of narrower \unit{}s contained in it. Operationally, the \code{group}
construct does this by replicating code written from the
narrower \unit{} across the broader one. In effect, \code{group} forks
many copies of a narrower \unit{}.

\begin{wrapfigure}{r}{0.30\textwidth}
  \vspace{-3em}
  \begin{ccudaBox}
# Example 1
with group(thread[2]):
    # Illegal because block > thread.
    with group(block[1]):
        pass
# Example 2
with group(block[6]):
    # Illegal because 6 
    with group([block[5]]):
        pass
  \end{ccudaBox}
   \vspace{-2em}
  \caption{Illegal uses of \code{group}.}
  \label{fig:illegal-groups}
 
   \vspace{-3em}
\end{wrapfigure}

Let's consider this in context of our example. In ~\Cref{fig:global-mm},
execution begins at \code{grid[1]} on line \refline{12}. At that point, a
programmer controls the whole grid's behavior. To produce different output
tiles, the programmer shifts their \unit{} to \code{block[1]} on line
\refline{25}. The code within the \code{group(block[1])} defines the
behavior of a single block, and is replicated across all blocks in the grid.

Not all uses of \code{group} are valid:
the examples in ~\Cref{fig:illegal-groups} have no meaning on
the hardware. On line \refline{4}, the program tries to broaden its \unit{} to
\code{block[1]} from \code{thread[2]}, which is illegal. On
the other hand, in the second example, the program tries to narrow its \unit{}
from \code{block[6]} to \code{block[5]}. While 5 is indeed less than 6, \name{}
cannot evenly replicate \code{block[5]} across a \code{block[6]}
\unit{}, and so rejects this program. Recall, from ~\Cref{sec:comp-scope},
that whether one \unit{} is narrower than another is dependent on divisibility, not just size.


To eliminate such cases, \name{} only allows an invocation of
\code{group(level[n])} if the current \unit{} bound contains a \unit{} broader
than \code{level[n]}.  Once \code{group(level[n])} is invoked, it modifies the
current \unit{} bound in two ways. First, it removes all \unit{}s broader
than \code{level[n]} from it.  Second, it sets the broadest \unit{} within
the \code{group} to be \code{level[n]}.



\begin{wrapfigure}{r}{0.25\textwidth}
  \vspace{-2em}
  \begin{ccudaBox}
with group(thread[4]):
  match split(thread):
    case 2:
      ...
    case 1:
      ...
    case 1:
      ...
 \end{ccudaBox}
   \vspace{-2em}
  \caption{Example \code{split}.}
  \label{fig:small-split-example}
  \vspace{-0.5em}
\begin{ccudaBox}
with group(thread[4]):
 match split(thread):
    case 4:
      ...
    case 1:
      ...
\end{ccudaBox}
\vspace{-2em}
\caption{Illegal split exceeds the available \unit{}s.}
\label{fig:illegal-sum-split}
\vspace{-0.5em}
\begin{ccudaBox}
with group(thread[3]):
  match split(thread):
    case 1:
      ...
    case 2:
      ...
\end{ccudaBox}
\vspace{-2em}
\caption{Illegal split violates alignment.}
\label{fig:illegal-align-split}
\vspace{-1em}
\end{wrapfigure}

\vspace*{-0.25em}

\paragraph{\textbf{Split}} Unlike \code{group}, which is used for replication
into equally-sized, narrower \unit{}s, \code{split} is used for sharding
the current \unit{} unequally. For example, 
~\Cref{fig:small-split-example} shows a \code{split} from \code{thread[4]} into
one branch with \code{thread[2]} \unit{} and two with \code{thread[1]}. The three
arms of the \code{split} execute independently in parallel,\footnote{Despite the use of the
\code{match} syntax, \emph{all} branches of the \code{split} execute.} as a form
of unordered composition.
When \code{split(level)} is invoked, the \unit{}s of
its branches diverge. At the end of the \code{split}, they reconverge and continue execution
with the original \unit{}.

Use of \code{split} is necessary to write warp-specialized~\cite{singe} code, a
programming pattern used in high-performance kernels.  Another important use of
\code{split}---masking off threads---can be found in our running example.  Line
\refline{58} in ~\Cref{fig:global-mm} shows a \code{split} that requests the
first warp in the block, narrowing from \code{block[1]} into a single
\code{thread[32]}. This warp is later used
to execute a Tensor Core operation.

Because \code{split} corresponds to unordered composition,
it must provide each of its branches their requested \unit{}s
simultaneously.
\name{} thus checks that the sum
of the \unit{}s requested by all branches of the \code{split} can be satisfied.
For example, the program in ~\Cref{fig:illegal-sum-split} does not type check.
Finally, because \unit{}s enforce alignment, every branch of the \code{split}
must also be aligned; not all \code{split}s whose sizes are at most 
the available units are valid. ~\Cref{fig:illegal-align-split} shows an
example violating this constraint: the second branch
of the split is not aligned to 2.

Once \code{split} is invoked, for each branch that requests \code{n} units,
all \unit{}s broader than \code{level[n]} are removed from its \unit{}
bound, and the available units for \code{level} are set to \code{n}.
\vspace*{-0.25em}

\subsection{\Unit{}s on Data}
\label{sec:data-unit}

\Cref{sec:comp-scope} described how programmers can control different layers of
the hierarchy by changing their \unit{}s on code through \code{group} and
\code{split}. To ensure these operations remain meaningful, \name{}
must ensure that threads inside a \unit{} remain logically grouped,
even when they encounter \code{for},
\code{while}, and \code{if} statements.
As we saw in \Cref{sec:mem-hierarchy}, making this guarantee
requires \name{} to track how data varies across threads.



\subsubsection{Thread-Local Data}
In \name{}, each local variable has a \unit{},
which indicates the frequency at which its values change
in space. This frequency remains constant for the duration of a program, and it tells
\name{} that a \code{level[n]} variable is always indistinguishable to threads
within that \unit{}. For example, \code{blk\_row @ block[1]} on line \refline{19} in \Cref{fig:global-mm}
has the same value across all threads in a block.

Programmers specify the \unit{} that each variable lives at in its declaration.
A  variable \code{v} of type \code{int} is declared at \code{thread[1]} \unit{}
using the syntax \code{v : int @ thread[1]}.\footnote{If not explicitly
annotated, \name{} infers a variable's \unit{} to be the \unit{} of the code
where it was declared.} To enforce this frequency invariant, \name{}
restricts reads from and writes to variables based on their \unit{}s.
The rule can be summarized as follows: \name{} allows programs to ``read up''
from broader \unit{}s and ``write down'' to narrower ones.

\begin{wrapfigure}{r}{0.30\textwidth}
  \vspace{-2em}
\begin{ccudaBox}
flag : bool @ thread[1] = ...
with group(block[1]):
  if (flag)
    __syncthreads();
\end{ccudaBox}
\vspace{-2em}
\caption{Illegal read of \code{thread[1]} variable.}
\label{fig:illegal-read}
\begin{ccudaBox}
x : bool @ thread[1] = ...
y : bool @ block[1] = ...
with group(block[1]):
  y = x
  if(y):
    __syncthreads();
\end{ccudaBox}
\vspace{-2em}
\caption{Illegal write of \code{thread[1]} variable into a \code{block[1]} variable.}
\label{fig:illegal-write}
\vspace{-2em}
\end{wrapfigure}

\paragraph{\textbf{Read Up}} 
Variables can only be read if their \unit{} is at least as broad as the current
code \unit{}.  ~\Cref{fig:illegal-read} gives an example of an illegal read that
would violate this constraint.
While \code{\_\_syncthreads()} should always be safe in
\code{block[1]}, branching on
the variable \code{flag}---which may take different values across threads in a block---can cause only some of
those threads to reach the \code{\_\_syncthreads()}, violating its collective invariant.

\paragraph{\textbf{Write Down}}
Writes are dually constrained. Only values that live at broader \unit{}s can be
written into variables that live at narrower ones.
For example, a \code{block[1]}
variable can be used to write into a \code{thread[1]} variable, but
not vice versa.
An example of an illegal write is shown in Figure~\ref{fig:illegal-write},
where writing from a \code{thread[1]} variable into a \code{block[1]} one
would cause a deadlock.


Together, the ``read up'' and ``write down'' rules ensure that information only
flows from broader \unit{}s to narrower ones.  In ~\Cref{fig:global-mm},
we can see the ``read up'' rule in action on lines \refline{42} and \refline{43}. Meanwhile,
lines \refline{19} and \refline{20} are instances of the ``write down'' rule.

\subsubsection{Thread-Shared Data}
Unlike thread-local data, which is literally replicated across threads and is
backed by distinct physical storage, thread-shared data consists of pointers
into shared and global memory that are visible to some collection of threads. As
a result, the \unit{} such data inhabits can evolve as the program executes.

In \name{}, there are three mechanisms for obtaining thread-shared data. The
first is to directly allocate data residing in shared memory.\footnote{\name{}
assumes that all global memory allocations have been made before launching the
CUDA kernel, as is typically the case with CUDA programs.} The second and third
are to obtain offsets into existing data by using the \code{\partition{}} or
\code{\ownpartition{}} operations. These constructs mirror \code{group} and
\code{split}, and are used to temporarily narrow the \unit{} associated with a
pointer.

\name{} simplifies shared memory allocations by requiring them to have a static size.
The \code{@requires}
annotation specifies the amount of shared memory a function expects to have.
\name{} uses this information to ensure that a function's
allocations do not exceed its declared limit, and checks whether there is enough
shared memory available at its call sites. The function in ~\Cref{fig:global-mm} declares the
amount of shared memory it will require on \refline{2}. We use standard
techniques~\cite{hoffmann2011types} to statically bound memory usage with \name's type system.

Additionally, since shared memory is only visible to
threads within the same block, allocations to it are only permitted at
\code{block[1]}. An example of such an allocation is shown on lines
\refline{28-30} of ~\Cref{fig:global-mm}.
During compilation, \name{} will
automatically handle the necessary pointer arithmetic to assign each allocation
an appropriate offset within the shared memory space.


\paragraph{\textbf{Partition}} The \code{\partition} operation plays the same
role for memory that \code{group} plays for code. It is used to refine the
\unit{} of a pointer by computing offsets for each narrowed \unit{}.
Concretely, the \code{partition} operation takes a pointer variable \code{x}, an
indexing function \code{f}, and produces a new variable \code{y} at
a narrower \unit{} \code{level[n]} to be used inside the \code{partition}.
After the partition ends, the threads participating in the \code{\partition}
must synchronize to restore the original pointer \code{x} to its \unit{}.\footnote{
Strictly speaking, this synchronization is only required
at the next point the memory is to be \emph{re-used}. \name{} has a mechanism
for optimizing the placement of these synchronization points, which is describe in
~\Cref{sec:compilation}.
}


Inside the scope of the partition, the original variable \code{x} cannot be used,
and the pointer can only be accessed using \code{y}. Each use of \code{y}
applies the indexing function \code{f} to compute the true offset of the access.
For example, the use of \code{\partition} on line \refline{34}
in~\Cref{fig:global-mm} takes a pointer \code{C\_smem} at \code{block[1]}, gives
it a new name \code{C\_th}, and distributes it across \code{thread[1]}
perspectives by transforming each occurrence of \code{C\_th[i]} in the body into
\code{C\_smem[i*4]}.


At this point, it is necessary to consider how different indexing functions
affect the possibility of data races. If the indexing function is injective,
each narrower \unit{} receives a distinct offset into the underlying
array, and the resulting partition is free of data races. When the indexing
function is not injective, multiple threads may race on the same location,
introducing a potential data race. 

Preventing data races is not one of \name{}'s
goals. In \name{}, data races \emph{can} occur within a \code{partition};
however, since the data is eventually synchronized before it is reused, the
last writer wins. 
Out-of-bounds accesses
are considered undefined behavior. 
Prior work, in particular Descend
\cite{kopcke-descend-2024}, describes a type system for restricting data-races
and out-of-bounds accesses, and we believe a similar approach can be combined with \name{}'s.
\name{} instead focuses on the interaction between the compute and
memory hierarchies and on reasoning about them simultaneously to ensure that
operations are executed only with sufficient compute resources, a
guarantee that Descend cannot provide.




\paragraph{\textbf{Claim}} The \code{claim} operation lets programmers
narrow a pointer's \unit{} by giving it to only
one collection of threads with that narrower \unit{}.
For example, line \refline{57} of ~\Cref{fig:global-mm}
shows an example of a \code{claim}, where
the pointer \code{C\_smem @ block[1]} is only available to a single warp
in the block and narrowed to \code{Cs\_warp @ thread[32]} to call a Tensor Core operation.

A \code{claim} takes the original pointer
\code{x}, the target \unit{} to narrow it to, and a new name \code{y} to assign to
the narrowed memory. In all \code{split}s within the \code{claim}, the new
pointer \code{y} is accessible only within one branch; sibling
branches are not permitted to read or write from this memory. 
As with \code{partition}, the original variable \code{x} is
not accessible within the \code{claim}. 

\subsection{The \code{id()} Function}

As opposed to exposing users to special hardware variables like
\code{blockIdx.x}, and \code{threadIdx.x}, \name{} provides an \code{id()}
function instead.  The \code{id} function returns the relative index of a narrow
\unit{} within a broader \unit{}. That is, the interpretation of the \code{id}
function depends on both the \unit{} of the variable it is being written to and
the \unit{} of the code invoking it. In~\Cref{fig:global-mm}, we use a call to
\code{id} with \code{grid[1]} \unit{} on lines \refline{19} and \refline{20} to
locate the tile that each block is in charge of computing. On line \refline{32},
however, a write of \code{id} into a \code{thread[1]} variable at
\code{block[1]} \unit{} will return the relative ID of the thread within the
block, not in the grid.

\subsection{Collective Operations}\label{sec:collective}
There are two types of collective operations in \name{}.

\subsubsection*{Function Calls}

Each function carries a \emph{\unit{} signature}, which consists of its
top-level \unit{} bound, shared memory usage, and its arguments' \unit{}s.
An example of such a signature can be seen in lines \refline{3-9} in \Cref{fig:global-mm}.
At call sites, \name{} ensures that the callee's \unit{} signature can be satisfied
by the caller.  Since the \unit{} bound describes a minimum requirement,
functions can be called with a broader code \unit{} than necessary, accounting for alignment. 
When checking function arguments at call sites, \name{} distinguishes between primitive
data types and pointers. For primitive data types, like \code{int} or
\code{bool}, arguments can have broader \unit{}s than specified in the function
signature.  On the other hand, pointers that live at broader \unit{}s can only be passed
if the pointer is marked as \code{const}, indicating that it will only be used
for reading. Otherwise, we require that the pointer's \unit{} exactly match
that of the function's \unit{} signature.

\subsubsection*{Intrinsics}
The compiler provides a pre-defined set of collective intrinsics---like the Tensor Core
instruction discussed in~\Cref{sec:compute-hierarchy}---each of which declares
its \unit{} signature. Programmers may also add to this set using \code{unsafe}. From within
\code{unsafe} code, programmers can inline assembly instructions and wrap them
in a \name{} function, specifying its \unit{} signature.  \name{} checks these
call sites like those of any other function.

\vspace*{-0.5em}

\subsection{Asynchrony}

Asynchronous data movement works similarly to other memory operations.  Users
can mark storage as asynchronous with the \code{async} construct. As with
\code{partition} and \code{claim}, \code{async} hides the old variable and
exposes a fresh one that is only accessible within the \code{async} statement.
Inside, \name{} ensures the new variable is only used by async data-movement
intrinsics.

\begin{wrapfigure}{r}{0.35\textwidth}
  \vspace{-2em}
  \begin{ccudaBox}
  with async(old_name) as new_name:
      match split(thread):
        case 1:
          cp_async(smem, new_name, 16);
  \end{ccudaBox}
  \vspace{-2em}
  \caption{Example of an \code{cp.async} instruction in \name{}.}
  \label{fig:async-movement}
  \vspace{-1em}
  \end{wrapfigure}

\name{} includes two such intrinsics: bulk \cite{bulk-cite} and non-bulk \cite{non-bulk-cite} asynchronous data-movement instructions. We show an example of the former in ~\Cref{fig:async-movement}. As with
the data operations in \Cref{sec:data-unit}, \name{} inserts the necessary
synchronization before the program's next use of the original variable, ensuring
that the asynchronous transfer has completed.

\newcommand{\defas}{\mathrel{::=}}
\newcommand{\alt}{\:|\:}
\newcommand{\size}[1]{\text{size}(#1)}

\newcommand{\hier}{\textit{h}}
\newcommand{\capa}{\pi}
\newcommand{\mem}{m}
\newcommand{\thread}{\code{Thread}\xspace}
\newcommand{\block}{\code{Block}\xspace}
\newcommand{\grid}{\code{Grid}\xspace}
\newcommand{\local}{\code{Local}\xspace}
\newcommand{\shared}{\code{Shared}\xspace}
\newcommand{\globalkind}{\code{Global}\xspace}
\newcommand{\threads}{T\xspace}
\newcommand{\blocks}{B\xspace}
\newcommand{\fun}[3]{\textbf{Fun}(#1, #2, #3)}
\newcommand{\splitcom}{\code{split}\xspace}
\newcommand{\destructcom}{\code{destruct}\xspace}
\newcommand{\replicom}{\code{group}\xspace}
\newcommand{\down}[1]{\code{lower}_{#1}\xspace}
\newcommand{\partitioncmd}[5]{\code{partition}_{#1}\code{ }#2\code{ into }#3\code{ by }#4\code{ in }#5}
\newcommand{\asyncp}[4]{\code{async\_partition}_{#1}\code{ }#2\code{ into }#3\code{ in }#4}
\newcommand{\asyncmemcpy}[2]{\code{async\_mempcy}(#1, #2)}
\newcommand{\memcpy}[2]{\code{memcpy}(#1, #2)}
\newcommand{\claim}[5]{\code{claim}_{#1}\code{ }#2\code{ into }#3\code{ at }#4\code{ in }#5}
\newcommand{\stepsto}{\leadsto}
\newcommand{\async}[1]{\code{async }#1}
\newcommand{\skipcom}{\code{skip}\xspace}
\newcommand{\alignto}{\textbf{ align to }}

\newcommand{\update}{\textbf{update}\xspace}
\newcommand{\get}{\textbf{get}\xspace}
\newcommand{\rename}{\textbf{rename}\xspace}

\vspace*{-0.5em}

\section{Formalization in \calcname}\label{sec:core}

Having introduced the full \name language, we now describe \calcname, a core
calculus that formalizes its most fundamental aspects by statically tracking \unit{}s on code and data. 
We use \calcname to argue that well-typed \name programs are not only type-safe, but will also
never execute operations for which they lack the correct \unit{}.

In this section, we describe 
\calcname's type system and operational semantics---in
particular how it manages compute and data \unit{}s---and 
build up to a formal proof of type-and-\unit{} safety.  
We instrument \calcname's operational semantics with runtime \unit{} enforcement:
its rules will get stuck if they encounter an operation for which they have the wrong perspective.  
This runtime enforcement means that our safety theorem guarantees that dynamically-realized 
\unit{}s match the ones inferred by the type system. 

\vspace*{-0.5em}

\enlargethispage{\baselineskip}

\subsection{\calcname Type System}

The core idea in \calcname is to track, at the type level, the program's \unit{} on code and data.
To achieve this, we borrow techniques from the literature on dependency tracking \cite{abadi-core-1999}.
In particular, the code \unit{}
is tracked on the typing judgment, 
which has the form $\Gamma \vdash^{\capa} e : \tau$ for expressions and 
$\Gamma \vdash^{\capa} s$ for statements.
The $\capa$ over the $\vdash$ is the code \unit{} on $e$ and $s$---and is comprised of a level and a size---the same structure as a \unit{} in \name{}.

The typing context also tracks the \unit{} at which each variable lives;
data can only be read from or written to a variable 
when its \unit{} is compatible with that of the code interacting with it.
This requirement is made manifest in the \rref{T-Var} rule,
found in ~\Cref{fig:calculus-types}.
Observe that the $\capa$ in the variable rule must match 
exactly between data and code; principles like ``read up'' and ``write down''
are instead encoded directly in the rules for reading and writing, like \rref{T-Arr-Access}, 
which views the array being read with broader \unit{} than the current code \unit{}.

~\Cref{fig:calculus-types} also shows other key rules, which fall into two main
categories: those for managing \unit{}s on code and those for
\unit{}s on data.

\begin{figure}
  \small

  \begin{center}
    \begin{tabularx}{1\textwidth}{
        >{\raggedright\arraybackslash}X
        >{\raggedleft\arraybackslash}X
      }
      \fbox{$\Gamma \vdash^{\capa} e : \tau$} & \textit{(Expression typing)}
    \end{tabularx}
  \end{center}
  $$
  \infer[\rlabel*{T-Var}]
  {\Gamma \vdash^{\capa} x : \tau}
  {x :^{\capa} \tau \in \Gamma}\qquad
  \infer[\rlabel*{T-Arr-Access}]
  {\Gamma \vdash^{\capa} e_1[e_2] : \tau}
  {\Gamma \vdash^{\capa'} e_1 : \tau[]^l & \Gamma \vdash^{\capa} e_2 : \code{int} & \capa \leq \capa'
  }
  $$
  \begin{center}
    \begin{tabularx}{1\textwidth}{
        >{\raggedright\arraybackslash}X
        >{\raggedleft\arraybackslash}X
      }
      \fbox{$\Gamma \vdash^{\capa} s$} & \textit{(Statement typing)}
    \end{tabularx}
  \end{center}
  $$
  \infer[\rlabel*{T-Split}]
  {\Gamma \vdash^{(\hier, n)} \splitcom (n_1, n_2) \{ s_1 \} \{ s_2 \}}
  {\Gamma \vdash^{(\hier, n_1)} s_1 & \Gamma \vdash^{(\hier, n_2)} s_2 &
  n_1, n_2 \alignto n}
  $$
  \text{where }
  $n_1, n_2 \alignto n \defas (n_1 + n_2 \leq n) \textit{ and }
  (n_1 \alt n) \textit{ and } (n_2 \alt n) \textit{ and } (n_2 \alt n_1 + n)$
  \vspace{1.5ex}
  $$
  \infer[\rlabel*{T-Group}]
  {\Gamma  \vdash^{q \cdot \capa} \replicom \code{ } q \code{ }s}
  {\Gamma  \vdash^{\capa} s}
  \qquad
  \infer[\rlabel*{T-Lower}]
  {\Gamma, x :^{\capa} \tau[]^l \vdash^{\capa} \down{} \code{ }x \code{ into  } y \code{ in } s}
  {\Gamma, y :^{\downarrow \capa} \tau[]^l \vdash^{\capa} s & l \neq \local}
  \vspace{2ex}
  $$
  $$
  \infer[\rlabel*{T-Partition}]
  {\Gamma, x :^{(\hier, n)} \tau[]^l \vdash^{(\hier, n)} \partitioncmd{}{x}{y}{c}{s}}
  {\Gamma, y :^{(\hier, n/c)} \tau[]^l \vdash^{(\hier, n)} s & c \alt n & l \neq \local}
  \qquad
  \infer[\rlabel*{T-Destruct}]
  {\Gamma \vdash^{\capa} \destructcom \code{ in } s}
  {\Gamma \vdash^{\downarrow \capa} s} 
  \vspace{2ex}
  $$
  $$
  \infer[\rlabel*{T-Claim}]
  {\Gamma, x :^{(\hier, n)} \tau[]^l \vdash^{(\hier, n)} \claim{}{x}{y}{n'}{s}}
  {\Gamma, y :^{(\hier, n')} \tau[]^l \vdash^{(\hier, n')} s & n' \leq n & l \neq \local}
  $$

  \normalsize
  \caption{Core typing rules of \calcname. The typing rules presented here are a simplified selection 
  of the full rules, which can be found in Appendix \ref{appendix:proofs-types}.}
  \label{fig:calculus-types}
  \vspace*{-1em}
\end{figure}

\subsubsection{Managing \Unit{}s on Code}


\calcname's \replicom statement directly corresponds to \name's, 
and is checked by the \rref{T-Group} rule. Given some 
statement $s$ that checks with \unit{} $\capa$,
the statement $\replicom\;\;q\;\;s$ will
check with $q \cdot \capa$,
enforcing \name{}'s divisibility requirement.


The \splitcom statement, meanwhile, is checked by the \rref{T-Split} rule,
and functions like a binary version of the n-ary \code{split} construct
in \name. It enforces the same divisibility requirements to ensure that
the \unit{}s on code and data remain properly aligned, and then checks
the two sub-statements $s_1$ and $s_2$ with the divided, narrower \unit{}s.

\enlargethispage{\baselineskip}

In \calcname, to better model the details of how \unit{}s shift down the GPU
hierarchy, we introduce a third construct called \code{destruct} that makes
explicit exactly where such shifts occur.  In the corresponding rule,
\rref{T-Destruct}, the $\downarrow$ operation on $\capa$s ``destructs'' the
\unit{} into many narrower \unit{}s at a lower level. This operation is defined
as $ \downarrow(\grid, 1) = (\block, \blocks)$ and $\downarrow(\block, 1) =
(\thread, \threads)$, where $\blocks$ and $\threads$ are parameters to a
particular instantiation of \calcname to describe the number of blocks per grid
and threads per block.\footnote{\calcname is abstracted over these $\blocks$ and
$\threads$ values, so instead of tracking \unit{} bounds the way that \name
does, it only tracks the top-level \unit{} described in \Cref{sec:comp-scope}.}
\ifextended
Because $\downarrow$ is only defined on $(\grid,1)$ and $(\block,1)$, the rule
enforces that one can only \destructcom their code \unit{} with exactly one grid
or block. \jwc{the reason for this is separation of concerns}
\fi

\subsubsection{Managing \Unit{}s on Data}

The mechanism for managing data \unit{} mirrors that of code \unit{},
with each operation for data corresponding to an operation for code.

The \code{partition} operation is analogous to \code{group}ing a code \unit{}. 
The typing rule for this operation, \rref{T-Partition}, 
requires that the data \unit{} on $x$, the variable to be partitioned,
is the same as the current \unit{} on code. 
After partitioning, a fresh variable $y$ is introduced with a new \unit{} $\capa / c$.
Within the \code{partition}, 
we disallow references to $x$
and continue checking
the body with the original $\capa$; the \code{partition} has no effect 
on the code \unit{}.

Unlike \code{partition}, which divides up a piece of data equally among narrower
\unit{}s, the \code{claim} operation views the claimed data with
exactly one narrower \unit{}. Accordingly, \calcname needs to ensure that only
one branch of a \code{split} operation, with the appropriate $\capa$, can refer
to the claimed variable. To ensure that this is the case, the \rref{T-Claim}
rule links the data \unit{} of the variable to the compute \unit{} of the
code claiming it by changing both at the same time. This represents a
minor difference from \name, which uses additional static analysis to
ensure that a claimed variable is only accessed in a single \code{split} branch.

Lastly, the \rref{T-Lower} rule mirrors the \rref{T-Destruct} rule;
it uses the $\downarrow$ operator to move a variable from one level 
of the hierarchy to another, distributing it equally among all the narrower 
\unit{}s at that level in the same manner as \rref{T-Partition}.

Note that these rules only apply to thread-shared memory, i.e., 
arrays that do not live in \local. The provenance of an array type 
is denoted by the superscript $l$ above it.

\subsection{\calcname Semantics}

\begin{figure}
  \small

  \begin{center}
    \begin{tabularx}{1\textwidth}{
        >{\raggedright\arraybackslash}X
        >{\raggedleft\arraybackslash}X
      }
      \fbox{$L, S, \Sigma, P \stepsto L', S', \Sigma', P'$} & \textit{(Machine judgment)}
    \end{tabularx}
  \end{center}
  $$
  \infer[\rlabel*{S-Program}]
  {L, S, \Sigma, P \stepsto L[t \mapsto \eta'], S[b \mapsto \sigma'], \Sigma', P[(t, b) \mapsto s']}
  {L(t), S(b), \Sigma, t, b, 0 \vdash^{(\grid, 1)} s \stepsto s' \dashv \eta', \sigma', \Sigma' & P(t, b) = s}
  \vspace*{2.75ex}
  $$
  \begin{center}
    \begin{tabularx}{1\textwidth}{
        >{\raggedright\arraybackslash}X
        >{\raggedleft\arraybackslash}X
      }
      \fbox{$\eta, \sigma, \Sigma, t, b, p \vdash^\capa s \stepsto s' \dashv \eta', \sigma', \Sigma'$} & \textit{(Thread judgment)}
    \end{tabularx}
  \end{center}

  $$
  \infer[\rlabel*{S-Split-Left}]
  {\eta, \sigma, \Sigma, t, b, p \vdash^{(\hier, n)} \splitcom(n_1, n_2) \{s_1\} \{s_2\} \stepsto \splitcom(n_1, n_2) \{s_1'\} \{s_2\} \dashv \eta', \sigma', \Sigma'}
  {p < n_1 & n_1, n_2 \alignto n & \eta, \sigma, \Sigma, t, b, p, \vdash^{(\hier, n_1)} s_1 \stepsto s_1' \dashv \eta', \sigma', \Sigma'}
  \vspace{2.75ex}
  $$
  $$
  \infer[\rlabel*{S-Split-Right}]
  {\eta, \sigma, \Sigma, t, b, p \vdash^{(\hier, n)} \splitcom(n_1, n_2) \{s_1\} \{s_2\} \stepsto \splitcom(n_1, n_2) \{s_1\} \{s_2'\} \dashv \eta', \sigma', \Sigma'}
  {p \geq n_1 & p < n_1 + n_2 & n_1, n_2 \alignto n & \eta, \sigma, \Sigma, t, b, p - n_1, \vdash^{(\hier, n_2)} s_2 \stepsto s_2' \dashv \eta', \sigma', \Sigma'}
  \vspace{2.75ex}
  $$
  $$
  \infer[\rlabel*{S-Group}]
  {\eta, \sigma, \Sigma, t, b, p \vdash^{(\hier, q \cdot n)} \replicom \code{ } q \code{ } s \stepsto \replicom \code{ } q \code{ } s'; \dashv \eta', \sigma', \Sigma'}
  {\eta, \sigma, \Sigma, t, b,  p \textbf{ mod } n \vdash^{(\hier, n)} s \stepsto s' \dashv \eta', \sigma', \Sigma'}
  \vspace{2.75ex}
  $$
  $$
  \infer[\rlabel*{S-Destruct-Block}]
  {\eta, \sigma, \Sigma, t, b, 0 \vdash^{(\block, 1)} \destructcom \code{ in } s \stepsto \destructcom \code{ in } s'
  \dashv \eta', \sigma', \Sigma'}
  {\eta, \sigma, \Sigma, t, b, t \textbf{ mod } \threads \vdash^{(\thread, \threads)} s \stepsto s'\dashv \eta', \sigma', \Sigma'}
  \vspace{2.75ex}
  $$
  $$
  \infer[\rlabel*{S-Alloc-Shared}]
  {\eta, \sigma, \Sigma, t, b, p \vdash^{(\block, 1)} x := \code{alloc } \shared \code{ } \tau \code{ } n \code{ in } s \stepsto s 
  \dashv \eta, \sigma[x \mapsto^\capa \langle x, n \rangle], \Sigma}
  {}
  $$

  \normalsize
\vspace*{-0.5em}
  \caption{Core semantic rules of \calcname. As with the typing rules, we present only a simplified selection 
  of the full rules, which can be found in Appendix \ref{appendix:proofs-semantics}.}
  \label{fig:calculus-semantics}
  \vspace*{-1em}
\end{figure}

Having explained the key rules of the type system, we can 
move on to discuss \calcname's operational semantics. 
To reflect the fact that a GPU program executes in parallel
across numerous threads, we model the semantics of \calcname in the style of
\citet{turon-logical-2013}, using a two-level small step judgment. We present
the key rules of this semantics in ~\Cref{fig:calculus-semantics}.

\enlargethispage{\baselineskip}

The top level (i.e., machine-level) judgment has just one rule: \rref{S-Program}.
This rule acts as a ``frame'' for the lower level (i.e., thread-level) judgment,
and steps a collection of thread-ID-indexed local memories ($L$), 
a collection of block-ID-indexed shared memories ($S$), a global memory ($\Sigma$), 
and a thread pool ($P$) to an updated collection of memories and updated thread pool. 
The thread pool maps thread and block IDs to code, intuitively representing 
the program being executed by each thread at the current moment. 
The \rref{S-Program} rule non-deterministically chooses a thread ID and block ID 
and steps it according to the thread-level judgment. This 
allows the semantics to model the full range of non-deterministic
behavior arising from the GPU's thread scheduler.\footnote{
  In reality, the GPU's warp scheduler issues instructions to threads in a warp
  in lockstep, but modeling every thread 
  as completely independent is both simpler and a conservative overestimate 
  of the nondeterministic behavior of the GPU. 
}

The thread-level judgment has the shape $\eta, \sigma, \Sigma, t, b, p \vdash^\capa s \stepsto s' \dashv \eta', \sigma', \Sigma'$, where 
\begin{itemize}
  \item $\eta$ denotes thread-local memory,
  \item $\sigma$ denotes shared memory,
  \item $\Sigma$ denotes global memory,
  \item $t$ denotes the thread's ID, 
  \item $b$ denotes the ID of the block in which the thread lives, and
  \item $p$ denotes the relative position of the thread within $\capa$ (the \unit{} ID).
\end{itemize}
Critically, notice that a $\capa$ also appears on the thread-level judgment 
just as it does on the typing judgment. This is because 
the thread semantics \emph{dynamically tracks and enforces \unit{}s}. 
The same way evaluation of a program ``gets stuck'' if a value does not have the right type, 
the semantics of \calcname also get stuck if code attempts to access data or invoke commands 
with the wrong \unit{}. As an example, observe the \rref{S-Alloc-Shared} rule in 
~\Cref{fig:calculus-semantics}, which requires a $(\block, 1)$ \unit{} and will 
fail to step if encountered with a different one. 
This runtime \unit{} is present in \calcname, but is erased by \name{} during
compilation; in \Cref{sec:core-theorem} we use it to prove that 
well-typed programs will
always execute with the same \unit{} that the type system viewed them with.

The semantic rules for \unit{}s involve manipulating $p$ to track 
which threads take which code paths when \unit{}s are \code{split} or \code{group}ed. 
Notice that in \rref{S-Program}, the thread-level judgment 
always begins with \unit{} $(\grid, 1)$: all the \unit{} management rules 
are congruences, narrowing the \unit{} of further evaluation as determined by the particular rule used. 

\ifextended
The first set of these \unit{} management rules handle the \splitcom
construct, enforcing the same alignment requirements as the \rref{T-Split}
rule. The rules choose which path to take based on the value of $p$: if $p$ is
less than the size of the left-hand \unit{} $n_1$, execution of $s_1$ will
proceed with that narrower \unit{} via the \rref{S-Split-Left} rule, while if $p$
is between $n_1$ and $n_2$ execution of $s_2$ will proceed via the
\rref{S-Split-Right}. In this latter case, we execute $s_2$ with a new $p$ value
that subtracts off the value of $n_1$. As an example, if we encounter a
$\splitcom(2, 2)\{s_1\}\{s_2\}$ with a $(\thread, 4)$ \unit{}, the first two
threads will go ``left'' and execute $s_1$ with a $(\thread, 2)$ \unit{},
while the third and fourth threads will go ``right'' and execute $s_2$ with a
$(\thread, 2)$ \unit{}. 

The \rref{S-Group} rule is much simpler, as
every thread that encounters this operation will 
execute the sub-statement $s$. What changes across threads is how their $p$ 
value will be modified by the \code{group}ing operation. In each case, 
the $p$ of the thread is reduced modulo $n$, where $n$ is the size of the 
\unit{} with which $s$ is executed. This effectively recolors all the threads 
\unit{} the new narrower \unit{} with which they will see $s$.

The \rref{S-Destruct-Block} rule for \destructcom behaves similarly to \rref{S-Group}. 
However, as $p$ is an index into the current \unit{}, it is always 0 when a \unit{} has size 1. 
The  \destructcom statement can only be executed with a size 1 \unit{}, and so $p$ is not particularly useful here. 
To compute the new $p$ for the execution of $s$ with a \unit{} of size $\threads$, 
we reduce the thread ID modulo $\threads$ and continue executing $s$ with that as the new value of $p$.
A similar rule exists for destructing a \grid \unit{} into a size $\blocks$ \block \unit{}. 
\fi

These rules take great care to ensure that $p$ always describes the relative position 
of a compute resource within its \unit{}; the payoff is that \calcname's 
semantics can later use this $p$ value to model the way that \name automatically 
adjusts indices into data when \code{partition}ing a data \unit{}. 
Beyond these key rules for managing \unit{} on code, we have modeled all other 
core features of \name, such as asynchronous operations and thread synchronization, in \calcname. 
To handle such features we equip the operational semantics with additional structure, 
including sets of semaphores \cite{dijkstra-over-0} for synchronization and 
a stack of effect handlers for modeling deferred asynchronous computations 
inspired by \citet{ahman-asynchronous-2021}. 
We have elided these details here for simplicity, but interested readers can find them
in their full complexity in Appendix \ref{appendix:proofs-semantics}.

\subsection{Soundness Theorem}\label{sec:core-theorem}
Together, the type system and operational semantics allow us to prove the following syntactic soundness
theorem, which says that \calcname programs are type safe and do not get stuck 
trying to execute operations for which they lack the required \unit{}:

\begin{theorem}{(Type-and-\Unit{} Safety).}
  For any program $s$ such that $\Gamma \vdash^\capa s$, either:
  \begin{enumerate}
    \item $s$ is \skipcom, or
    \item for any well-typed environments $\eta$, $\sigma$, and $\Sigma$,
      there is an $s'$, $\eta'$, $\sigma'$, and $\Sigma'$ such that \\ $\eta, \sigma, \Sigma, t, b, p \vdash^\capa s \stepsto^{\star} s' \dashv \eta', \sigma', \Sigma'$
      and $\Gamma' \vdash^\capa s'$, where $\Gamma'$ is an extension of $\;\Gamma$, and 
      $\eta'$, $\sigma'$, and $\Sigma'$ are well-typed with respect to $\Gamma'$.
  \end{enumerate}
\end{theorem}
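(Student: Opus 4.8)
The plan is to prove the theorem in the standard syntactic style, decomposing the combined statement into a Progress lemma and a Preservation lemma and then chaining them. Since each top-level transition of the thread judgment corresponds to a single head reduction---congruence rules such as \rref{S-Group} and \rref{S-Split-Left} fire exactly one underlying step---the $\stepsto^{\star}$ in the conclusion is discharged by a single application of Progress followed by Preservation; the reflexive--transitive closure is only needed to accommodate statements whose evaluation must first reduce a subexpression. Both lemmas go by induction on the typing derivation $\Gamma \vdash^\capa s$, with a case analysis on the last rule used, so the bulk of the work is a rule-by-rule check against the operational semantics of \Cref{fig:calculus-semantics}.

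For Preservation I would first establish the usual infrastructure: inversion lemmas for each statement- and expression-typing rule, a weakening/context-extension lemma (needed because \rref{S-Alloc-Shared}, together with the semantic analogues of \rref{T-Partition}, \rref{T-Claim}, and \rref{T-Lower}, all introduce a fresh binding, so the resulting $\Gamma'$ properly extends $\Gamma$), and a store-typing judgment $\vdash \eta, \sigma, \Sigma : \Gamma$ relating the three memories to the context. The heart of Preservation is then showing that each step produces memories well-typed at the extended $\Gamma'$ and a residual statement $s'$ that retypes at the \emph{same} code \unit{} $\capa$. This last point is immediate from the semantic rules, since every \unit{}-management rule reinstates the outer \unit{} on the right-hand side (e.g.\ \rref{S-Group} returns $\replicom\,q\,s'$ at $(\hier, q\cdot n)$, and \rref{S-Split-Left} returns the split at $(\hier,n)$), so the static and dynamic \unit{}s shift in lockstep.

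The delicate part, and the reason the theorem concerns \unit{}-safety and not merely type-safety, is Progress: I must show that a well-typed non-\skipcom statement is never stuck on a \unit{} mismatch. The key invariant to carry through the induction is \emph{\unit{} agreement}: the $\capa$ annotating the thread judgment always coincides with the $\capa$ of the typing derivation, and the relative position $p$ satisfies $0 \le p < \size{\capa}$. \rref{S-Program} establishes this at $(\grid,1)$ with $p = 0$, and each congruence rule re-establishes it, using the alignment side-conditions $n_1, n_2 \alignto n$ and the divisibility premise of \rref{T-Group} to guarantee that the recomputed positions ($p \bmod n$ in \rref{S-Group}, $p - n_1$ in \rref{S-Split-Right}, $t \bmod \threads$ in \rref{S-Destruct-Block}) again land in range. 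With this invariant in hand, the \unit{}-guarded rules---notably \rref{S-Alloc-Shared}, which demands a $(\block,1)$ \unit{}---are enabled precisely when the corresponding typing rule applied, so the runtime enforcement in the semantics never fires spuriously. Progress additionally requires the semantic \splitcom rules to be \emph{exhaustive} over in-range $p$: besides \rref{S-Split-Left}/\rref{S-Split-Right}, there is the masking case $p \ge n_1 + n_2$ (the idle resources behind the \code{split} in \Cref{fig:global-mm}), which the full rule set in the appendix must supply.

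I expect the main obstacle to be the memory and synchronization machinery that the excerpt elides. Reads and writes are justified by \unit{} agreement together with store typing, and \rref{T-Arr-Access} forces the ``read up''/``write down'' discipline, so ordinary data steps are routine; but the \code{partition}, \code{claim}, \code{async}, and synchronization constructs interact with the semaphore pool and the stack of effect handlers used to model deferred asynchronous transfers. For these I must extend store typing to track semaphore and handler state and show it is preserved, and---crucially for Progress---argue that the automatically inserted barriers cannot deadlock a well-typed program, so that a thread waiting to restore a partitioned or claimed pointer to its original \unit{} can always eventually proceed. Data-race freedom and in-bounds indexing are \emph{not} part of the guarantee (the paper treats both as undefined behavior), so I would not attempt to prove them; the theorem asserts only that execution stays type-correct and never attempts an operation at the wrong \unit{}.
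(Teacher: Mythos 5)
Your overall decomposition is exactly the paper's: the theorem is proved by chaining a Statement Progress lemma and a Statement Preservation lemma, both established by induction on the typing derivation with a rule-by-rule case analysis, and supported by the same infrastructure you list---expression safety, a store-typing judgment $\Gamma \vdash \eta, \sigma, \Sigma$, well-typedness lemmas for \textbf{get}/\textbf{update}/\textbf{rename}, context extension ($\Gamma \subseteq \Gamma'$), and the in-range invariant $\capa \vdash p$ on the \unit{} ID, established at $(\grid, 1)$ with $p = 0$ by \textbf{S-Program} and re-established through the \splitcom/\replicom/\destructcom congruences using the alignment and divisibility side conditions. Your point that the \splitcom rules must be exhaustive over in-range $p$, including the masking case $p \geq n_1 + n_2$, corresponds precisely to the paper's \textbf{S-Split-None} rule, and your proposed extension of well-typedness to the asynchrony machinery corresponds to the paper's $\Gamma \vdash \Phi$ judgment on the deferred-computation map (no invariant on the semaphore map $\Psi$ turns out to be needed).

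The one genuine gap is your treatment of synchronization in Progress. You claim it is ``crucial'' to argue that the automatically inserted barriers cannot deadlock, so that a waiting thread ``can always eventually proceed.'' This is both unnecessary and unachievable here. It is unnecessary because the semantics makes $\code{wait}_\psi$ a busy-wait: the rule \textbf{S-Sync-Wait-Spin} steps $\code{wait}_\psi$ to itself whenever $\Psi(\psi)(p) \neq 0$, so a blocked thread always has a step available and is never syntactically stuck; the paper's Progress case for \textbf{T-Sync-Wait} is a two-line case split on whether the counter is zero, and its Preservation case notes explicitly that deadlock freedom would require more work ``but we aren't doing that.'' It is unachievable because ``eventually proceeds'' is a liveness property, and the paper is emphatic that the theorem guarantees only syntactic safety: with nontermination, liveness fails outright (threads can \code{split} off and loop forever), and even for a terminating fragment the authors state such a proof would need semantic techniques like logical relations, which they defer to future work. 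Had you pursued your plan as written, this obligation would be unprovable with the syntactic methods you (correctly) chose for everything else; the fix is simply to drop it and let the spin step discharge Progress for waits.
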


\begin{proof}
  Via the usual progress and preservation lemmas, available in Appendix \ref{appendix:proofs-proofs}.
\end{proof}

It is worth noting that this soundness theorem guarantees a syntactic safety property,
not a liveness property:
it does not guarantee
that all threads sharing \unit{} $\capa$ that \emph{can} reach a program point typed with $\capa$ \emph{will} eventually do so.
Indeed, in the presence of nontermination,
liveness does not hold---some of the threads could \code{split} off and loop forever.
While we believe the liveness version of this theorem holds for a terminating fragment of \calcname, it
is not provable with syntactic methods; the proof would require semantic techniques
that are notoriously challenging and would be a research contribution \cite{turon-logical-2013, birkedal-concurrent-2012, farzan-proving-2016} 
in and of itself. We plan to tackle this proof for \calcname in future work.


\section{Implementation}\label{sec:compilation}

\name{} is implemented as an embedded language in Python. Once a program type
checks, \name{} lowers it to a CUDA file. All \unit{} information is
erased during this step, and the generated CUDA contains no run-time checks.
The file can then be compiled by \code{nvcc} \cite{nvidia-cuda-12-3-0}, 
NVIDIA's closed-source compiler, to produce an executable. Because \name{} operates at roughly the
same level of abstraction as CUDA, there is a one-to-one mapping between most language constructs and their CUDA counterparts. A notable change is the addition of three parameters to each device function:
the thread’s relative ID, the block’s ID, and an offset for shared memory allocations.

\paragraph{Inserting Synchronization}

\begin{wrapfigure}{r}{0.18\textwidth}
  \centering
  \vspace*{-2em}
  \includegraphics[width=0.13\textwidth]{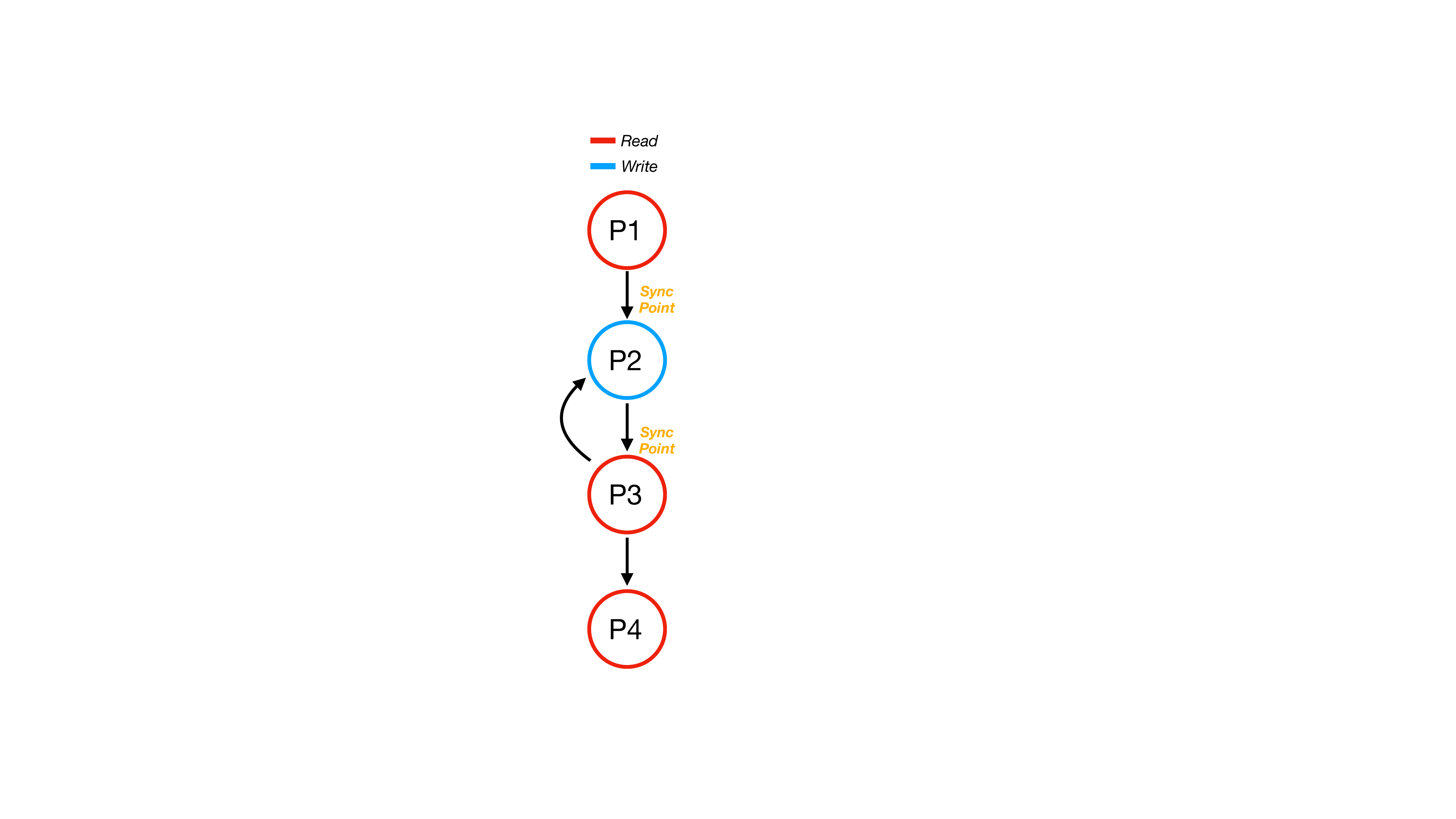} 
  \vspace*{-1em}
  \caption{An example data-control-flow graph with synchronization points.}
  \label{fig:sync-points}
  \vspace*{-2em}
\end{wrapfigure}

Most of \name{}’s implementation is straightforward, but
inserting synchronization points is more involved.
As described in ~\Cref{sec:data-unit}, once data has been \code{partition}ed, \name{} is
responsible for synchronizing the data after the \code{partition} ends.

To determine where this synchronization must occur, \name{}
constructs a \emph{data-control-flow graph} from the program. Nodes
correspond to \code{partition}s, and edges capture program-order precedence:
a parent \code{partition} must complete before its child begins. The graph can have
backedges introduced through loops. In this graph, each \code{partition} is categorized
as a read or a write \code{partition} by checking whether the  \code{partition}ed variable ever 
appears as an lvalue. 
Synchronization points are inserted according to the following scheme:

\begin{enumerate}
  \item If the \emph{parent} \code{partition} is a write, a synchronization point is
  inserted before the current one to ensure that it observes the most recent data; or
  \item If the \emph{current} \code{partition} is a write, a synchronization point is
  inserted before it to ensure that all preceding reads have
  completed.
\end{enumerate}

\Cref{fig:sync-points} shows an example graph with synchronization
points derived from these two conditions. The inferred synchronization points in
~\Cref{fig:global-mm} have also been marked on lines ~\refline{40}, \refline{48}, \refline{56}, and \refline{63}.

Using this information, \name{} emits \emph{wait} operations before \code{partition}s
begin and \emph{arrive} operations after they end, using CUDA's general \emph{split-barrier} \cite{mbarrier} primitive to implement them. For special cases, such as
synchronizing an entire block or warp, \name{} instead uses primitives
like \code{\_\_syncthreads()} or \code{\_\_syncwarp()}.

Synchronization for asynchronous data movement is handled in exactly the same
way and uses the same underlying graph. CUDA allows asynchronous loads
to be associated with a split barrier, so \name{} binds each asynchronous
transfer to the appropriate wait–arrive pair inferred from the graph. Certain
features, such as commit group-style synchronization \cite{commit-group,commit-group-bulk}, require additional
reasoning, and \name{} performs further static analysis to insert the
necessary synchronization.

It is worth noting that naively inserting synchronization immediately after each
\code{partition} would be correct but prohibitively slow. To avoid this, \name{}
applies two optimizations: a wait-motion pass
pushes waits downward toward the first use of the \code{partition}ed variable, and an
arrive-motion pass pulls arrives upward toward its last use.

\section{Evaluation}\label{sec:eval}

Having explained the design of \name,
we now evaluate it in the context of three main questions:

\begin{description}[labelwidth=1em, leftmargin=4em]
    \item[\textbf{RQ1}] Can \name{} express a variety of composable CUDA programs?
    \item[\textbf{RQ2}] Can \name{} express programs that use advanced GPU
    features?
    \item[\textbf{RQ3}] Can \name{} match the performance of existing,
    speed-of-light CUDA code?
\end{description}


To perform this evaluation, we use two GPUs. The first is the NVIDIA H100 SXM5,
a server-grade chip that supports Tensor Core operations and a dedicated
hardware copy engine, the Tensor Memory Accelerator (TMA) \cite{bulk-cite}. Notably, the H100
introduces a new logical level called the warpgroup, and we show that our programming model can
accommodate it. Moreover, because the H100 has historically served
as the primary GPU for large-scale AI training, many CUDA kernels are
already highly optimized for this hardware and achieve near speed-of-light
performance, providing a rigorous baseline for comparison. To ensure our results
generalize beyond the H100, we also test programs on a second GPU, the
NVIDIA 4070 SuperTi---a consumer-grade chip.

As mentioned in ~\Cref{sec:compilation}, when \name{} typechecks a program, it
produces a CUDA file. We compile the CUDA file with \code{nvcc} version 12.3
~\cite{nvidia-cuda-12-3-0} with flags \code{-03 --use\_fast\_math}. We
initialize all inputs using a random number generator \cite{thonking} and report the average
runtime sampled over 10 iterations, following a warm-up phase of 5 iterations.
All aggregate results reported in this section are geometric means. All
programs in the evaluation have been reproduced in Appendix \ref{sec:code-appendix}.

\subsection{RQ1: Can \name{} Express a Variety of CUDA Programs?}\label{sec:diversity}

We evaluate \name{}’s expressivity by writing programs that have fundamentally
different patterns of convergence, along with a library modeled after CUB  \cite{nvidia-cub-2025} that 
contains composable pieces.

\subsubsection{Programs with Different Convergence Behavior}

\paragraph{\textbf{Matrix Multiplication} (4070Ti)} We chose matrix multiplication as our
first benchmark for two main reasons. First, there exist several implementations that
achieve near-peak performance, providing a strong baseline. Second, it allows for a range of increasingly sophisticated implementations
that exercise different parts of the language, making it ideal for evaluating
expressiveness. 

We adapt the codebase from
~\citet{Boehm-SGEMM-CUDA} to implement \code{float} matrix multiplication, commonly referred to as \code{sgemm}. 
Concretely, we compute $C \leftarrow
\alpha A B + \beta  C$ where $A$, $B$ and $C$ are matrices and $\alpha$ and $\beta$ are scalars. As this is a
\code{float} benchmark, it does not need advanced GPU features
like asynchrony or Tensor Cores to achieve speed-of-light performance. We will discuss these in ~\Cref{sec:advance-features}.

We implement five variants of the \code{sgemm} benchmark: 
(1) a naive version
that follows the traditional single-program multiple-data pattern, written
from the \unit{} of a thread; 
(2) a version that exploits memory
coalescing \cite{coalesced}, still expressed at the thread level; 
(3) a version that builds on (2) by introducing 2D tiling and staging data in shared memory, which requires
shifting first to the \unit{} of a block and then to the \unit{} of a thread,
while also requiring block-level synchronization; 
(4) a version that applies 2D
tiling with vectorized loads, again written from the \unit{} of a block; and 
(5)
a version that combines the optimizations from (2) through (4) while adding an
additional level of tiling from the \unit{} of a warp.

We can express all variants cleanly, and the resulting programs are close to
their CUDA counterparts, which distinguish \unit{}s through disciplined style.
\name{}, on the other hand, enforces this discipline at compile time. We
evaluate the performance of these variants in ~\Cref{sec:eval-perf}.

\paragraph{\textbf{Single-Pass Parallel Prefix Scan with Decoupled Look-Back} (4070Ti)}

We also implement scan, a widely used parallel primitive, in \name{}. We focus
on the prefix-sum scan, which computes, for each position in an array, the sum
of all elements up to that position. Prefix sum sits in a different corner of
the GPU design space from matrix multiplication: it is memory intensive,
requires careful attention to the convergence behavior of threads, and
traditionally requires multiple passes over data.

We implement the single-pass parallel prefix scan with decoupled look-back,
introduced by
~\citet{MerrillGarland2016-SinglePassParallelPrefix}, an elegant algorithm that
does not require multiple passes over the input data, and involves several distinct points of convergence. Within each
block, work is decomposed into fine-grained thread-level and warp-level scans.
After producing the local result, blocks publish their partial prefix to global
memory. Finally, each block waits until enough information from
earlier blocks is available, at which point it accumulates the value
and completes its section of the scan.

We implement this full strategy in \name{}. Our implementation uses \code{unsafe}
to implement a global-memory spinlock that lets a block check when the
previous block's data is ready.

\subsubsection{Case Study: Can \name{} help build composable functions?} \hfill\\

\noindent
While answering the other RQs, we found ourselves developing a small library
of functions---similar in spirit to CUB---that we would frequently call. 
In this section, we qualitatively study
how \name{} can help programmers design libraries that they can
compose with confidence. 

\begin{wrapfigure}{r}{0.55\textwidth}
    \centering
    \vspace{-2em}
\begin{cudaBox}
template<typename T, int BlockDimX, 
    int ItemsPerThread, BlockLoadAlgorithm Algorithm, 
    int BlockDimY = 1, int BlockDimZ = 1>
class BlockLoad: 

// --- Calling the load function by instantiating the class ---

using BlockLoad = cub::BlockLoad<int, 128, 4, BLOCK_LOAD_DIRECT>;
// Allocate shared memory for BlockLoad
__shared__ typename BlockLoad::TempStorage temp_storage;
int thread_data[4]; // Thread local data
BlockLoad(temp_storage).Load(d_data, thread_data);
\end{cudaBox}
\vspace*{-2em}
\caption{Using \code{BlockLoad} in CUB.}
\label{fig:block-load}
\vspace*{-1em}
\end{wrapfigure}

As mentioned in ~\Cref{sec:intro}, CUB occupies a unique design space in the GPU
library ecosystem. Unlike many other libraries such as cuBLAS \cite{cublas},
cuDNN \cite{cudnn}, and cuSPARSE \cite{cusparse}, which provide host-side
functions, CUB provides a device-side library
organized into different levels. It makes these levels apparent by prefixing
each of its functions with \code{Device}, \code{Block}, and \code{Thread}. The prefix sum already used variants of these functions, translated into \name{}. 
 
Let's turn our attention to a particular CUB
function---\code{BlockLoad}---and examine how it is equivalently expressed in
\name{}; we will see how \name{}'s type system reifies CUB's implicit assumptions.
In CUB, the load function is implemented as a class, as shown in Figure \ref{fig:block-load}.

CUB exposes a leaky abstraction, where information about the number of threads,
block sizes, and other details seeps through:

\begin{enumerate}
    \item The CUB documentation needs to specify the number of threads that the
    function can assume to be available, because within the function, each
    thread must locate itself in the computation and use its \code{threadIdx.x}
    accordingly.
    
    \item The “item per thread” design can serve two purposes.
    The first is performance: if loops have constant bounds, they can
    be unrolled. The second is
    correctness: the function relies on the assumption that all threads call the
    function with an equal number of values to load.
    
    \item The CUB documentation specifies that \code{thread\_data}
   can be data local to each thread.
    
    \item Finally, it says that if shared memory is
    being overwritten, a \code{\_\_syncthreads()} call must be made to ensure
    that all reads have completed.
\end{enumerate}

\begin{wrapfigure}{r}{0.40\textwidth}
    \centering
    \vspace{-3.5em}
    \begin{cudaBox}
        @prism("device")
        @requires(block[1], thread[32])
        def block_load(input : ptr(const(int)) @ block[1], 
                          output : ptr(int) @ thread[1], 
                          items_per_thread : int @ block[1]):
        \end{cudaBox}
\vspace*{-2em}
\caption{\code{block\_load} signature in \name{}.}
\label{fig:prism-load}
\vspace*{-1.5em}
\end{wrapfigure}

In \name{}, however, we are able to describe these requirements through 
the interface shown in Figure \ref{fig:prism-load}, reducing the need to
communicate numerous implementation details through documentation:

\begin{enumerate}
    \item We do not need to pass in the number of threads at all. Whenever
    \name{} calls a function, it track the relative thread ID, so each
    function can be written locally as if it were running alone, rather than
    having to determine where the thread resides on the grid.
    
    \item We do not need to make \code{item\_per\_thread} a template argument
    for \emph{correctness}. Its frequency is set at the function signature, so \name{}
    will never allow a function to be called with a value living at a narrower \unit{}. 
    
    \item In our interface, \code{thread\_data} is explicitly set to a
    thread-local value. Since it is not marked as \code{const}, \name{}
    conservatively assumes it may be written to, and enforces at compile time
    that only \code{thread[1]} values are passed in.
    
    \item Finally, using our synchronization pass outlined in ~\Cref{sec:compilation},
    a \code{\_\_syncthreads()} call will be inserted automatically if
    \code{input} is going to be used for writing.
\end{enumerate}

\begin{wrapfigure}{r}{0.42\textwidth}
    \centering
    \vspace{-2.5em}
\begin{cudaBox}
@prism("device")
@requires(thread[32])
# block_load calls into warp_load
def warp_load(input : ptr(const(int)) @ thread[32], 
                  output : ptr(int) @ thread[1], 
                  items_per_thread : int @ thread[32]):
\end{cudaBox}
\vspace*{-2em}
\caption{\code{warp\_load} signature in \name{}.}
\label{fig:prism-warp-load}
\vspace*{-0.5em}
\begin{cudaBox}
@prism("device")
@requires(thread[1])
# warp_load calls into thread_load
def thread_load(input : ptr(const(int)) @ thread[1], 
                    output : ptr(int) @ thread[1], 
                    items_per_thread : int @ thread[1]):
\end{cudaBox}
\vspace*{-2em}
\caption{\code{thread\_load} signature in \name{}.}
\label{fig:prism-thread-load}
\vspace*{-2.5em}
\end{wrapfigure}

Moreover, our version of CUB's function is built as a composition of two other functions, 
whose interfaces are shown in Figures \ref{fig:prism-warp-load} and \ref{fig:prism-thread-load}.
The \code{block\_load} function is implemented by a call to \code{warp\_load}, which in turn calls 
\code{thread\_load}. Had we mistakenly attempted to call \code{warp\_load} from inside
\code{thread\_load}, however, \name{} would reject this, rather than silently failing at runtime.

\subsection{RQ2: Can \name{} Express Programs that Use Advanced GPU Features?}\label{sec:advance-features}

To answer this question, we write a matrix multiplication for the \code{bf16} datatype on
the H100, also known as \code{hgemm}.
This benchmark is an acid test of our language, as \code{hgemm} pushes several language features to the extreme. To write an 
\code{hgemm} that can hit peak throughput on an H100, we need to write
a warp-specialized kernel that uses the TMA---an asynchronous hardware copy engine that can move tiles of data at a time---and
the warpgroup–level Tensor Core instructions, or \code{wgmma} \cite{wgmma}---new to the Hopper architecture. A high-performance kernel
for this matrix-multiplication overlaps computation with data movement by
pipelining loads.

The implementation in \name{} looks different from CUDA code,
particularly in how pipelining is expressed. Since \name{} uses named variables introduced by 
\code{partition}s or \code{claim}s to determine the
synchronization each region requires, when pipelining, we cannot
dynamically change the pipeline slot simply by maintaining an index
that wraps around based on the pipeline's length. Instead, each pipeline slots must
be given separate names so that \name{} can track them independently and overlap compute with data-movement. 
This leads to pipeline slots that
must be individually named and forces the load logic to be effectively
``unrolled''. This, in turn, forces all pipelines in \name{} to be statically sized. In practice,
these pipelines are statically sized anyway to ensure they fit in shared memory.

Notably, for this benchmark, in addition to \code{wgmma} and TMA, the program
needs to dynamically reallocate registers between producer and consumer
warpgroups, an instruction only available on Hopper. This redistribution is
a warpgroup-level collective operation, and \name{} can check it like any other. 
Moreover, getting \code{wgmma} to work did not require
introducing a new \unit{} into the language; \code{thread[128]} was sufficient.
We did, however, need to add a dedicated TMA-style asynchronous data-movement
construct, since \name{} must eventually insert the appropriate synchronization
for these transfers.

\begin{wrapfigure}{r}{0.45\textwidth}
    \centering
    \begin{minipage}{\linewidth}
    \centering
    \includegraphics[width=\linewidth]{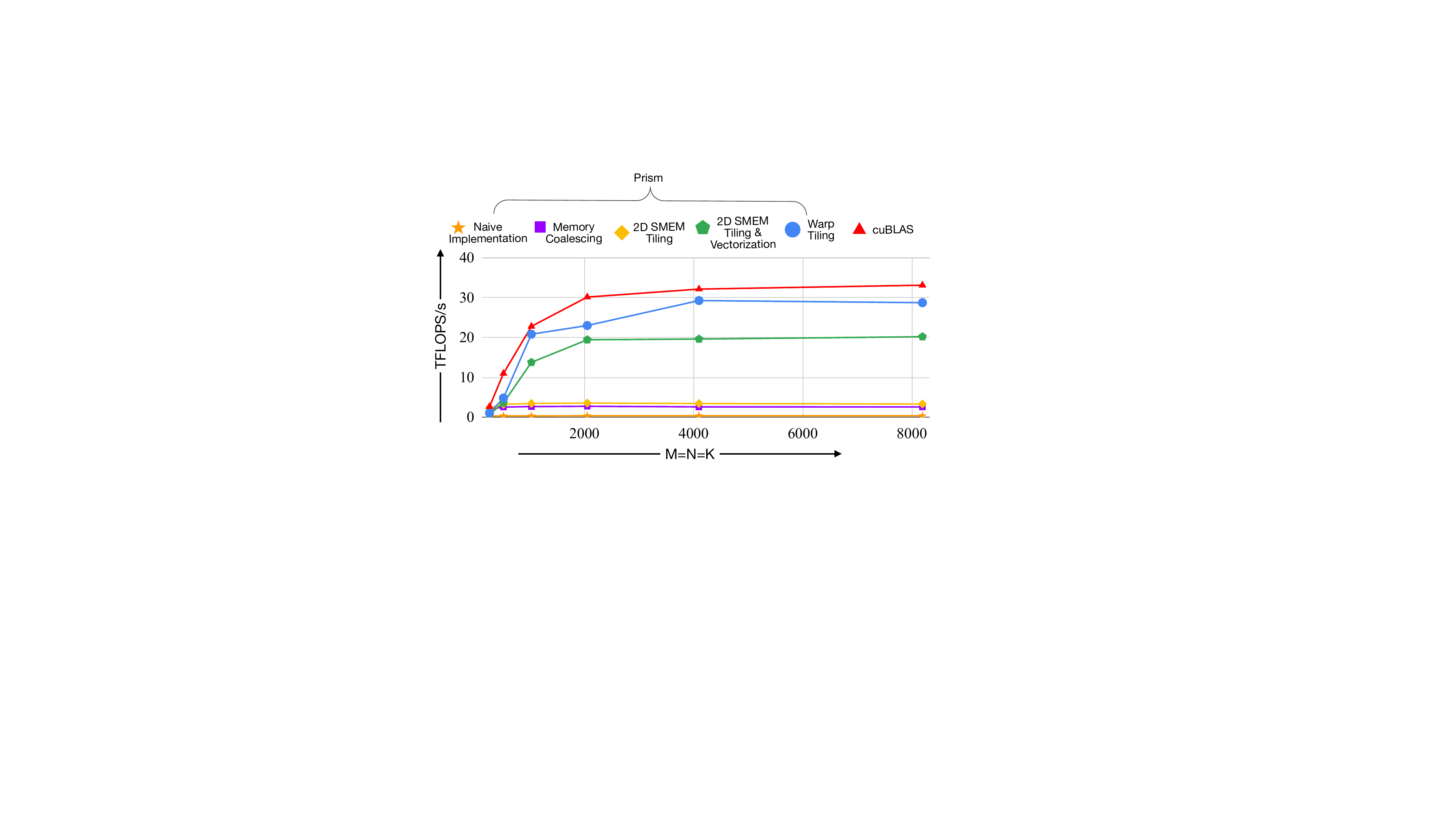}
    \caption{Performance of \code{sgemm} on square matrices as matrix dimension $M=N=K$ increases.
    \label{fig:sgemm}} 
    \end{minipage}
    \vspace*{-4em}
\end{wrapfigure}

\subsection{RQ3: Can \name{} Match the Performance of Existing, Speed-of-Light CUDA Code?}\label{sec:eval-perf}

In ~\Cref{sec:diversity} and ~\Cref{sec:advance-features}, we examined programs
that expressed the same computation in multiple ways, 
relied on multiple points of convergence, and used advanced GPU features. We now
discuss their performance.

\begin{wrapfigure}{r}{0.45\textwidth}
    \vspace{-1em}
    \centering
\begin{minipage}{\linewidth}
    \centering
    \includegraphics[width=\linewidth]{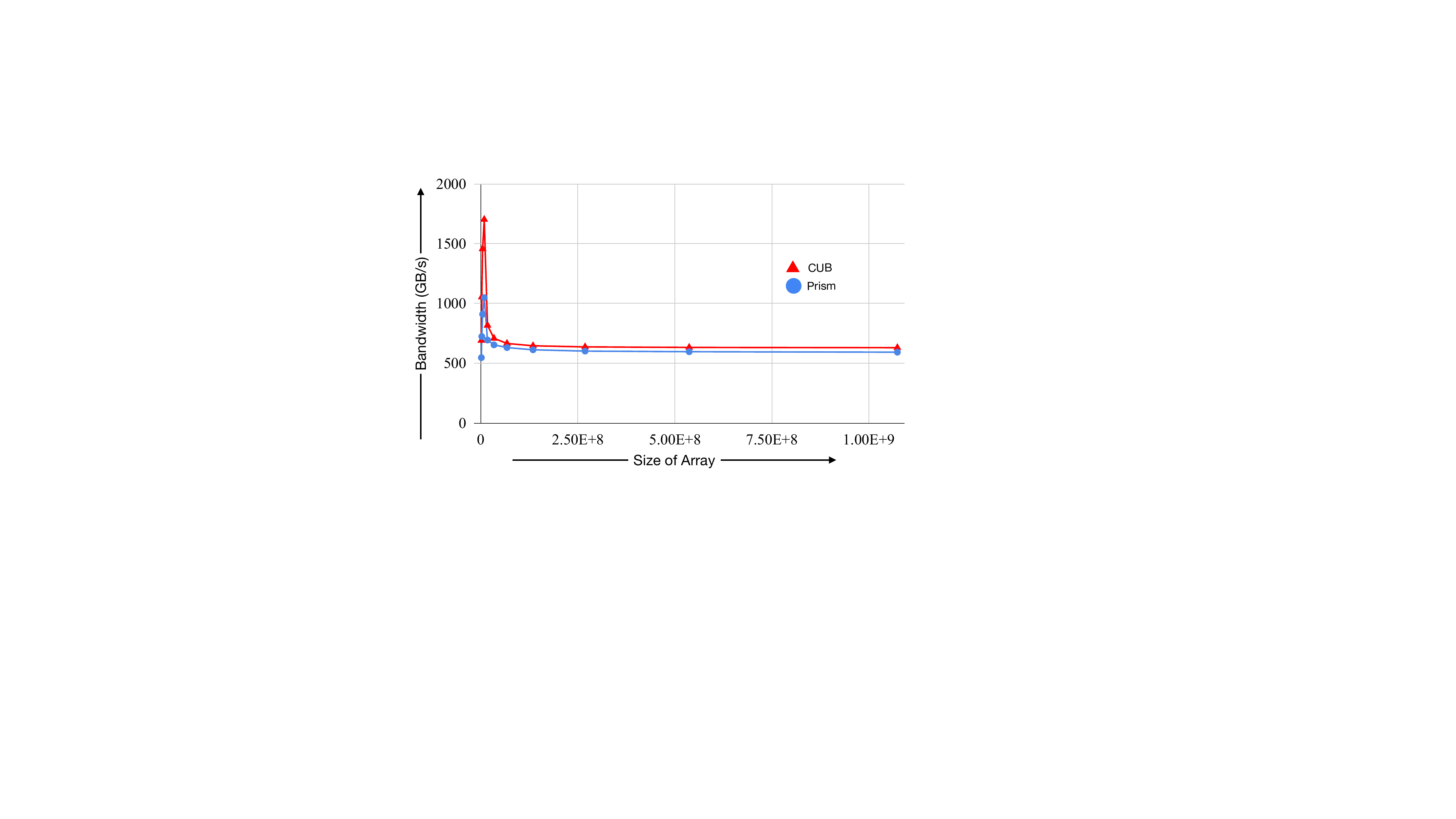}
    \vspace{-2em}
    \caption{Performance for prefix sum as input array size increases.
    \label{fig:prefix-sum}} 
    \end{minipage}
    
    \vspace{1ex}
    
    \begin{minipage}{\linewidth}
    \centering
    \includegraphics[width=\linewidth]{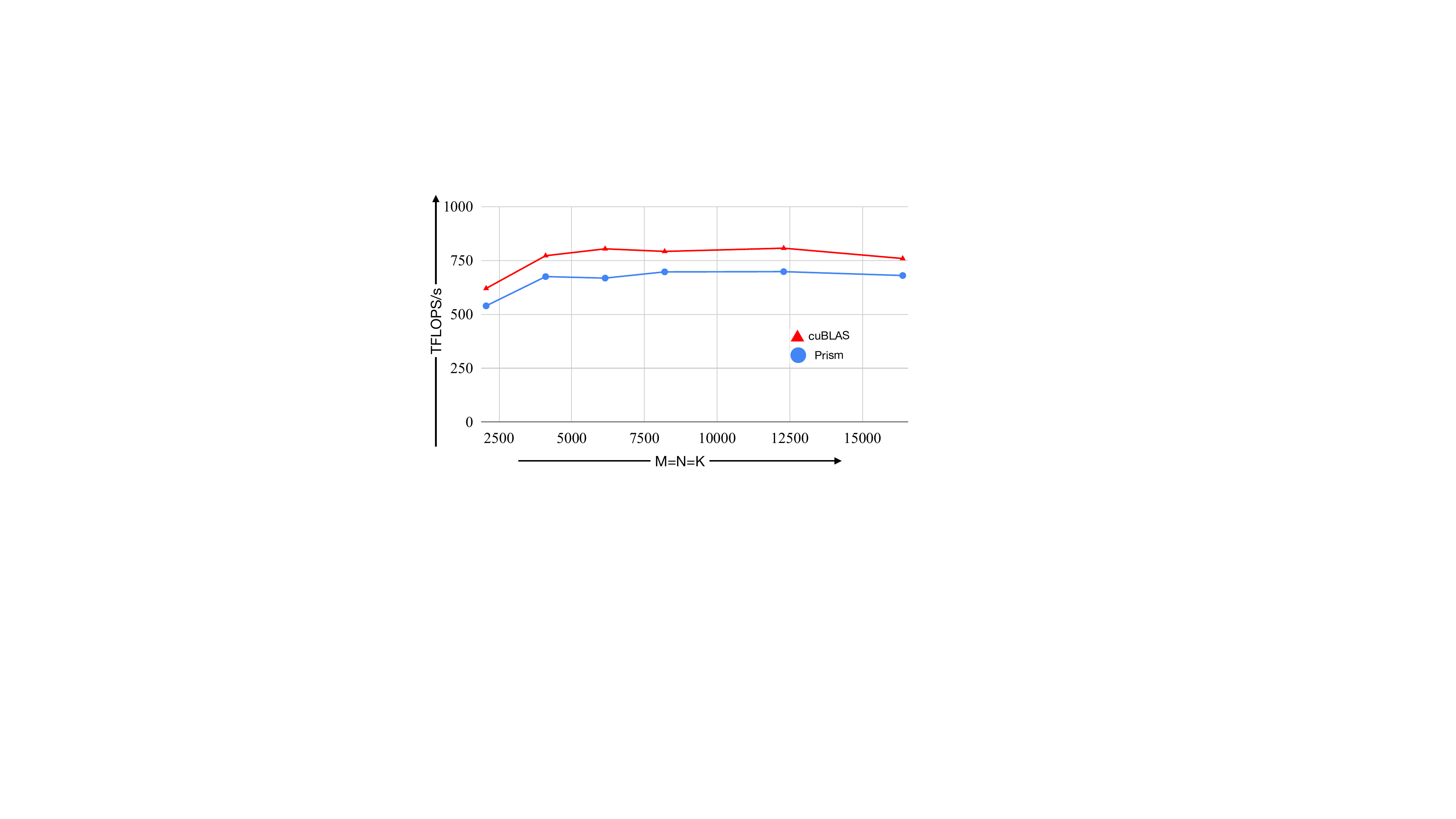}
    \vspace{-2em}
    \caption{Performance of \code{hgemm} on square matrices as matrix dimension $M=N=K$ increases.
    \label{fig:hgemm}}
    \end{minipage}
    \vspace{-2em}
    \end{wrapfigure}

The performance of the matrix multiplication benchmarks is shown in ~\Cref{fig:sgemm},
where we demonstrate that \name{} is competitive with cuBLAS \cite{cublas}.

For the prefix scan, we compare our performance to CUB in ~\Cref{fig:prefix-sum},
which shows that Prism is within 7\% of CUB's achieved bandwidth for arrays that do not fit in the L2 cache.

Finally, we evaluate our H100 implementation and show (in ~\Cref{fig:hgemm}) that it is competitive
with cuBLAS on square sizes, coming within 15\%. 
This performance difference arises due to one of \code{nvcc}'s 
optimization passes failing to trigger: the dynamic register allocation 
instruction is merely a hint---not a directive---to \code{nvcc}, 
and its optimization pass is sensitive to the way that pipeline structure is expressed. 
We emphasize that this
is an exacting benchmark, and coming close to cuBLAS's performance demonstrates 
\name{}'s ability to control low-level features.

\section{Related Work}\label{sec:related-work}

\name{} builds on a rich tradition of systems languages for GPU
programming and theoretical foundations of
parallel programming. 
\name{} differs from these systems in a crucial way: it treats collective operations, 
and therefore composability, as first-class concerns.

\paragraph{\textbf{Imperative Languages for GPUs}}

CUDA~\cite{nvidia-cuda-2025}, ROCm ~\cite{AMD-ROCm}, and OpenCL~\cite{Khronos-OpenCL} are imperative
programming languages that expose low-level access to GPU hardware. 
These languages offer no support for managing collective operations, statically or otherwise; 
instead, users must orchestrate computation on the machine by describing how individual threads execute code. 
The subtle failures permitted by this model are the motivation for \name{}.

Descend ~\cite{kopcke-descend-2024}, a new Rust-inspired \cite{rust} language, 
uses a type system to track aspects of the compute and memory hierarchy
and is thus the closest to our work. 
However, Descend's main concern is preventing data races, 
as opposed to ensuring that collective operations execute in valid contexts. 
As a result, Descend lacks support for many collective operations like Tensor Cores; 
in fact, because  Descend allows threads to read their own IDs and thus induce data-dependent control flow, 
adding such support is not possible with its current design. 
It is also worth noting that while Descend does formalize a type system, 
it does not attempt to prove any properties about it.

\paragraph{\textbf{Functional Languages for GPUs}}

Futhark ~\cite{futhark-2017}, Accelerate ~\cite{accelerate-grover} 
and Vertigo ~\cite{Elliott-Vertigo} are functional array languages with compilers targeting CUDA. 
These languages expose a high level interface, 
abstracting away details of the hardware entirely in exchange for stronger safety guarantees.  
\name{}, however, exposes low-level details of the hardware
and thus does not trade performance for safety. 

\paragraph{\textbf{Tile-Based Kernel DSLs}}
More recently, tile-based GPU languages---Triton ~\cite{triton-tillet-2019}, 
Pallas ~\cite{JAX_Pallas_GPU}, Tilus ~\cite{ding2025tilustilelevelgpgpuprogramming},  and
Helion ~\cite{PyTorch_Helion2025}---offer a middle ground between high-level abstraction 
and low-level control. However, these languages sidestep the question of composability
entirely because they restrict programmers to a single layer of the hierarchy:
a block (Triton, Tilus, Helion), or a warpgroup (Pallas). Gluon ~\cite{triton-intro-gluon}, meanwhile,  is a new low-level, tile-based language, but it does not check and cannot enforce that collective operations  are executed at the correct layer of the hierarchy.

\paragraph{\textbf{Task-Based and Scheduling Languages}}

Languages like Cypress ~\cite{cypress-yadav-2025}, Halide ~\cite{halide-jrk-2013}, Fireiron ~\cite{fireiron-hagedorn-2020} and RISE/ELEVATE ~\cite{bastian-2020} 
expose a scheduling language that lets users modify an existing reference program through external commands. 
Because these languages operate by transforming a fixed source of truth, they expose a fundamentally different programming model than \name{}.
    
\paragraph{\textbf{Libraries Built on CUDA}}

Many GPU libraries, such as CUTLASS ~\cite{Thakkar_CUTLASS_2023}, CUB ~\cite{nvidia-cub-2025}, and ThunderKittens ~\cite{spector2025thunderkittens}, 
expose device functions that operate at various levels of the compute hierarchy. 
These functions are typically organized via C++ namespaces to reflect their intended usage (e.g., warp-level, block-level). 
However, this organization is purely conventional: it encodes hierarchy through naming rather than enforcing it statically. 
As a result, correct use requires careful discipline from both the library implementer (to uphold naming and usage invariants) and the user (to correctly interpret them). 
Any mismatch or subtle misunderstanding between the intended and actual use of these functions goes unchecked by the compiler.

\paragraph{\textbf{Theoretical Foundations}} 
The design of \calcname is heavily inspired by existing 
work on dependency tracking \cite{abadi-core-1999}.
Dependency tracking calculi allow type systems to track how data and code depend on 
each other, and have commonly been used 
to implement secure information flow analyses \cite{denning-denning-info}. 
In \calcname, data that lives at a narrow \unit{} is unable to flow into data living at a broader \unit{},
and we use dependency tracking to capture this restriction in \calcname's type system.

Prior theoretical work has tackled reasoning about thread divergence. In particular, \citet{muller-modeling-2021} build a sophisticated quantitative program logic 
for this use case. \citet{singhania2018static}, meanwhile, uses static analysis techniques to predict thread divergence to unlock optimizations. 

\section{Conclusion, Limitations, and Future Work}\label{sec:limitations}

We have presented \name, a new, low-level GPU language that statically
guarantees safe usage of compute resources by construction, without sacrificing
low-level control. \name{} introduces a new mental model for writing GPU code,
which we are excited to make more expressive and ergonomic.

Particularly, we plan to improve the experience of writing pipelines in \name{}.
As discussed in Section \ref{sec:advance-features}, \name currently requires
pipeline slots to be explicitly named; we can make this more ergonomic this by
adding language support for generating pipeline-style code.

\name{} can be extended to more architectures; in particular, it can accommodate
newer GPUs—such as Blackwell—that support coarser-grained Tensor Core
operations. More broadly, we hope to generalize \name{}'s model to other
hierarchical compute environments, including distributed systems.

We also believe that Prism is capable of guaranteeing data-race freedom, but
additional work is required to support this both formally and in practice. In
particular, we think that if \name{} restricted users to \code{partition}s with
injective indexing functions, data-race freedom would follow naturally. We are also
interested in exploring how the design principles of
Descend~\cite{kopcke-descend-2024}, which build on Rust’s ownership
model~\cite{rust}, could be applied to \name{}.

On the theoretical side, we plan to explore a terminating fragment of
\calcname{} and prove the liveness property discussed in Section
\ref{sec:core-theorem}: that all threads sharing \unit{} $\capa$ eventually
reach the parts of a program viewed at that $\capa$. This amounts to showing
that threads sharing the same \unit{} execute the same code and observe the same
data, which we hope to prove using logical relations following
\citet{turon-logical-2013} and
\citet{spies-transfinite-2021,spies-transfinite-2021-1}.

We believe that \name{} is a promising low-level substrate that enables
confident composition and can serve as a foundation for building higher-level
libraries and abstractions.

\section{Acknowledgements}

We would like to thank Ege Kabasakaloglu for writing the single-pass parallel
prefix scan with decoupled look-back in \name{}, Amanda Liu for early
conversations about \name{}’s formalization, and Adam Chlipala for helping us
crystalize the \code{partition} construct. Paul Biberstein, Christophe Gyurgyik,
Olivia Hsu, Fredrik Kjolstad, Scott Kovach, Benjamin Pierce, Alexander Root,
Adrian Sampson, Shiv Sundram, Stephanie Weirich, Rohan Yadav, and Bobby Yan
provided helpful feedback on many drafts of the paper.

\bibliographystyle{ACM-Reference-Format}
\bibliography{references}

\newpage

\appendix
\theoremstyle{definition}

\section{Complete \calcname Type System, Semantics, and Syntactic Soundness Proofs}\label{appendix:proofs}

\subsection{Basic Definitions}\label{appendix:proofs-defs}

\begin{align*}
  \text{Hierarchy Levels } \hier &\defas \grid \alt \block \alt \thread \\
  \text{Memory Kinds } l &\defas \local \alt \shared \alt \globalkind \\
  \text{\Unit{}s } \capa &: \hier \times \mathbb{N} \\
  \text{Base Types } b &\defas \code{bool} \alt \code{int} \alt \code{float} \\
  \text{Types } \tau &\defas b
  \alt b[]^l \alt \fun{\Gamma}{\capa}{\mem} \alt \async{\tau}\\
  \text{Contexts } \Gamma &\defas \cdot \alt \Gamma, x :^{\capa} \tau \\
  \text{Shared Memory Remaining } \mem &: \mathbb{N} \\
\end{align*}

\Unit{}s $\capa$ are part of an algebra parameterized over some constant values $\threads$ (the number of threads per block) and $\blocks$ (the number of blocks per grid). With these values, we have two isomorphisms:
$$
(\block, 1) \cong (\thread, \threads)
$$
and
$$
(\grid, 1) \cong (\block, \blocks)
$$

The \replicom and \splitcom operations of \name allows us to move along this isomorphism,
from left to right.
For clarity, in \calcname we split these operations into three:
a \splitcom operation that can split \unit{}s into multiple narrower ones,
and a \destructcom operation that directly moves us along the isomorphism, and
a \replicom that divides our current \unit{} into equal sized parts.

\Unit{}s $\capa$ are also lexicographically ordered in the obvious way.
$\hier$s are ordered such that $\thread \leq \block \leq \grid$,
and $(\hier_1, n_1) \leq (\hier_2, n_2)$ iff $n_1 \alt n_2$ and $\hier_1 \leq \hier_2$.

We define scalar multiplication $i \times h$ of natural numbers with $\hier$s:
$i \times (h, n) = (h, in)$.

We also define division of \unit{}s and hierarchy levels of type
$\capa \times \capa \rightarrow \mathbb{N}$.
$\grid / \block = \blocks$ and $\block / \thread = \threads$.
We lift this to \unit{}s like so: $(h_1, n_1) / (h_2, n_2) = ((h_1 / h_2) \cdot n_1) / n_2$.

Lastly we define a partial $\downarrow$ operator on $\hier$s such that $\downarrow \block = \thread$ and $\downarrow \grid = \block$.
Note that $\downarrow \thread$ is undefined.
This operator lifts to $\capa$s whose second component is 1 and encodes the leftward component of the isomorphism above:
$\downarrow (\block, 1) = (\thread, \threads)$ and
$\downarrow (\grid, 1) = (\block, \blocks)$.

Note also that the presentation of these rules in the main body of the paper elide the
$\mem$ portion, which tracks the maximum amount of memory a given computation is allowed to use.
In the full system presented here, both the typing rules and the operational semantics
carry an additional piece of information tracking allocated memory.

\subsection{Complete \calcname Typing Rules}\label{appendix:proofs-types}

\subsubsection{Expressions}

\small

$$
\infer[\textbf{T-Var}]
{\Gamma \vdash^{\capa} x : \tau}
{x :^{\capa} \tau \in \Gamma} \qquad
\infer[\textbf{T-Int}]
{\Gamma \vdash^{\capa} n : \code{int}}
{} \qquad
\infer[\textbf{T-Float}]
{\Gamma \vdash^{\capa} f : \code{float}}
{}
$$

$$
\infer[\textbf{T-Bool}]
{\Gamma \vdash^{\capa} b : \code{bool}}
{} \qquad
\infer[\textbf{T-Partition-Id}]
{\Gamma \vdash^{\capa} \code{partition\_id} : \code{int}}
{\capa < (\grid, 1)}
$$

$$
\infer[\textbf{T-Arr-Access}]
{\Gamma \vdash^{\capa} e_1[e_2] : \tau}
{\Gamma \vdash^{\capa'} e_1 : \tau[]^l & \Gamma \vdash^{\capa} e_2 : \code{int} & l = \globalkind \code{ or } l = \local & \capa \leq \capa'}
$$

$$
\infer[\textbf{T-Arr-Access-Shared}]
{\Gamma \vdash^{\capa} e_1[e_2] : \tau}
{\Gamma \vdash^{\capa'} e_1 : \tau[]^{\shared} & \Gamma \vdash^{\capa} e_2 : \code{int} & \capa \leq (\block, 1) & \capa \leq \capa'}
$$

$$
\infer[\textbf{T-Bop}]
{\Gamma \vdash^{\capa} e_1 \code{ bop } e_2 : \code{int}}
{\Gamma  \vdash^{\capa} e_1 : \code{int} & \Gamma \vdash^{\capa} e_2 : \code{int}} \qquad
\infer[\textbf{T-Cmp}]
{\Gamma  \vdash^{\capa} e_1 \code{ cmp } e_2 : \code{bool}}
{\Gamma  \vdash^{\capa} e_1 : \code{int} & \Gamma \vdash^{\capa} e_2 : \code{int}}
$$

\normalsize

\subsubsection{Statements}

\small

$$
\infer[\textbf{T-Function-Call}]
{\Gamma  \vdash^{\capa}_\mem f(e_1, \dots, e_n)}
{f :^\capa \fun{x_i : \tau_i}{\capa}{\mem'} \in \Gamma & \Gamma \vdash^{\capa} e_i : \tau_i &
\mem' \leq \mem}
$$

$$
\infer[\textbf{T-Split}]
{\Gamma \vdash^{(\hier, n)}_{\mem} \splitcom (n_1, n_2) \{ s_1 \} \{ s_2 \}}
{\Gamma \vdash^{(\hier, n_1)}_\mem s_1 & \Gamma \vdash^{(\hier, n_2)}_\mem s_2 & n_1, n_2 \alignto n} \qquad
\infer[\textbf{T-Destruct}]
{\Gamma \vdash^{\capa}_{\mem} \destructcom \code{ in } s}
{\Gamma \vdash^{\downarrow \capa}_\mem s}
$$

$$
\infer[\textbf{T-Group}]
{\Gamma  \vdash^{(\hier, q \cdot n)}_\mem \replicom \code{ } q \code{ }s}
{\Gamma  \vdash^{(\hier, n)}_\mem s} \qquad
\infer[\textbf{T-Assn}]
{\Gamma \vdash^{\capa}_\mem x = e}
{x :^{\capa'} \tau \in \Gamma & \Gamma \vdash^{\capa} e : \tau & \capa' \leq \capa
}
$$

$$
\infer[\textbf{T-Sync-Init}]
{\Gamma \vdash^{\capa}_\mem \code{init}_\psi}
{} \qquad
\infer[\textbf{T-Sync-Dec}]
{\Gamma \vdash^{\capa}_\mem \code{dec}_\psi}
{}
$$

$$
\infer[\textbf{T-Sync-Wait}]
{\Gamma \vdash^{\capa}_\mem \code{wait}_\psi}
{} \qquad
\infer[\textbf{T-Skip}]
{\Gamma \vdash^{\capa}_\mem \skipcom}
{} \qquad
\infer[\textbf{T-Free}]
{\Gamma \vdash^{\capa}_{\mem} \code{free } n}
{n \leq \mem}
$$

$$
\infer[\textbf{T-Decl}]
{\Gamma \vdash^{\capa}_\mem x : \tau\code{ @ }\capa' = e \code{ in } s}
{\Gamma \vdash^{\capa} e : \tau & \Gamma, x :^{\capa'} \tau \vdash^{\capa}_\mem s & \capa' \leq \capa & \tau \text{ not an array type } }
$$

$$
\infer[\textbf{T-Arr-Assn}]
{\Gamma \vdash^{\capa}_\mem e_1[e_2] = e_3}
{\Gamma \vdash^{\capa'} e_1 : \tau[]^l & \Gamma \vdash^{\capa} e_2 : \code{int} & \Gamma \vdash^{\capa} e_3 : \tau & l = \globalkind \code{ or } l = \local & \capa' \leq \capa 
}
$$

$$
\infer[\textbf{T-Arr-Assn-Shared}]
{\Gamma \vdash^{\capa}_\mem e_1[e_2] = e_3}
{\Gamma \vdash^{\capa'} e_1 : \tau[]^{\shared} & \Gamma \vdash^{\capa} e_2 : \code{int} & \Gamma \vdash^{\capa} e_3 : \tau & \capa \leq (\block, 1) & \capa' \leq \capa 
}
$$

$$
\infer[\textbf{T-If}]
{\Gamma \vdash^{\capa}_{\textbf{max}(\mem_1, \mem_2)} \code{if } e \code{ then } s_1 \code{ else } s_2}
{\Gamma \vdash^{\capa} e : \code{bool} & \Gamma \vdash^{\capa}_{\mem_1} s_1 & \Gamma \vdash^{\capa}_{\mem_2} s_2}
$$

$$
\infer[\textbf{T-While}]
{\Gamma \vdash^{\capa}_m \code{while } e \code{ do } s}
{\Gamma \vdash^{\capa} e : \code{bool} & \Gamma \vdash^{\capa}_m s}
\qquad
\infer[\textbf{T-Seq}]
{\Gamma \vdash^{\capa}_{\textbf{max}(\mem_1, \mem_2)} s_1; s_2}
{\Gamma \vdash^{\capa}_{\mem_1} s_1 & \Gamma \vdash^{\capa}_{\mem_2} s_2}
$$

$$
\infer[\textbf{T-Alloc}]
{\Gamma  \vdash^{\capa}_{\mem + n\cdot\code{size}(\tau)} x := \code{alloc } l \code{ } \tau \code{ } n \code{ in } s}
{\Gamma, x :^{\capa} \tau[]^l  \vdash^{\capa}_{\mem} s & l = \globalkind \code{ or } l = \local & }
$$

$$
\infer[\textbf{T-Alloc-Shared}]
{\Gamma  \vdash^{(\block, 1)}_{\mem + n\cdot\code{size}(\tau)} x := \code{alloc } \shared \code{ } \tau \code{ } n \code{ in } s}
{\Gamma, x :^{(\block, 1)} \tau[]^\shared  \vdash^{\capa}_{\mem} s & }
$$

$$
\infer[\textbf{T-Partition}]
{\Gamma, x :^{(\hier, n)} \tau[]^l \vdash^{(\hier, n)}_\mem \partitioncmd{\psi}{x}{y}{c}{s}}
{\Gamma, y :^{(\hier, n/c)} \tau[]^l \vdash^{(\hier, n)}_{\mem} s & c \alt n & l \neq \local}
$$

$$
\infer[\textbf{T-Claim}]
{\Gamma, x :^{(\hier, n)} \tau[]^l \vdash^{(\hier, n)}_\mem \claim{\psi}{x}{y}{n'}{s}}
{\Gamma, y :^{(\hier, n')} \tau[]^l \vdash^{(\hier, n')}_{\mem} s & n' \leq n & l \neq \local}
$$

$$
\infer[\textbf{T-Lower}]
{\Gamma, x :^{\capa} \tau[]^l \vdash^{\capa}_\mem \down{\psi} \code{ }x \code{ into  } y \code{ in } s}
{\Gamma, y :^{\downarrow \capa} \tau[]^l \vdash^{\capa}_{\mem} s & l \neq \local}
$$

$$
\infer[\textbf{T-Async-Partition}]
{\Gamma, x :^{(\thread, 1)} \tau[]^l \vdash^{(\thread, 1)}_\mem \asyncp{\phi}{x}{y}{s}}
{\Gamma, y :^{(\thread, 1)} \async{\tau[]^l} \vdash^{(\thread, 1)}_{\mem} s}
$$

$$
\infer[\textbf{T-Async-Memcpy}]
{\Gamma, x :^{(\thread, 1)} \async{\tau[]^l}, y :^{(\thread, 1)} \tau[]^{l'} \vdash^{(\thread, 1)}_\mem \asyncmemcpy{x}{y}}
{}
$$

$$
\infer[\textbf{T-Memcpy}]
{\Gamma, x :^\capa \tau[]^l, y :^\capa  \tau[]^{l'} \vdash^\capa_\mem \memcpy{x}{y}}
{}
$$

\normalsize

\subsection{Complete \calcname Semantics}\label{appendix:proofs-semantics}

\subsubsection{Definitions}

\begin{align*}
  \text{Global Memory } \Sigma &\defas \cdot \alt \Sigma, n \mapsto^{\capa} v \\
  \text{Shared Memory } \sigma &\defas \cdot \alt \sigma, n \mapsto^{\capa} v \\
  \text{Local Memory } \eta &\defas \cdot \alt \eta, n \mapsto^{\capa} v \\
  \\
  \text{Block Memory Map } S &\defas \forall n \in \blocks, n \mapsto \sigma \\
  \text{Thread Memory Map } L &\defas \forall n \in \threads, n \mapsto \eta \\
  \\
  \text{Synchronization Map } \Psi &: \psi \rightarrow \mathbb{N} \rightarrow \mathbb{N}\\
  \text{Deferred Computations Map } \Phi &: \phi \rightarrow \{s\}
\end{align*}

In real GPUs, thread IDs are only unique within their block. However,
in this calculus for simplicity we assume thread IDs are global. One can
convert back and forth between this abstracted notion of a thread ID and a block-unique ID via addition modulo $\threads$. That is, $t_{\text{real}} = t_{\text{simplified}} \textbf{ mod }  \threads$ and $t_{\text{simplified}} = t_{\text{real}} + b \cdot \threads$.

By convention the names for local and shared and global memory do not conflict, as on the GPU they will be separate pointer spaces. Additionally, we freely interchange between using names for variables and integer locations.

In the main body of the paper, for simplicity we elide the synchronization map and the
deferred computation map from the operational semantics, as our theorems do not make
any guarantees about non-interference. However, as they are part of the full semantics,
we include them here for completeness.
By convention the synchronization and deferred computation maps are a total functions,
initialized to map to $\lambda \_. 0$ for $\psi$s and
$\lambda \_. \{\}$ for $\phi$s not explicitly initialize.

The shape of the judgment for a single thread is $\eta, \sigma, \Sigma, t, b, p, \Psi \vdash^\capa_\mem s \stepsto s' \dashv_{\mem'} \eta', \sigma', \Sigma', \Psi'$.
The $t$ here is the thread ID, the $b$ is the block ID, and the $p$ is the \unit{} ID. The last of these three is modified and managed by the rules for \splitcom, \replicom and \destructcom, and tracks the relative position of the thread within a group. This semantics is in a small step style.

The shape of the judgment for expressions is $\eta, \sigma, \Sigma \vdash^\capa_{\capa'} e \Downarrow v$. The two $\capa$s represent the ambient compute context (i.e., the context in which resources are being read), while $\capa'$, represents the target compute context (i.e, the compute context of the variable into which the result of the expression is going to be written. This is relevant for computing the value of \code{partition\_id}, which divides the two contexts.
  As a shorthand, we can divide a \unit{} by a scalar value like so: $(\hier, n)/c = (\hier, n) / (\hier, c)$.

  The overall evaluation of a program is expressed as
  $$L, S, \Sigma, P, \Psi, \Phi \stepsto L', S', \Sigma', P', \Psi', \Phi'.$$ In this judgment $P$ serves as a \textit{thread pool}, mapping pairs of thread and block IDs (which don't change) to statements and memory (which can be updated by stepping). One can think of $P$ as tracking which program is running on each thread. This steps according to the following rule:

  $$
  \infer[\textbf{S-Program}]
  {L, S, \Sigma, P, \Psi, \Phi \stepsto L[t \mapsto \eta'], S[b \mapsto \sigma'], \Sigma', P[(t, b) \mapsto (s', \mem')], \Psi', \Phi'}
  {L(t), S(b), \Sigma, t, b, 0, \Psi, \Phi \vdash^{(\grid, 1)}_{\mem} s \stepsto s' \dashv_{\mem'} \eta', \sigma', \Sigma', \Psi', \Phi' & P(t, b) = (s, \mem)}
  $$

  For simplicity of notation, we define an \update operation that searches the three environments for the one that contains the variable being used (by convention, there is no conflict between the environments, as in reality they exist in three separate address spaces). We also define a similar \get operation that retrieves a variable from memory, and a \rename operation that remaps a variable with the same value but under a different name.

  \begin{align*}
    \update(\eta, \sigma, \Sigma, x, v) &= (\eta[x \mapsto^\capa v], \sigma, \Sigma) \textit{ when } x \in^\capa \eta \\
    \update(\eta, \sigma, \Sigma, x, v) &= (\eta, \sigma[x \mapsto^\capa v], \Sigma) \textit{ when } x \in^\capa \sigma \\
    \update(\eta, \sigma, \Sigma, x, v) &= (\eta, \sigma, \Sigma[x \mapsto^\capa v]) \textit{ when } x \in^\capa \Sigma
  \end{align*}
  \begin{align*}
    \get(\eta, \sigma, \Sigma, x) &= \eta(x) \textit{ when } x \in^\capa \eta \\
    \get(\eta, \sigma, \Sigma, x) &= \sigma(x) \textit{ when } x \in^\capa \sigma \\
    \get(\eta, \sigma, \Sigma, x) &= \Sigma(x) \textit{ when } x \in^\capa \Sigma
  \end{align*}
  \begin{align*}
    \rename(\eta, \sigma, \Sigma, x, y, \capa') &= (\eta[y \mapsto^{\capa'} \eta(x)], \sigma, \Sigma) \textit{ when } x \in^\capa \eta \\
    \rename(\eta, \sigma, \Sigma, x, y, \capa') &= (\eta, \sigma[y \mapsto^{\capa'} \sigma(x)], \Sigma) \textit{ when } x \in^\capa \sigma \\
    \rename(\eta, \sigma, \Sigma, x, y, \capa') &= (\eta, \sigma, \Sigma[y \mapsto^{\capa'} \Sigma(x)]) \textit{ when } x \in^\capa \Sigma \\
  \end{align*}

  \subsubsection{\Unit{} Management Rules}

  \small

  $$
  \infer[\textbf{S-Split-Left}]
  {\eta, \sigma, \Sigma, t, b, p, \Psi, \Phi \vdash^{(\hier, n)}_{\mem} \splitcom(n_1, n_2) \{s_1\} \{s_2\} \stepsto \splitcom(n_1, n_2) \{s_1'\} \{s_2\} \dashv_{\mem'} \eta', \sigma', \Sigma', \Psi', \Phi'}
  {p < n_1 & n_1, n_2 \alignto n & \eta, \sigma, \Sigma, t, b, p, \Psi, \Phi \vdash^{(\hier, n_1)}_{\mem} s_1 \stepsto
  s_1' \dashv_{\mem'} \eta', \sigma', \Sigma', \Psi', \Phi'}
  $$

  $$
  \infer[\textbf{S-Split-Left-Done}]
  {\eta, \sigma, \Sigma, t, b, p, \Psi, \Phi \vdash^{(\hier, n)}_{\mem} \splitcom(n_1, n_2) \{\skipcom\} \{s_2\} \stepsto \skipcom\dashv_{\mem} \eta, \sigma, \Sigma, \Psi, \Phi}
  {p < n_1 & n_1, n_2 \alignto n}
  $$

  $$
  \infer[\textbf{S-Split-Right}]
  {\eta, \sigma, \Sigma, t, b, p, \Psi, \Phi \vdash^{(\hier, n)}_{\mem} \splitcom(n_1, n_2) \{s_1\} \{s_2\} \stepsto \splitcom(n_1, n_2) \{s_1\} \{s_2'\} \dashv_{\mem'} \eta', \sigma', \Sigma', \Psi', \Phi'}
  {p \geq n_1 & p < n_1 + n_2 & n_1, n_2 \alignto n & \eta, \sigma, \Sigma, t, b, p - n_1, \Psi, \Phi \vdash^{(\hier, n_2)}_{\mem} s_2 \stepsto s_2' \dashv_{\mem'} \eta', \sigma', \Sigma', \Psi', \Phi'}
  $$

  $$
  \infer[\textbf{S-Split-Right-Done}]
  {\eta, \sigma, \Sigma, t, b, p, \Psi, \Phi \vdash^{(\hier, n)}_{\mem} \splitcom(n_1, n_2) \{s_1\} \{\skipcom\} \stepsto \skipcom\dashv_{\mem} \eta, \sigma, \Sigma, \Psi, \Phi}
  {p \geq n_1 & p < n_1 + n_2 & n_1, n_2 \alignto n}
  $$

  $$
  \infer[\textbf{S-Split-None}]
  {\eta, \sigma, \Sigma, t, b, p, \Psi, \Phi \vdash^{(\hier, n)}_{\mem} \splitcom(n_1, n_2) \{s_1\} \{s_2\} \stepsto
  \skipcom \dashv_{\mem} \eta, \sigma, \Sigma, \Psi, \Phi}
  {p \geq n_1 + n_2 & n_1, n_2 \alignto n}
  $$

  $$
  \infer[\textbf{S-Destruct-Block}]
  {\eta, \sigma, \Sigma, t, b, 0, \Psi, \Phi \vdash^{(\block, 1)}_{\mem} \destructcom \code{ in } s \stepsto \destructcom \code{ in } s'
  \dashv_{\mem'} \eta', \sigma', \Sigma', \Psi', \Phi'}
  {\eta, \sigma, \Sigma, t, b, t \textbf{ mod } \threads, \Psi, \Phi \vdash^{(\thread, \threads)}_{\mem} s \stepsto s'\dashv_{\mem'} \eta', \sigma', \Sigma', \Psi', \Phi'}
  $$

  $$
  \infer[\textbf{S-Destruct-Grid}]
  {\eta, \sigma, \Sigma, t, b, 0, \Psi, \Phi \vdash^{(\grid, 1)}_{\mem} \destructcom \code{ in } s \stepsto \destructcom \code{ in } s'
  \dashv_{\mem'} \eta', \sigma', \Sigma', \Psi', \Phi'}
  {\eta, \sigma, \Sigma, t, b, b \textbf{ mod } \blocks, \Psi, \Phi \vdash^{(\block, \blocks)}_{\mem} s \stepsto s'\dashv_{\mem'} \eta', \sigma', \Sigma', \Psi', \Phi'}
  $$

  $$
  \infer[\textbf{S-Destruct-Done}]
  {\eta, \sigma, \Sigma, t, b, 0, \Psi, \Phi \vdash^\capa_{\mem} \destructcom \code{ in skip}  \stepsto \skipcom
  \dashv_{\mem} \eta, \sigma, \Sigma, \Psi, \Phi}
  {}
  $$

  $$
  \infer[\textbf{S-Group}]
  {\eta, \sigma, \Sigma, t, b, p, \Psi, \Phi \vdash^{(\hier, q \cdot n)}_{\mem} \replicom \code{ } q \code{ } s \stepsto \replicom \code{ } q \code{ } s'; \dashv_{\mem'} \eta', \sigma', \Sigma', \Psi', \Phi'}
  {\eta, \sigma, \Sigma, t, b, p \textbf{ mod } n, \Psi, \Phi \vdash^{(\hier, n)}_{\mem} s \stepsto s' \dashv_{\mem'} \eta', \sigma', \Sigma', \Psi', \Phi'}
  $$

  $$
  \infer[\textbf{S-Group-Done}]
  {\eta, \sigma, \Sigma, t, b, p, \Psi, \Phi \vdash^\capa_{\mem} \replicom \code{ } q \code{ skip} \stepsto \skipcom; \dashv_{\mem} \eta, \sigma, \Sigma, \Psi, \Phi}
  {}
  $$

  \normalsize

  \subsubsection{Thread Synchronization}

  We define a \textbf{size} operation on \unit{}s to compute the size of a \unit (the number of individual compute resources sharing it). 
  The operation is defined thusly:

  \begin{align*}
    \textbf{size}(\thread, n) &= n \\
    \textbf{size}(\block, n) &= n \cdot \threads \\
    \textbf{size}(\grid, n) &= n \cdot \blocks \cdot \threads \\
  \end{align*}

  \small

  $$
  \infer[\textbf{S-Sync-Wait-Done}]
  {\eta, \sigma, \Sigma, t, b, p, \Psi, \Phi \vdash^\capa_{\mem} \code{wait}_\psi \stepsto \skipcom \dashv_{\mem} \eta, \sigma, \Sigma, \Psi, \Phi}
  {\Psi(\psi)(p) = 0}
  $$

  $$
  \infer[\textbf{S-Sync-Wait-Spin}]
  {\eta, \sigma, \Sigma, t, b, p, \Psi, \Phi \vdash^\capa_{\mem} \code{wait}_\psi \stepsto \code{wait}_\psi \dashv_{\mem} \eta, \sigma, \Sigma, \Psi, \Phi}
  {\Psi(\psi)(p) \neq 0}
  $$

  $$
  \infer[\textbf{S-Sync-Dec}]
  {\eta, \sigma, \Sigma, t, b, p, \Psi, \Phi \vdash^\capa_{\mem} \code{dec}_\psi \stepsto \skipcom \dashv_{\mem} \eta, \sigma, \Sigma, \Psi', \Phi}
  {\Psi' = \Psi(\psi)[p \mapsto \Psi(\psi)(p) - 1]}
  $$

  $$
  \infer[\textbf{S-Sync-Init-Zero}]
  {\eta, \sigma, \Sigma, t, b, p, \Psi, \Phi \vdash^\capa_{\mem} \code{init}_\psi \stepsto \skipcom \dashv_{\mem} \eta, \sigma, \Sigma, \Psi', \Phi}
  {\Psi(\psi)(p) = 0 & \Psi' = \Psi(\psi)[p \mapsto \textbf{size}(\capa)]}
  $$

  $$
  \infer[\textbf{S-Sync-Init-Nonzero}]
  {\eta, \sigma, \Sigma, t, b, p, \Psi, \Phi \vdash^\capa_{\mem} \code{init}_\psi \stepsto \skipcom \dashv_{\mem} \eta, \sigma, \Sigma, \Psi, \Phi}
  {\Psi(\psi)(p) \neq 0}
  $$

  \normalsize

  \subsubsection{Asynchrony}

  \small

  $$
  \infer[\textbf{S-Async-Partition-Congr}]
  {\deduce{\asyncp{\phi}{x}{y}{s'} \dashv_{\mem} \eta', \sigma', \Sigma', \Psi', \Phi'}{\eta, \sigma, \Sigma, t, b, p, \Psi, \Phi \vdash^{(\thread, 1)}_{\mem'} \asyncp{\phi}{x}{y}{s} \stepsto}}
  {\rename(\eta, \sigma, \Sigma, x, y, (\thread, 1)), t, b, p, \Psi, \Phi \vdash^{(\thread, 1)}_{\mem} s \stepsto s' \dashv_{\mem'} \eta', \sigma', \Sigma', \Psi', \Phi'}
  $$

  $$
  \infer[\textbf{S-Async-Partition-Unwind}]
  {\deduce{(\asyncp{\phi}{x}{y}{s}) \dashv_{\mem} \eta, \sigma, \Sigma, \Psi, \Phi'}{\eta, \sigma, \Sigma, t, b, p, \Psi, \Phi \vdash^{(\thread, 1)}_{\mem} \asyncp{\phi}{x}{y}{\skipcom} \stepsto}}
  {\Phi = \Phi'[\phi \mapsto \Phi'(\phi) \cup \{s\}]}
  $$

  $$
  \infer[\textbf{S-Async-Partition-Done}]
  {\deduce{\skipcom \dashv_{\mem} \eta, \sigma, \Sigma, \Psi, \Phi}{\eta, \sigma, \Sigma, t, b, p, \Psi, \Phi \vdash^{(\thread, 1)}_{\mem} \asyncp{\phi}{x}{y}{\skipcom} \stepsto}}
  {\Phi(\phi) = \emptyset}
  $$

  $$
  \infer[\textbf{S-Async-Memcpy}]
  {\deduce{\skipcom \dashv_{\mem} \eta, \sigma, \Sigma, \Psi, \Phi[\phi \mapsto \Phi(\phi) \cup \{ \memcpy{x}{y} \}]}{\eta, \sigma, \Sigma, t, b, p, \Psi, \Phi \vdash^{(\thread,1)}_{\mem} \asyncmemcpy{x}{y} \stepsto}}
  {}
  $$

  $$
  \infer[\textbf{S-Memcpy}]
  {\eta, \sigma, \Sigma, t, b, p, \Psi, \Phi \vdash^\capa_{\mem} \memcpy{x}{y} \stepsto \skipcom \dashv_{\mem} \eta', \sigma', \Sigma', \Psi, \Phi}
  {(\eta', \sigma', \Sigma') = \update(\eta, \sigma, \Sigma, x, \get(\eta, \sigma, \Sigma, y))}
  $$

  \normalsize

  \subsubsection{Variables and Memory}

  \small

  $$
  \infer[\textbf{S-Decl}]
  {\eta, \sigma, \Sigma, t, b, p, \Psi, \Phi \vdash^\capa_{\mem} x : \tau \code{ @ } \capa' := e \code{ in } s \stepsto s \dashv_{\mem} \eta[x \mapsto^{\capa'} v], \sigma, \Sigma, \Psi, \Phi}
  {\eta, \sigma, \Sigma \vdash^\capa_{\capa'} e \Downarrow v & \capa' \leq \capa}
  $$

  $$
  \infer[\textbf{S-Free}]
  {\eta, \sigma, \Sigma, t, b, p, \Psi, \Phi \vdash^\capa_{\mem} \code{free } n \dashv_{\mem-n} \eta, \sigma, \Sigma, \Psi, \Phi}
  {}
  $$

  $$
  \infer[\textbf{S-Alloc-Local}]
  {\deduce{s;  \code{ free } (n \cdot \text{size}(\tau)) \dashv_{\mem + n \cdot \text{size}(\tau)} \eta[x \mapsto^\capa \langle x, n \rangle], \sigma, \Sigma, \Psi, \Phi}{\eta, \sigma, \Sigma, t, b, p, \Psi, \Phi \vdash^\capa_{\mem} x := \code{alloc } \local \code{ } \tau \code{ } n \code{ in } s \stepsto}}
  {}
  $$

  $$
  \infer[\textbf{S-Alloc-Shared}]
  {\deduce{s;  \code{ free } (n \cdot \text{size}(\tau)) \dashv_{\mem + n \cdot \text{size}(\tau)} \eta, \sigma[x \mapsto^\capa \langle x, n \rangle], \Sigma, \Psi, \Phi}{\eta, \sigma, \Sigma, t, b, p, \Psi, \Phi \vdash^\capa_{\mem} x := \code{alloc } \shared \code{ } \tau \code{ } n \code{ in } s \stepsto}}
  {\capa = (\block, 1)}
  $$

  $$
  \infer[\textbf{S-Alloc-Global}]
  {\deduce{s;  \code{ free } (n \cdot \text{size}(\tau)) \dashv_{\mem + n \cdot \text{size}(\tau)} \eta, \sigma, \Sigma[x \mapsto^\capa \langle x, n \rangle], \Psi, \Phi}{\eta, \sigma, \Sigma, t, b, p, \Psi, \Phi \vdash^\capa_{\mem} x := \code{alloc } \globalkind \code{ } \tau \code{ } n \code{ in } s \stepsto}}
  {}
  $$

  $$
  \infer[\textbf{S-Assn}]
  {\eta, \sigma, \Sigma, t, b, p, \Psi, \Phi, \Phi \vdash^\capa_{\mem} x = e \stepsto \skipcom \dashv_{\mem} \eta', \sigma', \Sigma', \Psi, \Phi}
  {x \in^{\capa'} \eta, \sigma, \Sigma & \eta, \sigma, \Sigma \vdash^\capa_{\capa'} e \Downarrow v & (\eta', \sigma', \Sigma') = \update(\eta, \sigma, \Sigma, x, v) & \capa' \leq \capa}
  $$

  $$
  \infer[\textbf{S-Arr-Assn}]
  {\eta, \sigma, \Sigma, t, b, p, \Psi, \Phi \vdash^\capa_{\mem} e_1[e_2] = e_3  \stepsto \skipcom \dashv_{\mem} \eta', \sigma', \Sigma', \Psi, \Phi}
  {\deduce{\deduce{\eta, \sigma, \Sigma \vdash^\capa_{\capa'} e_3 \Downarrow v}{\eta, \sigma, \Sigma, t, b, p \vdash^\capa_{\capa} e_2 \Downarrow i}}{\eta, \sigma, \Sigma, t, b, p \vdash^\capa_{\capa} e_1 \Downarrow \langle l, n \rangle} &
  \deduce{\deduce{(\eta', \sigma', \Sigma') = \update(\eta, \sigma, \Sigma, l + i, v)}{x \in^{\capa'} \eta, \sigma, \Sigma}}{i < n & \capa' \leq \capa}}
  $$

  $$
  \infer[\textbf{S-Partition}]
  {\eta, \sigma, \Sigma, t, b, p, \Psi, \Phi \vdash^\capa_{\mem} \partition{\psi}{x}{y}{c}{s} \stepsto \code{init}_\psi;  s'; \code{dec}_\psi; \code{wait}_\psi \dashv_{\mem} \eta', \sigma', \Sigma', \Psi, \Phi}
  {s' = s[(y + c \cdot p)/ y] & (\eta', \sigma', \Sigma') = \rename(\eta, \sigma, \Sigma, x, y, \capa/c)}
  $$

  $$
  \infer[\textbf{S-Claim}]
  {\deduce{\code{init}_\psi;  \splitcom(n_1, n_2)\{s'\}\{\skipcom\}; \code{dec}_\psi; \code{wait}_\psi \dashv_{\mem} \eta', \sigma', \Sigma', \Psi, \Phi}{\eta, \sigma, \Sigma, t, b, p, \Psi, \Phi \vdash^{(\hier, n_1 + n_2)}_{\mem} \claim{\psi}{x}{y}{n_1}{s} \stepsto}}
  {(\eta', \sigma', \Sigma') = \rename(\eta, \sigma, \Sigma, x, y, (\hier, n_1))}
  $$

  $$
  \infer[\textbf{S-Lower}]
  {\eta, \sigma, \Sigma, t, b, p, \Psi, \Phi \vdash^\capa_{\mem} \down{\psi} \textbf{ } x \code{ into  } y \code{ in } s \stepsto \code{init}_\psi; s; \code{dec}_\psi; \code{wait}_\psi \dashv_{\mem} \eta', \sigma', \Sigma', \Psi, \Phi}
  {(\eta', \sigma', \Sigma') = \rename(\eta, \sigma, \Sigma, x, y, \downarrow\capa)}
  $$

  \normalsize

  \subsubsection{Control Flow}

  \small

  $$
  \infer[\textbf{S-If-True}]
  {\eta, \sigma, \Sigma, t, b, p, \Psi, \Phi \vdash^\capa_{\mem} \code{if } e \code{ then } s_1 \code{ else } s_2 \stepsto s_1 \dashv_{\mem} \eta, \sigma, \Sigma, \Psi, \Phi}
  {\eta, \sigma, \Sigma \vdash^\capa_{\capa} e \Downarrow \code{true}}
  $$

  $$
  \infer[\textbf{S-If-False}]
  {\eta, \sigma, \Sigma, t, b, p, \Psi, \Phi \vdash^\capa_{\mem} \code{if } e \code{ then } s_1 \code{ else } s_2 \stepsto s_2 \dashv_{\mem} \eta, \sigma, \Sigma, \Psi, \Phi}
  {\eta, \sigma, \Sigma\vdash^\capa_{\capa} e \Downarrow \code{false}}
  $$

  $$
  \infer[\textbf{S-While}]
  {\eta, \sigma, \Sigma, t, b, p, \Psi, \Phi \vdash^\capa_{\mem} \code{while } e \code{ do } s \stepsto \code{if } e \code{ then } (s; \code{ while } e \code{ do } s) \code{ else } \skipcom \dashv_{\mem} \eta, \sigma, \Sigma, \Psi, \Phi}
  {}
  $$

  $$
  \infer[\textbf{S-Seq-First}]
  {\eta, \sigma, \Sigma, t, b, p, \Psi, \Phi \vdash^\capa_{\mem} s_1; s_2 \stepsto s_1'; s_2 \dashv^{\capa'}_{\mem'} \eta', \sigma', \Sigma', \Psi', \Phi'}
  {\eta, \sigma, \Sigma, t, b, p, \Psi, \Phi \vdash^\capa_{\mem} s_1 \stepsto s_1' \dashv^{\capa'}_{\mem'} \eta', \sigma', \Sigma', \Psi', \Phi'}
  $$

  $$
  \infer[\textbf{S-Seq-Done}]
  {\eta, \sigma, \Sigma, t, b, p, \Psi, \Phi \vdash^\capa_{\mem} \skipcom; s_2 \stepsto s_2 \dashv_{\mem} \eta, \sigma, \Sigma, \Psi, \Phi}
  {}
  $$

  $$
  \infer[\textbf{S-Function-Call}]
  {\eta, \sigma, \Sigma, t, b, p, \Psi, \Phi \vdash^\capa_{\mem} f(e_1, \dots, e_n) \stepsto s \dashv_{\mem} \eta[x_i \mapsto v_i], \sigma, \Sigma, \Psi, \Phi}
  {\Sigma(f)= \{[x_1 : \tau_1, \dots, x_n : \tau_n], s\}  &\sigma, \Sigma \vdash^\capa_{\capa} e_i \Downarrow v_i & \mem' \leq \mem}
  $$

  \normalsize

  \subsubsection{Expressions}

  \small

  $$
  \infer[\textbf{E-Partition-Id}]
  {\eta, \sigma, \Sigma \vdash^\capa_{\capa'} \code{partition\_id} \Downarrow \capa/\capa' - 1}
  {\capa < (\grid, 1) & \capa' \leq \capa}
  $$

  $$
  \infer[\textbf{E-Var}]
  {\eta, \sigma, \Sigma \vdash^\capa_{\capa'} x \Downarrow \get(\eta, \sigma, \Sigma, x)}
  {}
  $$

  $$
  \infer[\textbf{E-Arr-Access}]
  {\eta, \sigma, \Sigma \vdash^\capa_{\capa'} e_1[e_2] \Downarrow \get(\eta, \sigma, \Sigma, l + i)}
  {\eta, \sigma, \Sigma \vdash^\capa_{\capa'} e_1 \Downarrow \langle l, n \rangle & \eta, \sigma, \Sigma \vdash^\capa_{\capa'} e_2 \Downarrow i & i < n & \capa' \leq \capa}
  $$

  $$
  \infer[\textbf{E-Int}]
  {\eta, \sigma, \Sigma \vdash^\capa_{\capa'} \vdash n \Downarrow n}
  {} \qquad
  \infer[\textbf{E-Bool}]
  {\eta, \sigma, \Sigma \vdash^\capa_{\capa'} \vdash b \Downarrow b}
  {}
  $$

  $$
  \infer[\textbf{E-Bop}]
  {\eta, \sigma, \Sigma \vdash^\capa_{\capa'} \vdash e_1 \code{ bop } e_2 \Downarrow v}
  {\eta, \sigma, \Sigma \vdash^\capa_{\capa'} \vdash e_1 \Downarrow v_1 & \eta, \sigma, \Sigma \vdash^\capa_{\capa'} e_2 \Downarrow v_2 & v = v_1 \code{ bop } v_2}
  $$

  $$
  \infer[\textbf{E-Cmp}]
  {\eta, \sigma, \Sigma \vdash^\capa_{\capa'} \vdash e_1 \code{ cmp } e_2 \Downarrow v}
  {\eta, \sigma, \Sigma \vdash^\capa_{\capa'} \vdash e_1 \Downarrow v_1 & \eta, \sigma, \Sigma \vdash^\capa_{\capa'} e_2 \Downarrow v_2 & v = v_1 \code{ cmp } v_2}
  $$

  \normalsize

  \subsection{Theorems and Proofs}\label{appendix:proofs-proofs}

  Note that in this section we assume no out of bounds array accesses.
  In general \name (and by extension \calcname) makes no guarantees about array out of bounds.

  \subsubsection{More definitions}

  As a premise to our type safety theorems, we need to assume we have a well-typed environment, written $\Gamma \vdash \eta, \sigma, \Sigma$. We define what this means inductively

  \small

  $$
  \infer[\textbf{V-Int}]
  {\eta, \sigma, \Sigma \vdash n : \code{int}}{} \qquad
  \infer[\textbf{V-Bool}]
  {\eta, \sigma, \Sigma \vdash b : \code{bool}}{} \qquad
  \infer[\textbf{V-Float}]
  {\eta, \sigma, \Sigma \vdash f : \code{float}}{} \qquad
  $$

  $$
  \infer[\textbf{V-Array}]
  {\eta, \sigma, \Sigma \vdash \langle x, n \rangle : \tau[]^{l}}{\forall i < n, \eta, \sigma, \Sigma \vdash \get(\eta, \sigma, \Sigma, x + i) : \tau} \qquad
  \infer[\textbf{V-Function}]
  {\eta, \sigma, \Sigma \vdash \{x_i:\tau_i, s\} : \fun{x_i:\tau_i}{\capa}{\mem}}{\Gamma, x_i:^\capa\tau_i \vdash^\capa_\mem s & \Gamma \vdash \cdot, \cdot, \textbf{fns }\Sigma}
  $$

  $$
  \infer[\textbf{G-Empty}]
  {\cdot  \vdash \eta, \sigma, \Sigma}
  {} \qquad
  \infer[\textbf{G-Int}]
  {\Gamma, x :^\capa \code{int} \vdash \eta, \sigma, \Sigma}
  {\eta(x) =^\capa v & \eta, \sigma, \Sigma \vdash v : \code{int}}
  $$

  $$
  \infer[\textbf{G-Bool}]
  {\Gamma, x :^\capa \code{bool} \vdash \eta, \sigma, \Sigma}
  {\eta(x) =^\capa v & \eta, \sigma, \Sigma \vdash v : \code{bool}} \qquad
  \infer[\textbf{G-Float}]
  {\Gamma, x :^\capa \code{float} \vdash \eta, \sigma, \Sigma}
  {\eta(x) =^\capa v & \eta, \sigma, \Sigma \vdash v : \code{float}} \qquad
  $$

  $$
  \infer[\textbf{G-Local}]
  {\Gamma, x :^\capa \tau[]^{\local} \vdash \eta, \sigma, \Sigma}
  {\eta(x) =^\capa v & \eta, \sigma, \Sigma \vdash v : \tau[]} \qquad
  \infer[\textbf{G-Shared}]
  {\Gamma, x :^\capa \tau[]^{\shared} \vdash \eta, \sigma, \Sigma}
  {\sigma(x) =^\capa v & \eta, \sigma, \Sigma \vdash v : \tau[]}
  $$

  $$
  \infer[\textbf{G-Global}]
  {\Gamma, x :^\capa \tau[]^{\globalkind} \vdash \eta, \sigma, \Sigma}
  {\Sigma(x) =^\capa v & \eta, \sigma, \Sigma \vdash v : \tau[]} \qquad
  \infer[\textbf{G-Function}]
  {\Gamma, x :^\capa \fun{\Gamma'}{\capa}{\mem} \vdash \eta, \sigma, \Sigma}
  {\Sigma(x) =^\capa v & \eta, \sigma, \Sigma \vdash v : \fun{\Gamma'}{\capa}{\mem}}
  $$

  \normalsize

  We can prove a couple simple lemmas about well-typed environments under operations like \textbf{rename}, \textbf{update}, and \textbf{get}.

  \begin{lemma}{(Well-typed \get)}
    If $\Gamma \vdash \eta, \sigma, \Sigma$ and $x :^\capa \tau \in \Gamma$ then $\eta, \sigma, \Sigma \vdash \get(\eta, \sigma, \Sigma, x) : \tau$.
  \end{lemma}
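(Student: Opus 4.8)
The plan is to proceed by induction on the derivation of the well-typedness judgment $\Gamma \vdash \eta, \sigma, \Sigma$, which mirrors the structure of $\Gamma$ since each of the $\textbf{G-}$ rules extends the context by exactly one binding. The goal in each case is to show that whichever memory $\get$ ends up reading from returns a value whose type matches the annotation on $x$ in $\Gamma$.

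In the base case $\textbf{G-Empty}$, the context is empty, so the assumption $x :^\capa \tau \in \Gamma$ cannot hold and the claim is vacuously true. For the inductive step, the final rule introduces some binding $x' :^{\capa'} \tau'$, and I would split on whether the variable being looked up is this freshly added one ($x = x'$) or lies in the tail $\Gamma'$. When $x \neq x'$, I appeal to the induction hypothesis on the sub-derivation witnessing $\Gamma' \vdash \eta, \sigma, \Sigma$; since every $\textbf{G-}$ rule threads the same three memories through unchanged, and names across the three address spaces do not collide (by the convention stated alongside the definitions), the lookup $\get(\eta,\sigma,\Sigma,x)$ is unaffected by the binding for $x'$, so the hypothesis applies directly.

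The interesting case is $x = x'$, which I would dispatch by inverting whichever $\textbf{G-}$ rule was used, one per kind of type. The key observation is that the kind of $\tau'$ pins down both the memory in which the value is stored and which branch of $\get$ fires: $\textbf{G-Int}$, $\textbf{G-Bool}$, $\textbf{G-Float}$, and $\textbf{G-Local}$ place the value in $\eta$; $\textbf{G-Shared}$ places it in $\sigma$; and $\textbf{G-Global}$ and $\textbf{G-Function}$ place it in $\Sigma$. In each case the premise of the rule directly supplies $\eta,\sigma,\Sigma \vdash v : \tau'$ for the stored value $v$, and $\get$ retrieves exactly that $v$ from the matching memory, closing the case.

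The main obstacle — though it is a light one — is the bookkeeping that ties each type kind to the correct memory region and ensures $\get$ reads from the intended one. This is precisely where the disjoint-address-space convention is load-bearing: it guarantees that $x$ is resident in exactly one of $\eta$, $\sigma$, $\Sigma$, so the three defining clauses of $\get$ are mutually exclusive and the clause that fires is the one whose memory the relevant $\textbf{G-}$ rule populated. The value-typing conclusions themselves come for free by inversion, so no genuine calculation is required.
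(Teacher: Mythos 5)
The paper never actually proves this lemma: it is stated (together with the \rename and \update lemmas) as one of a ``couple simple lemmas about well-typed environments'' and then used as a black box inside the progress and preservation proofs. Your proposal therefore cannot be compared against a reference argument; what it does is supply the missing proof, and it does so correctly: induction on the environment-typing derivation (equivalently, on the structure of $\Gamma$), with the head-binding case dispatched by inversion of the relevant \textbf{G-} rule and the tail case by the induction hypothesis, using the disjoint-address-space convention to argue that the three defining clauses of \get are mutually exclusive and that the clause which fires is exactly the one populated by the inverted rule. One point is worth making explicit in your write-up: the \textbf{G-} rules as printed in the paper carry \emph{no} premise of the form $\Gamma \vdash \eta, \sigma, \Sigma$ for the tail of the context --- each rule checks only the newly added binding. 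Read literally, there is no sub-derivation to which your induction hypothesis could apply, and under that literal reading the lemma is in fact false (nothing constrains the values of variables buried in the tail; a stale binding $y :^{\capa} \code{bool}$ could sit over an integer in $\eta$). Your proof implicitly --- and correctly --- repairs the rules by treating them as recursive clauses over $\Gamma$, which is plainly the intended reading; saying so explicitly would strengthen the proof, since that repair is precisely what makes the induction well-founded.
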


  \begin{lemma}{(Well-typed \rename)}
    If $\Gamma, x:^\capa \tau \vdash \eta, \sigma, \Sigma$ then $\Gamma, x:^\capa \tau, y:^{\capa'} \tau \vdash \rename(\eta, \sigma, \Sigma, x, y, \capa')$.
  \end{lemma}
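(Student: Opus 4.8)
The plan is to proceed by a case analysis on the memory kind of $\tau$, which simultaneously fixes which clause of $\rename$ fires and which \textbf{G}-rule must introduce the new binding. Since $x :^{\capa} \tau$ appears in the well-typed context $\Gamma, x :^{\capa} \tau$, inversion on the environment-typing rules pins down where $x$ physically resides: base types and $\local$ arrays in $\eta$ (\textbf{G-Int}, \textbf{G-Bool}, \textbf{G-Float}, \textbf{G-Local}), $\shared$ arrays in $\sigma$ (\textbf{G-Shared}), and $\globalkind$ arrays and functions in $\Sigma$ (\textbf{G-Global}, \textbf{G-Function}). This is exactly the location that $\rename$ inspects, so its three clauses line up with the cases of the proof.

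Before the main argument I would establish a small weakening lemma: both value typing $\eta, \sigma, \Sigma \vdash v : \tau$ and environment well-typedness are preserved when one of the three memories is extended at a fresh location. For base types this is immediate, since \textbf{V-Int}, \textbf{V-Bool}, and \textbf{V-Float} do not consult memory at all; for array values the only premise of \textbf{V-Array} inspects the contents at the array's own $n$ locations, none of which is the fresh location, so the claim follows by a routine induction on the value-typing derivation.

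With weakening in hand the main argument is short. Write $(\eta', \sigma', \Sigma') = \rename(\eta, \sigma, \Sigma, x, y, \capa')$; in every case exactly one memory gains the fresh entry $y \mapsto^{\capa'} v$, where $v$ is the value $\rename$ reads out of $x$. By the preceding \emph{Well-typed} \get lemma, $\eta, \sigma, \Sigma \vdash \get(\eta, \sigma, \Sigma, x) : \tau$, and this \get value is precisely $v$. I would then use weakening twice: once to lift the hypothesis $\Gamma, x :^{\capa} \tau \vdash \eta, \sigma, \Sigma$ to $\Gamma, x :^{\capa} \tau \vdash \eta', \sigma', \Sigma'$ (the old bindings are untouched because $y$ is fresh), and once to obtain $\eta', \sigma', \Sigma' \vdash v : \tau$. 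Applying the \textbf{G}-rule that matches the shape of $\tau$ then introduces $y :^{\capa'} \tau$, discharging its lookup premise by construction of $\rename$ and its value-typing premise by the lifted typing of $v$.

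The one place that needs care is the freshness condition inside the weakening lemma: I must know that the location named $y$ does not alias the contents of any array already tracked by $\eta$, $\sigma$, or $\Sigma$, so that \get at every pre-existing location is unchanged by the $y$-extension. This is guaranteed by the standing convention that $\local$, $\shared$, and $\globalkind$ occupy disjoint address spaces and that freshly introduced names denote fresh locations; everything else is bookkeeping driven by the correspondence between $\tau$'s memory kind, the matching $\rename$ clause, and the matching \textbf{G}-rule.
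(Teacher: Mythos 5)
Your proof is correct. Note, though, that the paper itself offers no proof of this lemma: it is presented as one of ``a couple simple lemmas'' about well-typed environments, stated and then used without justification in the preservation proof. Your proposal therefore does not diverge from the paper's argument so much as supply the argument the paper leaves implicit, and it assembles the right ingredients: inversion on the environment-typing rules to locate $x$ (which aligns the case analysis with the three clauses of \rename{}), the \emph{Well-typed} \get{} lemma to type the copied value, a weakening lemma showing that extending one of the three memories at a fresh name preserves both value typing and environment well-typedness, and the matching \textbf{G}-rule to introduce $y :^{\capa'} \tau$. The freshness point you flag is indeed the only delicate step: because the paper freely conflates variable names with integer locations, one must know that binding $y$ does not perturb any array cell inspected by \textbf{V-Array}, and the disjoint-address-space convention is what discharges this. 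One small edge your weakening induction glosses over is the function-type case: \textbf{V-Function}'s premise consults the function bindings of $\Sigma$ (via the paper's unspecified \textbf{fns} operation), so renaming a function-typed variable would need an extra word of justification; but since the operational semantics only ever applies \rename{} to array-typed variables (\textbf{S-Partition}, \textbf{S-Claim}, \textbf{S-Lower}, \textbf{S-Async-Partition-Congr}), this does not affect any use of the lemma.
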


  \begin{lemma}{(Well-typed \update)}
    If $\Gamma \vdash \eta, \sigma, \Sigma$ and $\eta, \sigma, \Sigma \vdash v : \tau$ then $\Gamma, x:^\capa \tau \vdash \update(\eta, \sigma, \Sigma, x, v)$.
  \end{lemma}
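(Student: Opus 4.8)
The plan is to prove the lemma by a case analysis on the three defining clauses of $\update$, which are distinguished by which of the three memories $\eta$, $\sigma$, or $\Sigma$ currently contains the location $x$. Writing $(\eta', \sigma', \Sigma') = \update(\eta, \sigma, \Sigma, x, v)$, the key structural observation is that in each clause $\update$ modifies exactly one memory at exactly the location $x$, leaving the other two memories, and every other location, pointwise unchanged. The goal $\Gamma, x :^{\capa} \tau \vdash \eta', \sigma', \Sigma'$ is derived by the $\textbf{G-*}$ rule appropriate to the outermost shape of $\tau$ (one of $\textbf{G-Int}$, $\textbf{G-Bool}$, $\textbf{G-Float}$, $\textbf{G-Local}$, $\textbf{G-Shared}$, or $\textbf{G-Global}$), which peels off the rightmost binding $x :^{\capa} \tau$ and leaves the premise $\Gamma \vdash \eta', \sigma', \Sigma'$. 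I would discharge these two obligations separately.

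For the new binding, the chosen $\textbf{G-*}$ rule demands that the memory associated with $\tau$'s memory kind map $x$ to some value $v'$ at tag $\capa$ with $\eta', \sigma', \Sigma' \vdash v' : \tau$. By the definition of $\update$ the relevant memory now maps $x \mapsto^{\capa} v$, so it suffices to take $v' = v$ and reuse the second hypothesis $\eta, \sigma, \Sigma \vdash v : \tau$, transported to the updated environment as discussed below. The one compatibility point to check is that the memory kind required by the selected $\textbf{G-*}$ rule coincides with the memory that $\update$ actually writes into; this matches because $x$'s residence in $\eta$, $\sigma$, or $\Sigma$ is consistent with the kind of $\tau$ under the standing convention that the three address spaces are disjoint, and that same residence selects the firing clause of $\update$.

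For the premise $\Gamma \vdash \eta', \sigma', \Sigma'$, I would argue that well-typedness of the rest of the context is preserved because $\update$ is a purely local modification. Concretely, I would establish a small ``frame'' sublemma: if a binding is well-typed in $\eta, \sigma, \Sigma$ and the only change is at a location $x$ disjoint from the storage that binding witnesses, then the binding remains well-typed in $\eta', \sigma', \Sigma'$. For scalar bindings this is immediate, since $\textbf{V-Int}$, $\textbf{V-Bool}$, and $\textbf{V-Float}$ classify a value independently of the environment. For array and function bindings it follows because the premises of $\textbf{V-Array}$ and $\textbf{V-Function}$ range over a contiguous block (respectively, over $\Sigma$'s function table) that does not contain $x$. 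The same sublemma also transports the value-typing hypothesis $\eta, \sigma, \Sigma \vdash v : \tau$ to $\eta', \sigma', \Sigma'$, closing the gap left in the previous paragraph.

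The main obstacle is exactly this locality/disjointness reasoning around $\textbf{V-Array}$: a single-location write could, in principle, invalidate an array binding if $x$ aliased into that array's backing store. I would discharge this using the paper's stated convention that distinctly named variables and allocations occupy disjoint regions of each address space, together with the freshness of $x$ relative to the storage recorded in $\Gamma$. Granting that disjointness, the update touches no location appearing in any other binding's witness, and the re-derivation of $\Gamma \vdash \eta', \sigma', \Sigma'$ proceeds binding-by-binding without further complication.
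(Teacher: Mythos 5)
You are in an unusual spot with this one: the paper never actually writes this proof down. The lemma is one of three environment lemmas the appendix simply asserts (``We can prove a couple simple lemmas about well-typed environments\dots''), so there is no official argument to diverge from. Your proof---case analysis on the three clauses of \update, peeling the new binding $x :^{\capa} \tau$ with the matching \textbf{G}-rule, and re-deriving $\Gamma \vdash \eta', \sigma', \Sigma'$ by a frame argument in which only \textbf{V-Array} and \textbf{V-Function} have environment-dependent witnesses---is the natural and essentially correct way to prove it, and you rightly flag that the lemma only holds under unstated conventions (the residence of $x$ agreeing with the memory kind of $\tau$, and disjointness of allocations).

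One discharge is misstated, though, and as written it would fail. You close the frame argument by appealing to ``the freshness of $x$ relative to the storage recorded in $\Gamma$,'' but $x$ cannot be fresh: every clause of \update carries the side condition $x \in^{\capa} \eta$ (resp.\ $\sigma$, $\Sigma$), so $x$ already occupies memory, and at the lemma's actual use sites in preservation (e.g.\ the \textbf{T-Memcpy} case, and assignment generally) $x$ is already bound in $\Gamma$ at type $\tau$. The extended context $\Gamma, x :^{\capa} \tau$ then contains a duplicate inner binding for $x$ whose witness is precisely the location being overwritten, and your frame sublemma---which only covers bindings whose storage is disjoint from $x$---does not apply to it. The correct discharge is that this binding is \emph{re-witnessed} by the hypothesis $\eta, \sigma, \Sigma \vdash v : \tau$, transported across the update by your own sublemma, with the tag unchanged because \update reuses the $\capa$ from its side condition. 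This is also where the lemma's real implicit precondition lives: if $\Gamma$ bound $x$ at some $\tau' \neq \tau$ (take $\Gamma = y :^{\capa} \code{int}$, $x = y$, $v = \code{true}$, $\tau = \code{bool}$), the conclusion is outright false, so the convention doing the work is agreement of any pre-existing binding of $x$ with $\tau$ and $\capa$, not freshness. With that one substitution, your argument goes through.
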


  We also define a well-formedness precondition on $p$ with respect to $\capa$:
  \begin{align*}
    (\hier, n) \vdash p &\defas p < n
  \end{align*}

  We also define well-formedness for the async stack:

  \begin{align*}
    \Gamma \vdash \Phi &\defas \forall \phi, s \in \Phi(\phi), \Gamma \vdash^{(\thread, 1)}_m s
  \end{align*}

  \subsection{Proofs}

  \begin{lemma}{(Expression Safety)}
    If $\Gamma \vdash^\capa e : \tau$ and $\Gamma \vdash \eta, \sigma, \Sigma$ and $\capa \vdash p$ and $\capa' \leq \capa$,
    then there is some $v$ such that $\eta, \sigma, \Sigma \vdash^\capa_{\capa'} e \Downarrow v$ and $v : \tau$.
  \end{lemma}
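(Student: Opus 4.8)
The plan is to proceed by induction on the typing derivation $\Gamma \vdash^\capa e : \tau$, with a case analysis on the final rule. Observe first that the expression-evaluation judgment $\eta, \sigma, \Sigma \vdash^\capa_{\capa'} e \Downarrow v$ as stated does not mention the thread position $p$ (nor $t$ or $b$), so the hypothesis $\capa \vdash p$ is carried only for uniformity with the statement-level lemmas and plays no role here. In each case the real work is to exhibit the value $v$ via the matching evaluation rule and then to check $\eta, \sigma, \Sigma \vdash v : \tau$.

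The leaf cases are essentially immediate. For the literals (\textsc{T-Int}, \textsc{T-Float}, \textsc{T-Bool}) the value is the literal itself, produced by the corresponding \textsc{E-} rule and typed by the matching \textsc{V-} rule. For \textsc{T-Var}, where $x :^\capa \tau \in \Gamma$, I would evaluate with \textsc{E-Var} to $\get(\eta, \sigma, \Sigma, x)$ and appeal to the Well-typed \get lemma (using $\Gamma \vdash \eta, \sigma, \Sigma$) to conclude this value has type $\tau$. For \textsc{T-Partition-Id}, inversion supplies the side condition $\capa < (\grid, 1)$, and the hypothesis gives $\capa' \leq \capa$; these are exactly the premises of \textsc{E-Partition-Id}, whose value $\capa/\capa' - 1$ is well-defined because $\capa' \leq \capa$ guarantees the divisibility required by the division operation, and which has type \code{int} as required.

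The binary-operator cases \textsc{T-Bop} and \textsc{T-Cmp} are handled by applying the induction hypothesis to each subexpression at the same $\capa$ and $\capa'$, obtaining integer values $v_1, v_2$, and then combining them with \textsc{E-Bop}/\textsc{E-Cmp}; the result is an \code{int} (resp.\ \code{bool}), typed by \textsc{V-Int} (resp.\ \textsc{V-Bool}). The genuinely delicate cases are array access, \textsc{T-Arr-Access} and \textsc{T-Arr-Access-Shared}. There the premise types the array subexpression $e_1$ at a \emph{broader} perspective---call it $\capa_1$, with $\capa \leq \capa_1$---whereas the only evaluation rule, \textsc{E-Arr-Access}, evaluates $e_1$ back at $\vdash^\capa_{\capa'}$. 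Reconciling the perspective at which $e_1$ is typed with the one at which it must be evaluated is the main obstacle.

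I expect to resolve it by exploiting the grammar of types: array element types are restricted to base types $b$, so there are no nested arrays or array literals, and the only expressions of array type are variables. Hence $e_1$ must be some $x :^{\capa_1} \tau[]^l \in \Gamma$, and \textsc{E-Var} evaluates it to $\get(\eta, \sigma, \Sigma, x)$ \emph{independently of the perspective annotations}, rendering the mismatch harmless. The Well-typed \get lemma gives that this value is a well-typed array $\langle l, n \rangle$, so inverting \textsc{V-Array} shows each element $\get(\eta, \sigma, \Sigma, l + i)$ has type $\tau$. Applying the induction hypothesis to the index $e_2$ (typed at $\capa$, target $\capa'$) yields an integer $i$; the standing no-out-of-bounds assumption discharges $i < n$; and $\capa' \leq \capa$ comes from the hypothesis. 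These are precisely the premises of \textsc{E-Arr-Access}, which produces $\get(\eta, \sigma, \Sigma, l + i)$, already shown to have type $\tau$. The \textsc{T-Arr-Access-Shared} case is identical, its extra premise $\capa \leq (\block, 1)$ being irrelevant to evaluation.
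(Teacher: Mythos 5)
Your proposal is correct and follows the same overall skeleton as the paper's proof: induction on the typing derivation, with the literal, variable, \code{partition\_id}, and binary-operator cases handled exactly as the paper handles them, and with the array-access cases identified as the only interesting ones. Where you genuinely depart from the paper is in how you close those array cases. The paper simply says that ``our inductive hypotheses and inversion'' give that $e_1$ evaluates to some $\langle x, n \rangle$ with well-typed elements, and then applies \textbf{E-Arr-Access}. But this glosses over precisely the mismatch you isolate: \textbf{T-Arr-Access} types $e_1$ at a broader perspective $\capa''$ with $\capa \leq \capa''$, so the induction hypothesis only yields an evaluation under $\vdash^{\capa''}_{\capa'}$, whereas the premise of \textbf{E-Arr-Access} demands an evaluation under $\vdash^{\capa}_{\capa'}$. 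Your resolution---observing that the type grammar restricts array types to $b[]^l$ with $b$ a base type, so the only expressions of array type are variables, and that \textbf{E-Var} evaluates independently of the perspective annotations---is a sound way to discharge this, and it makes rigorous a step the paper leaves implicit (the alternative fix would be a small auxiliary lemma stating that expression evaluation of perspective-insensitive forms is invariant under changing the ambient $\capa$). In short: your argument is slightly longer in the one case that matters, but it is more airtight there than the paper's own text; everything else coincides, including the appeal to the well-typed \get lemma, the use of $\capa' \leq \capa$ from the hypothesis, the standing in-bounds assumption, and the observation that $p$ plays no role in expression evaluation.
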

  \begin{proof}
    This proof proceeds by induction on the typing relation for expressions. Despite the fact that this property implies termination, we do not need a logical relation to prove it because the expression language is very simple.

    The \textbf{T-Int}, \textbf{T-Float},  and \textbf{T-Bool} cases are trivial, using the rules \textbf{E-Int}, \textbf{E-Bool} and \textbf{E-Float} to compute values.
    In the case for \textbf{T-Partition-Id},
    the $\capa$ premises of the typing rules and our assumption that $\capa' \leq \capa$
    match the premises of the evaluation rule, so this rule is simple as well.

    The cases for \textbf{T-Bop} and \textbf{T-Cmp} follow directly
    from the inductive hypotheses, assuming a valid and correctly implemented set of binary operators and comparators.

    The only interesting cases are \textbf{T-Arr-Access} and
    \textbf{T-Arr-Access-Shared}.

    In both cases our inductive hypotheses and inversion give us that
    $\capa' \leq \capa$, and $e_1$ evaluates to a $\langle x, n \rangle$,
    and that all the values between $x$ and $x+n$ in the appropriate environment are typed at $\tau$. We also know that $e_2$ evaluates to an integer $i$.
    We assume that all array accesses are in bounds, so $i < l$, which is sufficient to use the \textbf{E-Arr-Access} rule to complete this case, and the proof.
  \end{proof}

  \begin{lemma}{(Expression Determinism)}
    If $\eta, \sigma, \Sigma, t, b, p \vdash^\capa_{\capa'} e \Downarrow v_1$ and
    $\eta, \sigma, \Sigma, t, b, p \vdash^\capa_{\capa'} e \Downarrow v_2$ then $v_1 = v_2$.
  \end{lemma}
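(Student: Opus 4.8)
The plan is to prove this by structural induction on the expression $e$, equivalently by induction on the derivation of the first evaluation judgment. The crucial structural observation is that the evaluation relation $\eta, \sigma, \Sigma \vdash^\capa_{\capa'} e \Downarrow v$ is \emph{syntax-directed}: each syntactic form of $e$ is the conclusion of exactly one rule (a literal matches \textbf{E-Int} or \textbf{E-Bool}, a variable matches \textbf{E-Var}, $\code{partition\_id}$ matches \textbf{E-Partition-Id}, an indexing $e_1[e_2]$ matches \textbf{E-Arr-Access}, and the arithmetic and comparison forms match \textbf{E-Bop} and \textbf{E-Cmp}). Hence the two given derivations of $e \Downarrow v_1$ and $e \Downarrow v_2$ must end with the same rule, and I can invert both simultaneously. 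The other key fact is that every metafunction used to produce a value --- $\get$, together with the semantic binary operators and comparators --- is a genuine single-valued function of its arguments, so once the inputs are pinned down the output is forced.

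For the base cases there is essentially nothing to do: \textbf{E-Int} and \textbf{E-Bool} read the value directly off the syntax; \textbf{E-Partition-Id} returns $\capa/\capa' - 1$, which is fixed by the ambient $\capa$ and $\capa'$ shared between the two derivations; and \textbf{E-Var} returns $\get(\eta, \sigma, \Sigma, x)$, determined entirely by the fixed environments. For \textbf{E-Bop} and \textbf{E-Cmp}, inversion yields subderivations $e_1 \Downarrow v_1^{(k)}$ and $e_2 \Downarrow v_2^{(k)}$ for each outer derivation $k \in \{1,2\}$; the induction hypothesis forces $v_1^{(1)} = v_1^{(2)}$ and $v_2^{(1)} = v_2^{(2)}$, and functionality of the operator then gives equal results. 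The only case with real content is \textbf{E-Arr-Access}: inversion gives $e_1 \Downarrow \langle l, n \rangle$ and $e_2 \Downarrow i$ in both derivations, the induction hypothesis equates the base-length pair $\langle l, n \rangle$ and the index $i$ across the two, and the final value $\get(\eta, \sigma, \Sigma, l + i)$ is therefore identical. The side conditions ($\capa' \leq \capa$ and the in-bounds requirement $i < n$) are mere constraints on when the rule fires; they do not affect the computed value, so they play no role once both derivations are known to use the same rule.

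I expect no serious obstacle here: with a loop-free, syntax-directed expression language this determinism lemma is entirely routine, and the preceding Expression Safety lemma carries most of the conceptual weight. The one point worth verifying rather than assuming is genuine syntax-directedness --- that for every form of $e$ exactly one evaluation rule can conclude it, so that the two derivations are forced to align --- together with the single-valuedness of $\get$ and of the semantic operators and comparators. Both are immediate from inspecting the rule set and the definitions, so the argument is purely mechanical.
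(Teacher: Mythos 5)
Your proof is correct and follows exactly the approach the paper takes: the paper's entire proof of this lemma is ``Straightforward by induction on the semantic derivation,'' and your argument is precisely that induction, spelled out with the syntax-directedness and single-valuedness observations that make it go through. Nothing in your elaboration conflicts with the rule set in the appendix, so this is simply a more detailed rendering of the same routine argument.
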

  \begin{proof}
    Straightforward by induction on the semantic derivation.
  \end{proof}

  \begin{lemma}{(Expression Well-Typedness)}
    If $\Gamma \vdash^\capa_{\capa'} e : \tau$ and $\Gamma \vdash \eta, \sigma, \Sigma$ and $\capa \vdash p$, and $\eta, \sigma, \Sigma, t, b, p \vdash^\capa_{\capa'} e \Downarrow v$, then $v : \tau$.
  \end{lemma}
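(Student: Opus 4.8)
The plan is to obtain this lemma essentially for free from the two results that immediately precede it, Expression Safety and Expression Determinism, rather than by a fresh induction. Reading the typing premise as the standard single-perspective judgment $\Gamma \vdash^{\capa} e : \tau$ (with $\capa'$ the evaluation's target perspective), I would first apply Expression Safety to this typing derivation together with $\Gamma \vdash \eta, \sigma, \Sigma$ and $\capa \vdash p$ to produce \emph{some} value $v'$ with $\eta, \sigma, \Sigma \vdash^{\capa}_{\capa'} e \Downarrow v'$ and $v' : \tau$. Since the hypothesis already supplies an evaluation of the same $e$ under the identical configuration $\eta, \sigma, \Sigma, t, b, p$ and the identical pair of perspectives $\capa, \capa'$, Expression Determinism forces $v = v'$, and hence $v : \tau$. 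The one premise of Safety not stated verbatim here is the side condition $\capa' \leq \capa$; I would discharge it either as a standing assumption matching Safety's interface, or by recovering it through inversion on the evaluation, since the two rules whose value genuinely depends on the target perspective—\textbf{E-Partition-Id} and \textbf{E-Arr-Access}—carry $\capa' \leq \capa$ as an explicit premise.

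If a self-contained argument were wanted instead, I would induct on the typing derivation $\Gamma \vdash^{\capa} e : \tau$, which is syntax-directed, inverting the evaluation in each case. The literal cases \textbf{T-Int}, \textbf{T-Float}, \textbf{T-Bool} and the \textbf{T-Partition-Id} case are immediate, as the matching evaluation rule returns a value of the correct base type. The \textbf{T-Var} case follows from the Well-typed \get lemma applied to $x :^{\capa} \tau \in \Gamma$, and \textbf{T-Bop}/\textbf{T-Cmp} follow from the inductive hypotheses on the operands together with the assumption that the binary operators and comparators are correctly typed.

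The crux of the direct route is the array-access case, \textbf{T-Arr-Access} and \textbf{T-Arr-Access-Shared}, and this is the step I expect to be the main obstacle. A naive appeal to the inductive hypothesis on $e_1$ looks problematic, since the typing rule views $e_1$ at a \emph{broader} perspective while \textbf{E-Arr-Access} evaluates it at the outer perspective $\capa$, so the two judgments do not line up. The obstacle dissolves once one observes that the grammar admits array types only in the form $b[]^l$, so among the expression typing rules only \textbf{T-Var} can produce an array type; hence $e_1$ must in fact be a variable $x$. I would then sidestep the hypothesis entirely: by \textbf{E-Var} the array value is $\get(\eta, \sigma, \Sigma, x) = \langle l, n \rangle$, by the Well-typed \get lemma this value has type $\tau[]^l$, and inverting \textbf{V-Array} yields $\get(\eta, \sigma, \Sigma, l + i) : \tau$ for every in-bounds offset $i < n$. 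As \textbf{E-Arr-Access} supplies exactly the premise $i < n$ (and we assume no out-of-bounds accesses), the returned value $v = \get(\eta, \sigma, \Sigma, l + i)$ has type $\tau$, closing the case.

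I expect the corollary route to be the intended and nearly frictionless argument, with the only care needed being the alignment of the $\capa' \leq \capa$ side condition between Safety's interface and the evaluation hypothesis. In the direct route the single genuine subtlety is the perspective mismatch in the array-access case; its resolution rests on the structural fact that array-typed expressions are necessarily variables, which lets the perspective-independent value-typing judgment carry the weight in place of the inductive hypothesis.
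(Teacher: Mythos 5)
Your primary argument is exactly the paper's proof: apply Expression Safety to the typing derivation to obtain some $v'$ with $\eta, \sigma, \Sigma \vdash^\capa_{\capa'} e \Downarrow v'$ and $v' : \tau$, then invoke Expression Determinism to conclude $v = v'$, hence $v : \tau$; the paper's own proof is precisely this two-line corollary, and you even handle the $\capa' \leq \capa$ side condition of Safety's interface more carefully than the paper does. The self-contained induction you sketch as a fallback is sound but unnecessary, so no further comparison is needed.
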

  \begin{proof}
    By our expression safety lemma our well-typed expression must evaluate to a well-typed value $v'$. By our determinism lemma $v'$ must be the same as $v$, so $v$ is well-typed.
  \end{proof}

  \begin{lemma}{(Statement Progress)}
    If $\Gamma \vdash \eta, \sigma, \Sigma$ and $\Gamma \vdash^\capa_{\mem} s$ and $\capa \vdash p$, then either $s$ is \skipcom or there is some $s'$ such that $\eta, \sigma, \Sigma, t, b, p, \Psi, \Phi \vdash^\capa_{\mem'} s \stepsto s' \dashv_{\mem''} \eta', \sigma', \Sigma, \Psi', \Phi'$
  \end{lemma}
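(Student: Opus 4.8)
The plan is to induct on the derivation of $\Gamma \vdash^\capa_\mem s$, doing a case analysis on the final typing rule, and in each case either observe that $s = \skipcom$ (the \textbf{T-Skip} case) or exhibit an applicable thread-level step. Most cases are a mechanical match of a typing rule against its semantic counterpart, and they fall into two groups. First, the statements whose semantic rule must evaluate a subexpression---\textbf{T-Assn}, \textbf{T-Decl}, \textbf{T-Arr-Assn}, \textbf{T-Arr-Assn-Shared}, and \textbf{T-If}---are discharged by the already-established Expression Safety lemma, which guarantees each well-typed subexpression evaluates to a value of the correct type; for \textbf{T-If} this value is a boolean, selecting \textbf{S-If-True} or \textbf{S-If-False}. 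Second, the administrative statements step via an unconditional rule or a pair of rules made exhaustive by a side condition: \textbf{T-Free}, \textbf{T-While}, the allocation forms, \textbf{T-Partition}, \textbf{T-Claim}, \textbf{T-Lower}, \textbf{T-Memcpy}, \textbf{T-Async-Memcpy}, and \textbf{T-Sync-Dec} each have a single matching rule, while \textbf{T-Sync-Init} and \textbf{T-Sync-Wait} are covered by a case split on whether $\Psi(\psi)(p)$ is zero---note that \textbf{S-Sync-Wait-Spin} steps \code{wait} to itself, which still satisfies progress even though it is precisely why liveness fails. The \textbf{T-Function-Call} case uses well-typedness of the environment (via \textbf{G-Function}) to locate the callee's body in $\Sigma$, then steps by \textbf{S-Function-Call}.

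The remaining cases are congruences, where I apply the induction hypothesis to a sub-statement and must re-establish the well-formedness premise $\capa \vdash p$ at the narrowed \unit{}. For \textbf{T-Seq} I apply the hypothesis to $s_1$: if it steps I use \textbf{S-Seq-First}, and if it is \skipcom I use \textbf{S-Seq-Done}. For \textbf{T-Group} the body is checked at $(\hier, n)$ and run with relative position $p \bmod n$; since $p \bmod n < n$ the premise is re-established, so the hypothesis applies (or \textbf{S-Group-Done} fires when the body is \skipcom) and \textbf{S-Group} lifts the step. The \textbf{T-Async-Partition} case is analogous, applying the hypothesis in the renamed environment (justified by the well-typed \rename lemma) via \textbf{S-Async-Partition-Congr}, falling back to \textbf{S-Async-Partition-Unwind} or \textbf{S-Async-Partition-Done} according to whether $\Phi(\phi)$ is empty. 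The \textbf{T-Destruct} case is where the hypothesis $\capa \vdash p$ does essential work: since $\downarrow\capa$ must be defined, $\capa$ is $(\grid, 1)$ or $(\block, 1)$, whence $\capa \vdash p$ forces $p = 0$---exactly the position demanded by \textbf{S-Destruct-Grid} and \textbf{S-Destruct-Block}---and the new position $b \bmod \blocks$ or $t \bmod \threads$ lies below the destructed size, re-establishing well-formedness for the hypothesis.

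I expect the \textbf{T-Split} case to be the main obstacle, as it is the one place requiring a genuine case analysis on $p$ rather than a direct rule match. Using $n_1, n_2 \alignto n$---in particular $n_1 + n_2 \le n$---the position $p$ falls into exactly one of three ranges. If $p < n_1$, I step $s_1$ at $(\hier, n_1)$ by the hypothesis (the branch condition itself re-establishing $\capa \vdash p$) via \textbf{S-Split-Left}, or use \textbf{S-Split-Left-Done} when $s_1 = \skipcom$. If $n_1 \le p < n_1 + n_2$, I step $s_2$ at $(\hier, n_2)$ with adjusted position $p - n_1 < n_2$ via \textbf{S-Split-Right} (or \textbf{S-Split-Right-Done}). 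If $p \ge n_1 + n_2$, the thread lies outside both branches and \textbf{S-Split-None} steps directly to \skipcom. Checking that these ranges are jointly exhaustive, and that each narrowed position satisfies the well-formedness premise, is the crux; once it is in place the induction goes through, and the whole argument remains purely syntactic, with no need for the logical-relation machinery that the liveness refinement would demand.
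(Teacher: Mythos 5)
Your proposal is correct and follows essentially the same route as the paper's proof: induction on the typing derivation, discharging expression-dependent cases via the Expression Safety lemma, matching each administrative statement to its semantic rule (with the zero/nonzero split for synchronization), using the renaming lemma for \textbf{T-Async-Partition}, re-establishing $\capa \vdash p$ at narrowed \unit{}s for \textbf{T-Group}/\textbf{T-Destruct}, and the same three-way range analysis on $p$ for \textbf{T-Split}. If anything, your treatment is slightly more explicit than the paper's in noting that $\capa \vdash p$ forces $p = 0$ in the \textbf{T-Destruct} case and in covering \textbf{T-Assn}, which the paper's case list omits.
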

  \begin{proof}
    This proceeds by induction on the typing derivation.

    \subsubsection*{\Unit{} Management Rules}

    \begin{itemize}
      \item Case \textbf{T-Split}:

        In this case we have by our assumption that $\capa \vdash p$ that $p < n$. We also have
        that $n_1, n_2 \alignto n$, so $n_1 + n_2 \leq n$.
        There are three cases to consider, then: when $p < n_1$, when $p \geq n_1$ and $p < n_1 + n_2$,
        and when $p \geq n_1 + n_2$.

        In the first case, we have by our inductive hypothesis that $s$ is either \skipcom or that it can step in an $(\hier, n_1)$ context.
        In the former case we can use the \textbf{S-Split-Left-Done} rule and in the latter we can use the \textbf{S-Split-Left rule}.

        The second case is almost symmetric. The only additional work we have to do is to argue that $(\hier, n_2) \vdash p - n_1$, or equivalently that $p - n_1 < n_2$. This, however, is immediate from our assumption that $p < n_1 + n_2$.

        In the last case, we just use the \textbf{S-Split-None} rule to step to \skipcom.

      \item Case \textbf{T-Destruct}

        By our inductive hypothesis, we know that $s$ can step at $\downarrow\capa$ if $\downarrow\capa \vdash p$. The $\downarrow$ operation is only defined at $(\block, 1)$ or $(\grid, 1)$, so we only need to consider the cases where $\capa$ is one of those.

        In the former case $p$ becomes $t \textbf{ mod } \threads$ while $\downarrow\capa$ is $(\thread, \threads)$. $t \textbf{ mod } \threads < \threads$ for any $t$ so this satisfies the requirement that $\capa \vdash p$, which lets us use our inductive hypothesis: $s$ is either \skipcom or can step.
        If it can step, we can use this to satisfy the premise of \textbf{S-Destruct-Block} to step in this case. If it is \skipcom, then we use the rule \textbf{S-Destruct-Done} to step instead.

        The latter case is the same, except using the fact that $b\textbf{ mod } \blocks < \blocks$ and the $\textbf{S-Destruct-Grid}$ rule.

      \item Case \textbf{T-Group}

        In this case we have by assumption that $(\hier, q \cdot n) \vdash p$, i.e.,
        that $p < q \cdot n$.

        In this case we have our IH that
        if $(\hier, n) \vdash p'$ for some $p'$, then $s$ is either \textbf{skip}
        or steps with $(\hier, n)$ \unit{} with $p'$ as our \unit{} ID.

        We choose $p'$ to be $p \textbf{ mod } n$. This is always $< n$, so $(\hier, n) \vdash p'$.
        This lets us use our IH to get that $s$ is either \skipcom (in which case we can use the \textbf{S-Group-Done} rule to step) or itself steps, which lets us use the \textbf{S-Group} rule to step.
    \end{itemize}

    \subsubsection*{Thread Synchronization Rules}
    \begin{itemize}
      \item Case \textbf{T-Sync-Wait}

        $\Psi(\psi)(p)$ is either zero or it is not. In the former case we use the \textbf{S-Sync-Wait-Done} rule and in the latter we use \textbf{S-Sync-Wait-Spin}.

      \item Case \textbf{T-Sync-Dec}

        We use the \textbf{S-Sync-Dec} rule to step.

      \item Case \textbf{T-Sync-Init}

        We use the \textbf{S-Sync-Init-Zero} or \textbf{S-Sync-Init-Nonzero} rules depending on whether $\Psi(\psi)(p)$ is zero or not.

    \end{itemize}

    \subsubsection*{Asynchrony Rules}
    \begin{itemize}
      \item Case \textbf{T-Async-Partition}

        In this case, we have via our IH that if $\Gamma, y :^{(\thread, 1)} \async{\tau[]^l} \vdash \eta', \sigma', \Sigma'$, then we can either step $s$ with $(\thread, 1)$ \unit{} under environments $\eta'$, $\sigma'$, and $\Sigma'$, or $s$ is \skipcom.

        We have by assumption that $\Gamma, x :^{(\thread, 1)} \async{\tau[]^l} \vdash \eta, \sigma, \Sigma$. By our environment renaming lemma, this gives us what we need to use our IH with $(\eta', \sigma', \Sigma')$ as $\rename(\eta, \sigma, \Sigma, x, y, (\thread, 1))$.

        Thus $s$ either steps or is \skipcom. In the former case we can step with \textbf{S-Async-Partition-Congr}, and in the latter we can use either \textbf{S-Async-Partition-Unwind} or \textbf{S-Async-Partition-Done} depending on whether $\Phi(\phi)$ is empty or not.

      \item Case \textbf{T-Async-Memcpy}

        Immediate via use of the \textbf{S-Async-Memcpy} rule.

      \item Case \textbf{T-Memcpy}

        Immediate via use of the \textbf{S-Memcpy} rule.
    \end{itemize}

    \subsubsection*{Memory Rules}
    \begin{itemize}
      \item Case \textbf{T-Decl}

        Via our lemma about expression type safety and our hypothesis that $e$ is well-typed, we obtain the premises necessary to use the \textbf{S-Decl} rule to step.

      \item Case \textbf{T-Arr-Assn}

        Each of $e_1$, $e_2$ and $e_3$ must evaluate to a well-typed value by the expression type safety lemma. In particular, both $e_1$ evaluates to some $\langle l, n \rangle$ and $e_2$ evaluates to some $i$. We assume all array accesses are in bounds, so this is sufficient to use the \textbf{S-Arr-Assn} rule to step.

      \item Case \textbf{T-Arr-Assn-Shared}

        Same as previous case.

      \item Case \textbf{T-Free}

        Trivial via the \textbf{S-Free} rule.

      \item Case \textbf{T-Partition}

        Trivial via the \textbf{S-Partition} rule.
      \item Case \textbf{T-Claim}

        Trivial via the \textbf{S-Claim} rule.

      \item Case \textbf{T-Lower}

        Trivial via the \textbf{S-Lower} rule.
      \item Case \textbf{T-Alloc}

        We assume that $l$ is not \shared, so we can use the \textbf{S-Alloc-Local} or \textbf{S-Alloc-Global} rule, depending on whether $l$ is \local or \globalkind.

      \item Case \textbf{T-Alloc-Shared}

        Trivial via the \textbf{S-Alloc-Shared} rule.
    \end{itemize}

    \subsubsection*{Control Rules}
    \begin{itemize}
      \item Case \textbf{T-Skip}

        Trivial

      \item Case \textbf{T-While}

        Trivial, all while loops step via the \textbf{S-While} rule

      \item Case \textbf{T-If}

        By our proof of expression type safety, the expression $e$ steps to either the boolean value true or false. We can thus use either the \textbf{S-If-True} or \textbf{S-If-False} rules to step.

      \item Case \textbf{T-Seq}

        In this case we know by our IH that $s_1$ is either \skipcom or can step. In the former case we use the \code{S-Seq-Done} rule and in the latter we use the \code{S-Seq-First} rule.

      \item Case \textbf{T-Function-Call}

        In this case we know by our expression safety lemma that each of the arguments will evaluate to a well-typed value. We also have by assumption that $f$ has a function type, which by inversion on the \textbf{V-Function} rule tells us that it is a closure type. Additionally our assumption that $\Gamma \vdash \eta, \sigma, \Sigma$ tell us that $\Sigma$ contains $f$ at the same type that $\Gamma$ does. These premises are sufficient to use the \textbf{S-Function-Call} rule.
    \end{itemize}
  \end{proof}

  \begin{lemma}{(Statement Preservation)}
    If $\Gamma \vdash \eta, \sigma, \Sigma$ and $\Gamma \vdash^\capa_{\mem} s$ and
    $\eta, \sigma, \Sigma, t, b, p, \Psi, \Phi \vdash^\capa_{\mem'} s \stepsto s' \dashv_{\mem''} \eta', \sigma', \Sigma, \Psi', \Phi'$ and $\capa \vdash p$ and $\mem \geq \mem'$ and $\Gamma \vdash \Phi$, then there is some $\Gamma'$ such that $\Gamma \subseteq \Gamma'$ and $\Gamma' \vdash^\capa_{\mem} s'$ and $\Gamma' \vdash \eta', \sigma', \Sigma'$ and $\mem \geq \mem''$ and $\Gamma' \vdash \Phi'$.
  \end{lemma}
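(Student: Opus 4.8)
The plan is to prove this by induction on the typing derivation $\Gamma \vdash^\capa_\mem s$, inverting the operational step in each case and re-establishing the conclusion at the \emph{same} outer \unit{} $\capa$ and budget $\mem$. The reusable machinery will be the Expression Well-Typedness lemma---for the values that flow into \textbf{S-Decl}, \textbf{S-Assn}, \textbf{S-Arr-Assn}, and the arguments of \textbf{S-Function-Call}---together with the Well-typed \get, \rename, and \update lemmas, which keep $\eta', \sigma', \Sigma'$ well-typed after every memory mutation or variable introduction. I also anticipate two auxiliary lemmas not yet stated: a \emph{memory-weakening} lemma (if $\Gamma \vdash^\capa_{\mem_0} s$ and $\mem_0 \leq \mem_1$ then $\Gamma \vdash^\capa_{\mem_1} s$) and a \emph{substitution} lemma for the index-offset substitution performed by \code{partition}. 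The context extension $\Gamma \subseteq \Gamma'$ is threaded through every case, and I will appeal to ordinary weakening whenever a sibling sub-statement typed under $\Gamma$ must be retyped under the larger $\Gamma'$.

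Most cases fall into predictable groups. The \emph{perspective-management congruences} (\textbf{S-Split-Left}, \textbf{S-Split-Right}, \textbf{S-Group}, \textbf{S-Destruct-Block}, \textbf{S-Destruct-Grid}) recurse on the body at a narrower \unit{} with a recomputed position, so I first discharge the side condition $\capa_{\text{inner}} \vdash p_{\text{inner}}$---e.g. $p - n_1 < n_2$ for the right split, $p \bmod n < n$ for \code{group}, and $t \bmod \threads < \threads$ for destruct, exactly the arithmetic already used in Statement Progress---then apply the IH and re-apply the matching typing rule to recover $\Gamma' \vdash^\capa_\mem s'$ at the unchanged $\capa$. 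The terminating variants (\textbf{S-Split-Left-Done}, \textbf{S-Split-Right-Done}, \textbf{S-Split-None}, \textbf{S-Group-Done}, \textbf{S-Destruct-Done}) and the synchronization rules (\textbf{S-Sync-Wait-Done}, \textbf{S-Sync-Dec}, \textbf{S-Sync-Init-Zero}, and the rest) step to \skipcom or back to \code{wait} with untouched $\eta, \sigma, \Sigma$, so $\Gamma' = \Gamma$ and the conclusions are immediate. For control flow, \textbf{S-If-True}/\textbf{S-If-False} and \textbf{S-While} reassemble via \textbf{T-If}, \textbf{T-Seq}, and \textbf{T-While}, where memory-weakening absorbs the $\textbf{max}$ over the branch budgets; the sequence congruence \textbf{S-Seq-First} first weakens the $s_1$ sub-derivation up to the full budget $\mem$ (so that the $\mem \geq \mem'$ hypothesis transfers to the IH), then reassembles with \textbf{T-Seq}; and \textbf{S-Function-Call} discharges the body using the closure typing recorded by \textbf{V-Function} together with weakening.

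The cases that carry the real weight are the \emph{data-perspective operations} \code{partition}, \code{claim}, and \code{lower}. Each takes a single step that expands the construct into a synchronization-bracketed block $\code{init}_\psi; \ldots; \code{dec}_\psi; \code{wait}_\psi$ while renaming the pointer $x$ to a fresh $y$ at the narrowed \unit{} ($\capa/c$ for \code{partition}, $(\hier, n')$ for \code{claim}, $\downarrow\capa$ for \code{lower}). To retype the result I extend $\Gamma$ with $y$ bound at the narrowed \unit{}, obtain its well-typedness from the Well-typed \rename lemma, discharge the bracketing \textbf{T-Sync-Init}/\textbf{T-Sync-Dec}/\textbf{T-Sync-Wait} rules trivially, and recover the body. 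For \textbf{S-Partition} this is exactly where the substitution $s[(y + c \cdot p)/y]$ must be shown type-preserving: $y + c \cdot p$ must inhabit the same array type, at the same \unit{}, as $y$. This substitution is the main obstacle. It is non-standard---it replaces a pointer variable by a pointer-offset expression that does not even appear in the surface expression grammar---and it must interact correctly with the \unit{} annotations so that the injected offset $c \cdot p$ stays coherent with the narrowing. For \textbf{S-Claim} I must additionally verify that the generated $\splitcom(n_1, n_2)$ meets $n_1, n_2 \alignto n_1 + n_2$, which I expect to follow from the alignment facts implied by the claim's own constraints.

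The remaining \emph{memory} and \emph{asynchrony} cases need care but are shallow. For \textbf{S-Alloc-Local}, \textbf{S-Alloc-Shared}, \textbf{S-Alloc-Global}, and \textbf{S-Free}, I invert \textbf{T-Alloc}/\textbf{T-Free} to recover the budget split $\mem = \mem_{\text{body}} + n \cdot \size{\tau}$, extend the context with the allocated pointer, retype the expansion $s; \code{free}(\ldots)$ at the original budget $\mem$ via memory-weakening, and re-establish $\mem \geq \mem''$ from the allocate/free accounting. For the asynchronous rules, \textbf{S-Async-Partition-Congr} again combines \rename with the IH, while \textbf{S-Async-Partition-Unwind}, \textbf{S-Async-Partition-Done}, \textbf{S-Async-Memcpy}, and \textbf{S-Memcpy} mutate the deferred-computation map $\Phi$; here the premise $\Gamma \vdash \Phi$ supplies the well-typedness of any body pulled off the stack, and I must check that inserting $\memcpy{x}{y}$ (or re-inserting an unwound body) preserves $\Gamma' \vdash \Phi'$, taking care with the \code{async} wrapper on the partitioned pointer's type.
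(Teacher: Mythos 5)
Your proposal is correct and takes essentially the same route as the paper's proof: induction on the typing derivation, inversion on the step relation in each case, with the expression well-typedness lemma and the \textbf{get}/\textbf{rename}/\textbf{update} lemmas doing the environment bookkeeping, and the perspective-management, synchronization, memory, asynchrony, and control cases discharged exactly as you describe. The only difference is one of care rather than substance: you explicitly isolate a memory-weakening lemma and flag the \code{partition} substitution $s[(y + c \cdot p)/y]$ and the \code{claim}-generated split's alignment as obligations needing proof, where the paper handles the memory accounting inline and dismisses the substitution as an ``obvious'' substitution-preserves-typing lemma.
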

  \begin{proof}

    We proceed by induction on the derivation of $\Gamma \vdash^\capa_{\mem} s$.

    \subsubsection*{\Unit{} Management Rules}

    \begin{itemize}
      \item Case \textbf{T-Split}

        In this case we have by assumption that $(\hier, n) \vdash p$, $\Gamma \vdash \eta, \sigma, \Sigma$, and $\eta, \sigma, \Sigma, t, b, p, \Psi, \Phi \vdash^\capa_{\mem'} \splitcom(n_1, n_2)\{s_1\}\{s_2\} \stepsto s' \dashv_{\mem''} \eta', \sigma', \Sigma, \Psi', \Phi'$.
        Our inductive hypotheses give us that if for any $p$, if $(\hier, n_1) \vdash p$ and $s_1$ steps with \unit{} ID $p$, or if $(\hier, n_2) \vdash p$ and $s_2$ steps with \unit{} ID $p$ then their results are well typed.

        By inversion on our semantic derivation, we are in one of 5 cases.

        In the \textbf{S-Split-Left} case $p < n_1$ and $s_1$ steps to $s_1'$. This is sufficient to tell us that $s_1'$ is well-typed and the output environments of that relation $\eta', \sigma',$ and $\Sigma'$ are all well-typed by $\Gamma' \supseteq \Gamma$, and that the memory is properly bounded by the typing rules.

        We can thus use the \textbf{T-Split-Left} rule to conclude that the result of this case is well-typed.

        The \textbf{S-Split-Left-Done} rule is trivial via the $\textbf{T-Skip}$ rule.

        The \textbf{Right} cases are symmetric, with the observation that when $p \geq n_1$ and $p < n_1 + n_2$ then $p - n_1 < n_2$.

        The last \textbf{S-Split-None} rule is trivial via the $\textbf{T-Skip}$ rule.

      \item Case \textbf{T-Destruct}

        In this case we have by assumption that $\downarrow\capa$ is defined, so $\capa$ is either $(\block, 1)$ or $(\grid, 1)$. We also assume that
        $\eta, \sigma, \Sigma, t, b, p, \Psi, \Phi \vdash^\capa_{\mem'} \destructcom \code{ in } s \stepsto s' \dashv_{\mem''} \eta', \sigma', \Sigma, \Psi', \Phi'$. We also have by our inductive hypothesis that for any $p$ such that $\downarrow \capa \vdash p$ and $s''$ such that $s$ steps to $s''$ at $p$, then that step preserved well-typedness.

        By inversion on the step relation, we are in one of three cases.

        If the rule used \textbf{S-Destruct-Block}, then we know that $s$ steps to $s''$ and $p$ is $t \textbf{ mod } T$. $(\thread, \threads) \vdash t  \textbf{ mod } T$ for any $t$, so we can use our inductive hypothesis to conclude that the $s''$ stepped to by $s$ is well-typed, as are its environments and memory usage. The \textbf{T-Destruct} rule then gives us our desired goal.

        The \textbf{S-Destruct-Grid} case proceeds similarly, while the \textbf{S-Destruct-Done} case is trivial.

      \item Case \textbf{T-Group}

        In this case we have by assumption that \\
        $\eta, \sigma, \Sigma, t, b, p, \Psi, \Phi \vdash^{(\hier, q\cdot n)}_{\mem'} \replicom\code{ }q\code{ }s \stepsto s' \dashv_{\mem''} \eta', \sigma', \Sigma, \Psi', \Phi'$.
        We have by our inductive hypothesis that $s$ steps to $s''$ at some \unit{} ID $p'$ and $(h, n) \vdash p'$, then $s''$ is well typed,
        as are the other outputs of that step.

        By inversion on the step relation, we are in one of two cases.
        The \textbf{S-Group-Done} case is trivial, so we shall focus on the \textbf{S-Group} case.
        In this case we have that $s$ steps to $s''$ at \unit{} ID $p \textbf{ mod } n$.
        It is always the case that $(h, n) \vdash p \textbf{ mod } n$ for any $p$,
        so we can use our inductive hypothesis to conclude that $s''$ is well-typed,
        as are its output environments and memory usage.
        From there, it is a simple application of the \textbf{T-Group} rule to conclude that $\replicom\code{ }q\code{ }s'$ is well-typed, and to finish the case.
    \end{itemize}

    \subsubsection*{Thread Synchronization Rules}

    These rules are all trivial: with one exception all thread synchronization primitives step to skip without changing environment or memory, and are thus obviously well-typed. \textbf{S-Sync-Wait-Spin} does not produce skip, but it steps to the same statement as we already assumed typechecks in the premise of the lemma, so is straightforward nonetheless.

    If we wanted to say something about deadlock freedom we'd have more work here, but we aren't doing that, so these rules are easy.

    \subsubsection*{Asynchrony Rules}
    \begin{itemize}
      \item Case \textbf{T-Async-Partition}
        In this case we have that $s$ is well-typed in a context where $x$ has been renamed into $y$, with $(\thread, 1)$ \unit{}. We also have that $\Gamma, x :^{(\thread, 1)} \tau[]^l \vdash \eta, \sigma, \Sigma$ and $\Gamma, x :^{(\thread, 1)} \tau[]^l \Phi$.

        By inversion, we are in one of three cases.

        In the \textbf{S-Async-Partition-Done} case, we are done.

        In the \textbf{S-Async-Partition-Congr case}, our inductive hypothesis gives us that there is some $\Gamma'$ such that $\Gamma, y :^{(\thread, 1)} \async{\tau[]^l}  \subseteq \Gamma'$ and $\Gamma' \vdash \Phi'$ and $\Gamma' \vdash \rename(\eta, \sigma, \Sigma, x, y, (\thread, 1))$ via our well-typed renaming lemma.  This lets us use the \textbf{T-Async-Partition} rule to check this case, with a choice of $\Gamma'$ as $\Gamma', x :^{(\thread, 1)} \tau[]^l $ .

        In the \textbf{S-Async-Partition-Unwind case}, our assumption that $\Phi$ is well-typed tells us that $\Gamma, y :^{(\thread, 1)} \vdash^\{(\thread, 1)_m s$. Thus, we can use the \textbf{T-Async-Partition} rule to type this case.

        \item Case \textbf{T-Async-Memcpy}

          In this case the statement and environment typing are trivial, we need only to show that the async stack remains well typed.

          In this case we have that $x$ and $y$ have the same type at $(\thread, 1)$. This is sufficient for us to check $x = y$ at $(\thread, 1)$, meaning that adding that instruction to the stack maintains its well-typedness.

        \item Case \textbf{T-Memcpy}

          Immediate via use of the \textbf{S-Memcpy} rule. We just need to show that
          the environment remains well typed, which we know via our lemmas about \update and \get.
      \end{itemize}

      \subsubsection*{Memory Rules}
      \begin{itemize}
        \item Case \textbf{T-Decl}

          By inversion, we are using \textbf{S-Decl} rule for evaluation. Our well-typed expression lemma gives us that $v$ is well-typed, so it follows from our assumptions and our lemmas about extending environments that the extended $\eta$ and $s$ are well-typed by $\Gamma, x:^\capa \tau$.

        \item Case \textbf{T-Free}

          Trivial.

        \item Case \textbf{T-Alloc}

          By inversion we are either in the \textbf{S-Alloc-Local} or \textbf{S-Alloc-Global} rules. In either case, we assume that $\Gamma, x:^\capa \tau[]^l$ checks $s$, meaning we can use our extended environment lemmas and the \textbf{T-Seq} and \textbf{T-Free} rules to check these cases.

        \item Case \textbf{T-Alloc-Shared}

          Same as previous case.

        \item Case \textbf{T-Partition}

          In this case we have by assumption that $\Gamma, y :^{(h, n/c)} \tau[]^l s$ and $l$ is not local and $c$ divides $n$. By inversion on our step relation we
          must be in the \textbf{S-Partition} case, so we have
          $\eta, \sigma, \Sigma, t, b, p, \Psi, \Phi \vdash^\capa_{\mem} \partition{\psi}{x}{y}{c}{s} \stepsto \code{init}_\psi;  s'; \code{dec}_\psi; \code{wait}_\psi \dashv_{\mem} \eta', \sigma', \Sigma', \Psi, \Phi$ where $s' = s[(y + c \cdot p)/ y]$ and $(\eta', \sigma', \Sigma') = \rename(\eta, \sigma, \Sigma, x, y, \capa/c)$. Via our well-typed renaming lemma contexts we know that we can check the renamed environments in the extended environment $\Gamma, y :^{\capa/c} \tau[]^l, x :^{\capa} \tau[]^l$, and this context is also sufficient to check $s'$ (via a substitution-preserves-typing lemma that is obvious). Using the \textbf{T-Skip} rule this is exactly what we need to show to complete this case, as the thread sync primitives check trivially via their typing rules.

        \item Case \textbf{T-Claim}

          Essentially the same as \textbf{T-Partition}.

        \item Case \textbf{T-Lower}

          Essentially the same as \textbf{T-Partition}.
      \end{itemize}

      \subsubsection*{Control Rules}
      \begin{itemize}
        \item Case \textbf{T-If}

          We have by assumption that $\code{if } e \code{ then } s_1 \code{ else } s_2$ steps to some $s'$, and by inversion we know that either $e$ evaluates to \code{true} and $s'$ is $s_1$, or $e$ evaluates to \code{false} and  $s'$ is $s_2$.

          In either case, our inductive hypotheses is sufficient to tell us that these are well-typed. In particular, in both cases our IHs tell us that the amount of memory used by stepping each branch of the \code{if} is less than the amount of memory computed by the type system for each branch. Because the whole \code{if} expression checks using the greater of the memory usage of $m_1$ or $m_2$ (i.e., the memory usage on each branch), the resulting usage for the whole conditional is also bounded by the type system.

        \item Case \textbf{T-Skip}

          Trivial: \skipcom does not step

        \item  Case \textbf{T-Seq}

          In this case we have that $s_1$ and $s_2$ are both well-typed (with $m_1$ memory and $m_2$ memory respectively), and $m = \textbf{max}(m_1, m_2)$. We also have by inversion that $s_1$ either steps to \skipcom or $s_1'$, and our inductive hypothesis tells us that $s_1'$ is well-typed.

          In the former case we can use the \textbf{S-Seq-Done} rule to trivially finish the case. In the latter, our IH allows us to finish the case, since $m_1$ is always $\leq \textbf{max}(m_1, m_2)$

        \item  Case \textbf{T-While}

          By inversion, we have that
          \begin{align*}
            &\eta, \sigma, \Sigma, t, b, p, \Psi, \Phi \vdash^\capa_{\mem} \code{while } e \code{ do } s \\
            &\stepsto \code{if } e \code{ then } (s; \code{ while } e \code{ do } s) \code{ else } \skipcom \dashv_{\mem} \eta, \sigma, \Sigma, \Psi, \Phi.
          \end{align*}

          We also have by assumption that $e$ and $s$ are well-typed. With this information, through use of the \textbf{T-If}, \textbf{T-Seq}, \textbf{T-While}, and \textbf{T-Skip} rules, we can conclude that the result of this rule is also well-typed.

        \item Case \textbf{T-Function-Call}

          By inversion we have that
          \begin{align*}
            &\eta, \sigma, \Sigma, t, b, p, \Psi, \Phi \vdash^\capa_{\mem} f(e_1, \dots, e_n) \\
            &\stepsto \code{call } s \code { with } (x_i :^{\capa_i} \tau_i \mapsto v_i)@\{\eta, \sigma, \Sigma, \mem'\}  \dashv_{\mem} \eta, \sigma, \Sigma, \Psi, \Phi,
          \end{align*}
          and also that $\Sigma(f)= \{[x_1 : \tau_1, \dots, x_n : \tau_n], s\}$, and $\eta, \sigma, \Sigma, t, b, p \vdash^\capa e_i \Downarrow v_i$, and $m' \leq m$.

          We also have from the premises of our case that $f :^\capa \fun{x_1 :^{\capa_1} \tau_1, \dots, x_n :^{\capa_n} \tau_n}{\capa}{\mem'} \in \Gamma$, and $\Gamma \vdash^{\capa} e_i : \tau_i$, and $\mem' \leq \mem$.

          Our lemma for expression well-typedness tells us that that each $v_i$ is a well-typed value, and our assumption that $\Gamma \vdash \eta, \sigma, \Sigma$ tells us that $\Sigma(f)$ is a well-typed function and thus that $\textbf{fns } \Gamma, x_i :^{\capa_i} \tau_i \vdash^\capa_{m'} s$. We can take the union of this with $\Gamma$ to produce $\Gamma, x_i :^{\capa_i} \tau_i \vdash^\capa_{m'} s$, which clearly checks $s$ and the output environments.
      \end{itemize}
    \end{proof}

\section{\name{} Programs}\label{sec:code-appendix}

\subsection{Complete TF32 Matrix Multiplication}\label{sec:tf32-func}
\begin{ccudaCode}    
@prism("device")
@requires(thread[32])
def simple_mma(a: ptr(const(float)) @ thread[32], 
                b: ptr(const(float)) @ thread[32],
                c: ptr(float) @ thread[32]):
    
    with group(thread[32]):
    stride : int @ thread[32] = 8
    
    tid : int @ thread[1] = id()
    
    regA: uint32_t[4] @ thread[1]
    row_a : int @ thread[1] = tid / 4
    col_a : int @ thread[1] = tid 
    offsetA : int @ thread[1] = row_a * stride + col_a
    
    regA[0] = __float_as_uint(a[offsetA + 0])
    regA[1] = __float_as_uint(a[offsetA + 8 * stride]); 
    regA[2] = __float_as_uint(a[offsetA + 4]); 
    regA[3] = __float_as_uint(a[offsetA + 8 * stride + 4]); 
    
    regB: uint32_t[2] @ thread[1]
    row_b : int @ thread[1] = tid 
    col_b : int @ thread[1] = tid / 4
    offsetB : int @ thread[1] = row_b * stride + col_b
    
    regB[0] = __float_as_uint(b[offsetB + 0])
    regB[1] = __float_as_uint(b[offsetB + 4 * stride])
    
    regC: uint32_t[4] @ thread[1]
    regC[0] = 0
    regC[1] = 0
    regC[2] = 0
    regC[3] = 0
    
    # Now do MMA!
    intrinsic.mma(
        regA[0], regA[1], regA[2], regA[3],  
        regB[0], regB[1],
        regC[0], regC[1], regC[2], regC[3],
        out=[regC[0], regC[1], regC[2], regC[3]]
    )
    
    row_c : uint32_t @ thread[1] = tid / 4
    col_c : int @ thread[1] = tid 
    offsetC : int @ thread[1] = row_c * stride + 2 * col_c
    
    with partition(c, offset=offsetC, dimension=thread[1]) as c_thread:
        with group(thread[1]):
        c_thread[0] += __uint_as_float(regC[0])
        c_thread[1] += __uint_as_float(regC[1])
        c_thread[8 * stride] += __uint_as_float(regC[2])
        c_thread[8 * stride + 1] += __uint_as_float(regC[3])
    
    return
\end{ccudaCode}

\begin{ccudaCode}
@prism("global")    
@requires(grid[1], block[1], thread[32], smem=1280)
def mmaTF32NaiveKernel(A: ptr(const(float)) @ grid[1],
                B: ptr(const(float)) @ grid[1],
                C: ptr(float) @ grid[1],
                M : int @ grid[1],
                N : int @ grid[1],
                K : int @ grid[1]):
    with group(grid[1]):
    MMA_N  : constexpr(int) = 8
    MMA_K  : constexpr(int) = 8
    MMA_M  : constexpr(int) = 16
    K_tiles : const(int) = (K + MMA_K - 1) / MMA_K
    
    num_blocks_n : const(int) @ grid[1] = (N + MMA_N - 1) / MMA_N
        
    block_row : const(int) @ block[1] = id() / num_blocks_n 
    block_col : const(int) @ block[1] = id() 
    
    warp_row: const(int) @ block[1] = block_row * MMA_M 
    warp_col : const(int) @ block[1] = block_col * MMA_N
    
    with claim(C, scope=block[1], offset= warp_row * N + warp_col) as C_blk:
        with group(block[1]):
        A_smem : shared(float[16 * 8]) @ block[1]
        B_smem : shared(float[8 * 8])  @ block[1]
        C_smem : shared(float[16 * 8]) @ block[1]
    
        idx : int @ thread[1] = id() * 4
        with partition(C_smem, dimension=thread[1], offset=idx) as C_thrd:
            for i in range(0, 4, 1):
                with group(thread[1]):
                C_thrd[i] = 0
         
        for i in range(0, K_tiles, 1):
            a_idx : int @ thread[1] = id() * 4
            # Sync point
            for j in range(0, 4, 1):
            flat_idx : int @ thread[1] = a_idx + j
            row : int @ thread[1] = flat_idx / MMA_K 
            col : int @ thread[1] = flat_idx 
            global_row : int @ thread[1] = warp_row + row 
            global_col: int @ thread[1] = i * MMA_K + col 
            with partition(A_smem, dimension=thread[1], offset= row * MMA_K + col) as A_smem_thrd:
                with group(thread[1]):
                A_smem_thrd[0] = A[global_row * K + global_col]
                
            b_idx : int @ thread[1] = id() * 2
            # Sync point
            for j in range(0, 2, 1):
            flat_idx_b : int @ thread[1] = b_idx + j
            row_b : int @ thread[1] = flat_idx_b / MMA_K 
            col_b : int @ thread[1] = flat_idx_b 
            global_row_b : int @ thread[1] = i * MMA_K + row_b 
            global_col_b : int @ thread[1] = warp_col + col_b
            with partition(B_smem, dimension=thread[1], offset= row_b * MMA_K + col_b) as B_smem_thrd:
                with group(thread[1]):
                B_smem_thrd[0] = B[global_row_b * N + global_col_b]
            
            # Sync point
            with claim(C_smem, scope=thread[32], offset=0) as C_smem_warp:
                match split(thread):
                case 32:
                    simple_mma(A_smem, B_smem, C_smem_warp)
        
        for j in range(0, 4, 1):
            flat_idx_c : int @ thread[1] = id() * 4 + j
            row_c : int @ thread[1] = flat_idx_c / MMA_K 
            col_c : int @ thread[1] = flat_idx_c 
            with partition(C_blk, dimension=thread[1], offset= row_c * N + col_c) as C_thrd:
            with group(thread[1]):
                C_thrd[0]  = C_smem[row_c * MMA_N + col_c]
return
\end{ccudaCode}

\subsection{Different Matrix Multiplications Variants}
\begin{ccudaBox}
@prism("global")
@requires(grid[1], block[1], thread[1], smem=49000)
def my_sgemm_kernel_1( M : int @ grid[1],
                       N : int @ grid[1],
                       K : int @ grid[1],
                       alpha : float @ grid[1],
                       A: ptr(const(float)) @ grid[1],
                       B: ptr(const(float)) @ grid[1],
                       beta : float @ grid[1],
                       C: ptr(float) @ grid[1],
                       block_size : int @ grid[1]):
    with group(grid[1]):
        bid : int @ block[1] = id()
        with partition(C, dimension=block[1], offset=0) as C_blk:
            with group(block[1]):
                tid : int @ thread[1] = bid * block_size + id()
                with partition(C_blk, dimension=thread[1], offset=0) as C_thrd:
                    x : const(int) @ thread[1] = tid / N
                    y : const(int) @ thread[1] = tid 
                    with group(thread[1]):
                        tmp : float @ thread[1] = 0
                        for i in range(0, K, 1):
                            tmp += A[x * K + i] * B[i * N + y]
                        C_thrd[x * N + y] = alpha * tmp + beta * C[x * N + y]
    return
\end{ccudaBox}

\begin{ccudaBox}
@prism("global")
@requires(grid[1], block[1], thread[1], smem=49000)
def my_sgemm_kernel_2( M : int @ grid[1],
                        N : int @ grid[1],
                        K : int @ grid[1],
                        alpha : float @ grid[1],
                        A: ptr(const(float)) @ grid[1],
                        B: ptr(const(float)) @ grid[1],
                        beta : float @ grid[1],
                        C: ptr(float) @ grid[1],
                        tile_size : int @ grid[1]):
    with group(grid[1]):
        global_tid: int @ thread[1] = id()
        total_elements : const(int) @ grid[1] = M * N
        num_tiles_per_row :const(int) @ grid[1] = (N + tile_size - 1) / tile_size
        tile_id : const(int) @ thread[1] = global_tid / (tile_size * tile_size)
        tile_row : const(int) @ thread[1] = tile_id / num_tiles_per_row
        tile_col : const(int) @ thread[1] = tile_id 
        local_id : const(int) @ thread[1] = global_tid 
        local_row : const(int) @ thread[1] = local_id / tile_size 
        local_col : const(int) @ thread[1] = local_id 
        cRow : const(int) @ thread[1] = tile_row * tile_size + local_row
        cCol : const(int) @ thread[1] = tile_col * tile_size + local_col
        with partition(C, dimension=thread[1], offset=cRow * N + cCol) as C_thrd:
            with group(thread[1]):
                if global_tid < total_elements: 
                    if cRow < M and cCol < N:
                        tmp : float @ thread[1] = 0  
                        for i in range(0, K, 1):
                            tmp += A[cRow * K + i] * B[i * N + cCol]
                        C_thrd[0] = alpha * tmp + beta * C_thrd[0]

    return
\end{ccudaBox}

\begin{ccudaBox}    
@prism("global")
@attr("__launch_bounds__((BM * BN) / (TM * TN), 1)")
@template([("BM", c_int),
            ("BN", c_int),
            ("BK", c_int),
            ("TM", c_int),
            ("TN", c_int)])
@requires(grid[1], block[1], thread[1], smem=49000)
def my_sgemm_kernel_3( M : int @ grid[1],
                        N : int @ grid[1],
                        K : int @ grid[1],
                        alpha : float @ grid[1],
                        A: ptr(const(float)) @ grid[1],
                        B: ptr(const(float)) @ grid[1],
                        beta : float @ grid[1],
                        C: ptr(float) @ grid[1],
                        num_blocks_N : int @ grid[1]):
    with group(grid[1]):
        cRow : const(int) @ block[1] = id() / num_blocks_N
        cCol : const(int) @ block[1] = id() 
        totalResultsBlocktile: const(uint) @ grid[1] = BM * BN
        numThreadsBlocktile: const(uint) @ grid[1] = totalResultsBlocktile / (TM * TN)
        with partition(C, dimension=block[1], offset = cRow * BM * N + cCol * BN) as C_blk:
            with partition(A, dimension=block[1], offset = cRow * BM * K) as A_blk:
                with partition(B, dimension=block[1], offset = cCol * BN) as B_blk:
                    with group(block[1]):
                        As : shared(float[128 * 8]) @ block[1]  
                        Bs : shared(float[128 * 8]) @ block[1]  
                        threadRow : const(int) @ thread[1] = id() / (BN / TN) 
                        threadCol : const(int) @ thread[1] = id() 
    
                        innerRowA : const(int) @ thread[1] = id() / BK 
                        innerColA : const(int) @ thread[1] = id() 
                        strideA : const(int) @ thread[1] = numThreadsBlocktile / BK 
    
                        innerRowB : const(int) @ thread[1] = id() / BN 
                        innerColB : const(int) @ thread[1] = id() 
                        strideB : const(int) @ thread[1] = numThreadsBlocktile / BN
    
                        threadResults : float[64] @ thread[1] = { 0 }
                        regM : float[8] @ thread[1] = { 0 } # Need to go add the {}
                        regN : float[8] @ thread[1] = { 0 }
    
                        i : int @ thread[1] = 0
    
                        for bkIdx in range(0, K, BK):
    
                            with partition(As, dimension=thread[1], offset = 0) as As_thrd:
                                with partition(Bs, dimension=thread[1], offset = 0) as Bs_thrd:
                                    with group(thread[1]):
                                        for loadOffset in range(0, BM, strideA):
                                            As_thrd[(innerRowA + loadOffset) * BK + innerColA] = \
                                                A_blk[ i * BK + (innerRowA + loadOffset) * K + innerColA]
    
                                        for loadOffset in range(0, BK, strideB):
                                            Bs_thrd[(innerRowB + loadOffset) * BN + innerColB] = \ 
                                                B_blk[(i * BK * N) + (innerRowB + loadOffset) * N + innerColB]
    
                                        i += 1
    
                                        for dotIdx in range(0, BK, 1):
                                            for i in range(0, TN, 1):
                                                regM[i] = As[(threadRow * TM + i) * BK + dotIdx]
                                            for i in range(0, TN, 1):
                                                regN[i] = Bs[dotIdx * BN + threadCol * TN + i]
                                            for resIdxM in range(0, TM, 1):
                                                for resIdxN in range(0, TN, 1):
                                                        threadResults[resIdxM * TN + resIdxN] += regM[resIdxM] * regN[resIdxN]
    
                        for resIdxM in range(0, TM, 1):   
                            for resIdxN in range(0, TN, 1):   
                                with partition(C_blk, dimension=thread[1], offset = (threadRow * TM + resIdxM) * N + threadCol * TN + resIdxN) as C_thrd:
                                    with group(thread[1]):
                                        C_thrd[0] = alpha * threadResults[resIdxM * TN + resIdxN] + beta * C_blk[(threadRow * TM + resIdxM) * N + threadCol * TN + resIdxN]
                            
    return
\end{ccudaBox}

\begin{ccudaBox}
@prism("global")
@attr("__launch_bounds__((BM * BN) / (TM * TN), 1)")
@template([("BM", c_int),
           ("BN", c_int),
           ("BK", c_int),
           ("TM", c_int),
           ("TN", c_int)])
@requires(grid[1], block[1], thread[1], smem=49000)
def my_sgemm_kernel_4( M : int @ grid[1],
                       N : int @ grid[1],
                       K : int @ grid[1],
                       alpha : float @ grid[1],
                       A: ptr(float) @ grid[1],
                       B: ptr(float) @ grid[1],
                       beta : float @ grid[1],
                       C: ptr(float) @ grid[1],
                       num_blocks_N : int @ grid[1]):
    with group(grid[1]):
        cRow : const(int) @ block[1] = id() / num_blocks_N
        cCol : const(int) @ block[1] = id() 
        totalResultsBlocktile: const(uint) @ grid[1] = BM * BN
        numThreadsBlocktile: const(uint) @ grid[1] = totalResultsBlocktile / (TM * TN)
        with partition(C, dimension=block[1], offset = cRow * BM * N + cCol * BN) as C_blk:
            with partition(A, dimension=block[1], offset = cRow * BM * K) as A_blk:
                with partition(B, dimension=block[1], offset = cCol * BN) as B_blk:
                    with group(block[1]):
                        As : shared(float[128 * 8]) @ block[1]  
                        Bs : shared(float[128 * 8]) @ block[1]  
                        threadRow : const(int) @ thread[1] = id() / (BN / TN) 
                        threadCol : const(int) @ thread[1] = id() 

                        innerRowA : const(int) @ thread[1] = id() / (BK / 4)
                        innerColA : const(int) @ thread[1] = id() 
                        strideA : const(int) @ thread[1] = numThreadsBlocktile / BK 

                        innerRowB : const(int) @ thread[1] = id() / (BN / 4)
                        innerColB : const(int) @ thread[1] = id() 
                        strideB : const(int) @ thread[1] = numThreadsBlocktile / BN

                        threadResults : float[64] @ thread[1] = { 0 }
                        regM : float[8] @ thread[1] = { 0 } # Need to go add the {}
                        regN : float[8] @ thread[1] = { 0 }

                        i : int @ thread[1] = 0

                        for bkIdx in range(0, K, BK):

                            with partition(As, dimension=thread[1], offset = 0) as As_thrd:
                                with partition(Bs, dimension=thread[1], offset = 0) as Bs_thrd:
                                    with group(thread[1]):

                                        tmp : float4 @ thread[1] = float4_cast(A_blk[i * BK  + innerRowA * K + innerColA * 4])
                                        As_thrd[(innerColA * 4 + 0) * BM + innerRowA] = tmp.x
                                        As_thrd[(innerColA * 4 + 1) * BM + innerRowA] = tmp.y
                                        As_thrd[(innerColA * 4 + 2) * BM + innerRowA] = tmp.z
                                        As_thrd[(innerColA * 4 + 3) * BM + innerRowA] = tmp.w

                                        i += 1

                                        for dotIdx in range(0, BK, 1):
                                            for i in range(0, TN, 1):
                                                regM[i] = As[dotIdx * BM + threadRow * TM + i]
                                            for i in range(0, TN, 1):
                                                regN[i] = Bs[dotIdx * BN + threadCol * TN + i]
                                            for resIdxM in range(0, TM, 1):
                                                for resIdxN in range(0, TN, 1):
                                                    threadResults[resIdxM * TN + resIdxN] += regM[resIdxM] * regN[resIdxN]

                        for resIdxM in range(0, TM, 1):   
                            for resIdxN in range(0, TN, 4):   
                                with partition(C_blk, dimension=thread[1], offset = (threadRow * TM + resIdxM) * N + threadCol * TN + resIdxN) as C_thrd:
                                    with group(thread[1]):
                                        temp : float4 =  float4_cast(C_thrd[0])
                                        temp.x = alpha * threadResults[resIdxM * TN + resIdxN] + beta * temp.x
                                        temp.y = alpha * threadResults[resIdxM * TN + resIdxN + 1] + beta * temp.y
                                        temp.z = alpha * threadResults[resIdxM * TN + resIdxN + 2] + beta * temp.z
                                        temp.w = alpha * threadResults[resIdxM * TN + resIdxN + 3] + beta * temp.w
                                        float4_cast[C_thrd[0]] = temp
                        
    return
\end{ccudaBox}

\begin{ccudaBox}
@prism("global")
@attr("__launch_bounds__(NUM_THREADS)")
@template([("BM", c_int),
    ("BN", c_int),
    ("BK", c_int),
    ("WM", c_int),
    ("WN", c_int),
    ("WNITER", c_int),
    ("TM", c_int),
    ("TN", c_int),
    ("NUM_THREADS", c_int),
    ])    
@requires(grid[1], block[1], warp[1], smem=49000)
def my_sgemm_kernel_5( M : int @ grid[1],
                        N : int @ grid[1],
                        K : int @ grid[1],
                        alpha : float @ grid[1],
                        A: ptr(float) @ grid[1],
                        B: ptr(float) @ grid[1],
                        beta : float @ grid[1],
                        C: ptr(float) @ grid[1],
                        num_blocks_N : int @ grid[1]):
    
    with group(grid[1]):
        cRow : const(int) @ block[1] = id() / num_blocks_N
        cCol : const(int) @ block[1] = id() 
    
        WMITER : constexpr(uint) @ grid[1] = (WM * WN) / (WARPSIZE * TM * TN * WNITER)
        WSUBM : constexpr(uint) @ grid[1] = WM / WMITER
        WSUBN : constexpr(uint) @ grid[1] = WN / WNITER
    
        rowStrideA : constexpr(int) @ grid[1] = (NUM_THREADS * 4) / BK 
        rowStrideB : constexpr(int) @ grid[1] = NUM_THREADS / (BN / 4) 
        with partition(C, dimension=block[1], offset = cRow * BM * N + cCol * BN) as C_blk:
            with partition(A, dimension=block[1], offset = cRow * BM * K) as A_blk:
                with partition(B, dimension=block[1], offset = cCol * BN) as B_blk:
                    with group(block[1]):
                        warpIdx : const(uint) @ thread[32] = id()
                        warpCol : const(uint) @ thread[32] =  warpIdx 
                        warpRow : const(uint) @ thread[32] =  warpIdx / (BN / WN)
                        As : shared(float[128 * 16]) @ block[1]  
                        Bs : shared(float[128 * 16]) @ block[1]  
    
                        innerRowA : const(int) @ thread[1] = id() / (BK / 4)
                        innerColA : const(int) @ thread[1] = id() 
    
                        innerRowB : const(int) @ thread[1] = id() / (BN / 4)
                        innerColB : const(int) @ thread[1] = id() 
    
                        threadResults : float[128] @ thread[1] = { 0 }
                        regM : float[8] @ thread[1] = { 0 } # Need to go add the {}
                        regN : float[16] @ thread[1] = { 0 }
                        j : int @ block[1] = 0
                        for bkIdx in range(0, K, BK):
                            with partition(A_blk, dimension=block[1], offset = j * BK) as A_offset:
                                with partition(B_blk, dimension=block[1], offset = j * BK * N) as B_offset: 
                                    loadFromGmem_our(N, K, A_offset, B_offset, As, Bs, template=[BM, BN, BK, rowStrideA, rowStrideB])
                                    processFromSmem_our(regM, regN, threadResults, As, Bs, warpRow, warpCol, template=[BM, BN, BK, WM, WN, WMITER, WNITER, WSUBM, WSUBN, TM, TN])
                                    j+=1
                                    pass
    
                        with partition(C_blk, dimension=warp[1], offset = warpRow * WM * N + warpCol * WN) as C_warp:
                            with group(warp[1]):
    
                                threadIdxInWarp : int @ thread[1] = id()
                                threadColInWarp : const(uint) @ thread[1] = threadIdxInWarp 
                                threadRowInWarp : const(uint) @ thread[1] = threadIdxInWarp / (WSUBN / TN)
    
                                for wSubRowIdx in range(0, WMITER, 1):
                                    for wSubColIdx in range(0, WNITER, 1):
                                        with partition(C_warp, dimension=thread[1], offset = (wSubRowIdx * WSUBM) * N + wSubColIdx * WSUBN) as C_interim:
                                            for resIdxM in range(0, TM, 1):
                                                for resIdxN in range(0, TN, 4):
                                                    with group(thread[1]):
                                                        tmp : float4 @ thread[1] = float4_cast(C_interim[(threadRowInWarp * TM + resIdxM) * N + threadColInWarp * TN + resIdxN])
                                                        i : const(int) @ thread[1] = (wSubRowIdx * TM + resIdxM) * (WNITER * TN) + wSubColIdx * TN + resIdxN
                                                        tmp.x = alpha * threadResults[i + 0] + beta * tmp.x
                                                        tmp.y = alpha * threadResults[i + 1] + beta * tmp.y
                                                        tmp.z = alpha * threadResults[i + 2] + beta * tmp.z
                                                        tmp.w = alpha * threadResults[i + 3] + beta * tmp.w
                                                        float4_cast[C_interim[(threadRowInWarp * TM + resIdxM) * N  +  threadColInWarp * TN + resIdxN]] = tmp
                                
    return 
\end{ccudaBox}

\subsection{Single-Pass Parallel Prefix Scan with Decoupled Look-Back}

\begin{ccudaCode}    
@prism("global")
@requires(grid[1], thread[1])
def add_m_thread(x: ptr(const(c_int)) @ grid[1], 
                    y: ptr(c_int) @ grid[1],
                    m: uint32_t @ grid[1]):
    with group(grid[1]):
        tid : uint32_t @ thread[1] = id()
        with partition(x, offset=tid, dimension=thread[1]) as x_thread:
            with partition(y, offset=tid, dimension=thread[1]) as y_thread:
                with group(thread[1]):
                    y_thread[0] = x_thread[0] + m
    return

# WORK_PER_BLOCK = (8192 * 2)
# WARPS_PER_BLOCK = 16
# WARPS_PASS_CNT = (8192 * 2) / (16 * 32) = 32
# WARP_ENDS = (8192 * 2) / 32 = 512
    
@prism("device")
@requires(thread[1])
def shfl_up_sync_wrapper(val : uint32_t @ thread[1], delta : uint32_t @ thread[1]):
    pass
    
@prism("device")
@requires(block[1], warp[16])
def scan_warps(x : ptr(uint32_t) @ block[1], x_shmem : ptr(uint32_t) @ block[1]):
    with group(block[1]): 
        for i in range(0, (8192 * 2) / (16 * 32)):
            warp_id : uint32_t @ warp[1] = id()
            with partition(x, offset = (warp_id + i * 16) * 32, dimension = warp[1]) as x_warp:
                with partition(x_shmem, offset = (warp_id + i * 16) * 32, dimension = warp[1]) as x_shmem_warp:
                    with group(warp[1]):
                        thread_id : uint32_t @ thread[1] = id()
                        with partition(x_warp, offset = thread_id, dimension = thread[1]) as x_thread:
                            with partition(x_shmem_warp, offset = thread_id, dimension = thread[1]) as x_shmem_thread:
                                with group(thread[1]):
                                    val : uint32_t @ thread[1] = x_thread[0]
                                        
                                    stride : uint32_t @ thread[1] = 1
                                    while stride < 32:
                                        received : uint32_t @ thread[1] = shfl_up_sync_wrapper(val, stride)
                                        if thread_id >= stride:
                                            val = received + val
                                        stride *= 2
                                    x_shmem_thread[0] = val
    
@prism("device")
@requires(block[1], warp[16])
def get_warp_ends_inclusive(x : ptr(uint32_t) @ block[1], warp_ends : ptr(uint32_t) @ block[1]):
    with group(block[1]): 
        warp_id : uint32_t @ warp[1] = id()
        with partition(x, offset = (warp_id) * 32 * 32, dimension = warp[1]) as x_warp:
            with partition(warp_ends, offset = (warp_id) * 32, dimension = warp[1]) as warp_ends_warp:
                with group(warp[1]):
                    thread_id : uint32_t @ thread[1] = id()
                    with partition(x_warp, offset = thread_id * 32, dimension = thread[1]) as x_thread:
                        with partition(warp_ends_warp, offset = thread_id, dimension = thread[1]) as warp_ends_thread:
                            with group(thread[1]):
                                warp_ends_thread[0] = x_thread[31]

@prism("device")
@requires(block[1], warp[16])
def update_with_new_data_and_writeback(x : ptr(uint32_t) @ block[1], x_shmem : ptr(uint32_t) @ block[1], warp_ends : ptr(uint32_t) @ block[1], prefix : uint32_t @ block[1]):
    with group(block[1]):
        for i in range(0, (8192 * 2) / (16 * 32)):
            warp_id : uint32_t @ warp[1] = id()
            with partition(x, offset = (warp_id + i * 16) * 32, dimension = warp[1]) as x_warp:
                with partition(x_shmem, offset = (warp_id + i * 16) * 32, dimension = warp[1]) as x_shmem_warp:
                    with group(warp[1]):
                        thread_id : uint32_t @ thread[1] = id()
                        with partition(x_warp, offset = thread_id, dimension = thread[1]) as x_thread:
                            with partition(x_shmem_warp, offset = thread_id, dimension = thread[1]) as x_shmem_thread:
                                with group(thread[1]):
                                    if warp_id + i * 16 == 0:
                                        x_thread[0] = prefix + x_shmem_thread[0]
                                    else:
                                        x_thread[0] = prefix + warp_ends[warp_id + i * 16 - 1] + x_shmem_thread[0]
    
@prism("device")
@requires(block[1], warp[16])
def naive_scan_warp_ends(x : ptr(uint32_t) @ block[1], x_buf : ptr(uint32_t)  @ block[1]):
    with group(block[1]):
        stride : uint32_t @ group[1] = 1
        while stride < (8192 * 2 / 32):
    
            warp_id : uint32_t @ warp[1] = id()
            with partition(x, offset = warp_id * 32, dimension = warp[1]) as x_warp:
                with partition(x_buf, offset = warp_id * 32, dimension = warp[1]) as x_buf_warp:
                    with group(warp[1]):
                        thread_id : uint32_t @ thread[1] = id()
                        with partition(x_warp, offset = thread_id, dimension = thread[1]) as x_thread:
                            with partition(x_buf_warp, offset = thread_id, dimension = thread[1]) as x_buf_thread:
                                with group(thread[1]):
                                    if thread_id + warp_id * 32  >= stride:
                                        x_buf_thread[0] = x_thread[0] + x_thread[-stride]
                                    else:
                                        x_buf_thread[0] = x_thread[0]
    
            tmp : ptr(uint32_t) @ block[1] = x
            x = x_buf
            x_buf = tmp
    
            stride *= 2
    return x

# Implemented in raw CUDA: only exposing the necessary interfaces
@prism("device")
@requires(block[1], thread[1])
def write_block_aggregate(aggregate : uint32_t @ block[1], block_idx : uint32_t @ block[1], block_infos : ptr(int) @ block[1]):
    pass
    
@prism("device")
@requires(block[1], thread[1])
def write_block_prefix(prefix : uint32_t @ block[1], block_idx : uint32_t @ block[1], block_infos : ptr(int) @ block[1]):
    pass

@prism("device")
@requires(block[1], thread[1])
def lookback_to_prev_blocks(block_idx : uint32_t @ block[1], block_infos : ptr(int) @ block[1]):
    pass
    
@prism("device")
@requires(grid[1])
def get_block_idx_from_global_counter(block_idx_counter : ptr(uint32_t) @ grid[1]):
    pass
    
@prism("global")
@requires(grid[1], block[1], warp[16], smem=69632)
def scan_kernel(n : uint32_t @ grid[1], x : ptr(uint32_t) @ grid[1], block_idx_counter : ptr(uint32_t) @ grid[1], block_infos : ptr(int) @ grid[1]):
    block_idx : uint32_t @ grid[1] = get_block_idx_from_global_counter(block_idx_counter)
    
    with partition(x, offset = (8192 * 2) * block_idx, dimension = block[1]) as x_block:
        with group(block[1]):
            x_shmem: shared(uint32_t[8192 * 2]) @ block[1]
                
            scan_warps(x_block, x_shmem)
    
            warp_ends : shared(uint32_t[512]) @ block[1]
            warp_ends_buf : shared(uint32_t[512]) @ block[1]
                
            get_warp_ends_inclusive(x_shmem, warp_ends)
            warp_ends_res : ptr(uint32_t) @ block[1] = naive_scan_warp_ends(warp_ends, warp_ends_buf)
    
            aggregate : uint32_t @ block[1] = warp_ends_res[512 - 1]
            write_block_aggregate(aggregate, block_idx, block_infos)
    
            prefix : uint32_t @ block[1] = lookback_to_prev_blocks(block_idx, block_infos)
            write_block_prefix(aggregate + prefix, block_idx, block_infos)
                
            update_with_new_data_and_writeback(x_block, x_shmem, warp_ends_res, prefix)
\end{ccudaCode}

\subsection{CUB Functions}

\begin{ccudaBox}
@prism("device")
@requires(block[1], thread[32])
def block_load(input : ptr(const(int)) @ block[1],
                output : ptr(int) @ thread[1],
                items_per_thread : int @ block[1]):
    with group(block[1]):
        tid : int @ thread[32]  = id()
        with partition(input, dimension=thread[32], offset = tid * items_per_thread) as input_thrd:
            with group(thread[32]):
                warp_load(input_thrd, output, items_per_thread)
                   
    return  
\end{ccudaBox}

\begin{ccudaBox}
@prism("device")
@requires(thread[32])
def warp_load(input : ptr(const(int)) @ thread[32],
                output : ptr(int) @ thread[1],
                items_per_thread : int @ thread[32]):
    with group(thread[32]):
        tid : int @ thread[1]  = id()
        with partition(input, dimension=thread[1], offset = tid * items_per_thread) as input_thrd:
            with group(thread[1]):
                thread_load(input_thrd, output, items_per_thread)
                       
    return  
\end{ccudaBox}

\begin{ccudaBox}
@prism("device")
@requires(thread[1])
def thread_load(input : ptr(const(int)) @ thread[1],
                output : ptr(int) @ thread[1],
                items_per_thread : int @ thread[1]):
    with group(thread[1]):
        for i in range(0, items_per_thread, 1):
            output[i] = input[i]
    return   
\end{ccudaBox}

\begin{ccudaBox}
@prism("device")
@requires(block[1], thread[32])
def block_store(input : ptr(const(int)) @ thread[1],
                output : ptr(int) @ block[1],
                items_per_thread : int @ block[1]):
    with group(block[1]):
        tid : int @ thread[32]  = id()
        with partition(output, dimension=thread[32], offset = tid * items_per_thread) as output_thrd:
            with group(thread[32]):
                warp_store(input, output_thrd, items_per_thread)
    return   
\end{ccudaBox}

\begin{ccudaBox}
@prism("device")
@requires(thread[32])
def warp_store(input : ptr(const(int)) @ thread[1],
                output : ptr(int) @ thread[32],
                items_per_thread : int @ thread[32]):
    with group(thread[32]):
        tid : int @ thread[1]  = id()
        with partition(output, dimension=thread[1], offset = tid * items_per_thread) as output_thrd:
            with group(thread[1]):
                thread_store(input, output_thrd, items_per_thread)
    return  
\end{ccudaBox}

\begin{ccudaBox}
@prism("device")
@requires(thread[1])
def thread_store(input : ptr(const(int)) @ thread[1],
               output : ptr(int) @ thread[1],
               items_per_thread : int @ thread[1]):
    with group(thread[1]):
        for i in range(0, items_per_thread, 1):
            output[i] = input[i]
    return   
\end{ccudaBox}

\subsection{H100 Matrix Multiplication}
\begin{minted}[
    linenos,
    autogobble,
    breaklines,
    breakanywhere,
    baselinestretch=1.0,
    fontsize=\tiny,
    numbersep=3pt,
    frame=single
  ]{python}
@prism("global")
@attr(" __launch_bounds__(128*3) ")
@requires(grid[1], block[1], warp[12], smem=227000)
def my_h100_match_no_tail(
            M: int @ grid[1],
            N: int @ grid[1],
            K: int @ grid[1],
            C: ptr(bf16) @ grid[1],
            tensorMapA: const(CUtensorMap) @ grid[1],
            tensorMapB: const(CUtensorMap) @ grid[1]):
        
    with group(grid[1]):
        BM : constexpr(int) @ grid[1] = 128
        BN : constexpr(int) @ grid[1] = 256
        BK : constexpr(int) @ grid[1] = 64
        NUM_THREADS : constexpr(int) @ grid[1] = 128*3
        QSIZE : constexpr(int) @ grid[1] = 4
        NUM_SM : constexpr(int) @ grid[1] = 128
    
        WGMMA_M : constexpr(int) @ grid[1] = 64
        WGMMA_K : constexpr(int) @ grid[1] = 16
        WGMMA_N : constexpr(int) @ grid[1] = BN
        num_consumers : constexpr(int) @ grid[1] = (NUM_THREADS / 128) - 1
        B_WG_M : constexpr(int) @ grid[1] = BM / num_consumers
    
        TM : constexpr(int) @ grid[1] = 16
        TN : constexpr(int) @ grid[1] = 8
        schedule_block : int @ block[1] = id()
            
        with partition(C, offset=0, dimension=block[1]) as block_C:                 
            with group(block[1]):
    
                # Allocate SA
                #  sA[QSIZE][BK*BM]
                sA_slot0 : shared(bf16[64* 128], align=128) @ block[1]
                sA_slot1 : shared(bf16[64* 128]) @ block[1]
                sA_slot2 : shared(bf16[64* 128]) @ block[1]
                sA_slot3 : shared(bf16[64* 128]) @ block[1]
    
                # Allocate SB
                #  sB[QSIZE][BK*BN]
                sB_slot0 : shared(bf16[64* 256], align=128) @ block[1]
                sB_slot1 : shared(bf16[64* 256]) @ block[1]
                sB_slot2 : shared(bf16[64* 256]) @ block[1]
                sB_slot3 : shared(bf16[64* 256]) @ block[1]
    

                num_blocks_k : const(int) @ block[1] = K / BK
                wg_idx : int @ warp[4] = id()
                blk_thrd_id : int @ thread[1] = id()

                                               
                tid : int @ thread[1] = id() % 128
                is_producer : bool @ warp[4] = wg_idx == 0 
    
                num_block_m : int @ block[1] = 0
                num_block_n : int @ block[1] = 0

                with group(warp[4]):
                    if (is_producer):
                        pass
                    else:
                        wg_idx = wg_idx - 1
    
                schedule_it : int @ block[1] = 0
                total_blocks_m : int @ block[1] = (((M + BM) - 1) / BM)
                total_blocks_n : int @ block[1] = (((N + BN) - 1) / BN)
                unsafe("assert(CEIL_DIV(M, BM)%TM == 0 && total_blocks_n%TN == 0);")
    
                while(true):
                    num : int @ block[1] = ((schedule_it * NUM_SM) + schedule_block)
                    if (num >= (total_blocks_m * total_blocks_n)):
                        break
                    cur_tile : int @ block[1] = (num / (TM * TN))
                    cur_tile_pos : int @ block[1] = (num % (TM * TN))
                    num_block_m = (TM * (cur_tile / (total_blocks_n / TN)))
                    num_block_n = (TN * (cur_tile % (total_blocks_n / TN)))
                    num_block_m += (cur_tile_pos / TN)
                    num_block_n += (cur_tile_pos % TN)
                    schedule_it += 1
                    d : c_float[1][16][8] @ thread[1]
    
                    with group(thread[1]):
                        d_ptr : ptr(float) @ thread[1]
                        # TODO: ANNOYONG TO NOT HAVE TO SET EACH INDIVIDUAL TO 0.
                        unsafe("d_ptr = (float *)d;")
                        memset_wrapper(d_ptr, 0, 1 * 16 * 8 * 4)
    
                    # Now remove these registers...
                    with group(warp[4]):
                        if (is_producer):
                            num_regs : constexpr(int) @ warp[4] =  24 if (num_consumers <= 2) else 32
                            warpgroup_reg_dealloc(template=[num_regs])
                            pass
                        else:
                            num_regs_p : constexpr(int) @ warp[4] =  256 if num_consumers == 1 else (240 if num_consumers == 2 else 160)
                            warpgroup_reg_alloc(template=[num_regs_p])
                            pass
                    
                    num_blocks_k_main : const(int) @ block[1] = (num_blocks_k / 4) * 4
    
                    for block_k_iter in range(0, num_blocks_k_main, 4):
    
                        with claim(sA_slot0, scope=thread[1], offset=0) as sA_new:
                            with claim(sA_slot1, scope=thread[1], offset=0) as sA_new1:
                                with claim(sA_slot2, scope=thread[1], offset=0) as sA_new2:
                                    with claim(sB_slot0, scope=thread[1], offset=0) as sB_new:
                                        with claim(sB_slot1, scope=thread[1], offset=0) as sB_new1:
                                            with claim(sB_slot2, scope=thread[1], offset=0) as sB_new2:
                                                with claim(sA_slot3, scope=thread[1], offset=0) as sA_new3:
                                                    with claim(sB_slot3, scope=thread[1], offset=0) as sB_new3:
                                                        match warp:
                                                            case 8:
                                                                pass
                                                            case 4:
                                                                match thread:
                                                                    case 1:
                                                                        expect_bytes(full_0, (BK*BN+BK*BM)*2)
                                                                        # TODO: BECOMES AN INTRINSIC CALL!
                                                                        ta_void_ptr : ptr(const(void)) @ thread[1]
                                                                        unsafe("ta_void_ptr = &tensorMapA;")
                                                                        load_async(sA_new, ta_void_ptr, full_0, (block_k_iter+0)*BK, num_block_m*BM)
                                                                        tb_void_ptr : ptr(const(void)) @ thread[1]
                                                                        unsafe("tb_void_ptr = &tensorMapB;")
                                                                        load_async(sB_new, tb_void_ptr, full_0, (block_k_iter+0)*BK, num_block_n*BN)
    
                                                                
                                                                        expect_bytes(full_1, (BK*BN+BK*BM)*2)
                                                                        load_async(sA_new1, ta_void_ptr, full_1, (block_k_iter+1)*BK, num_block_m*BM)
                                                                        load_async(sB_new1, tb_void_ptr, full_1, (block_k_iter+1)*BK, num_block_n*BN)
    
                                                                        expect_bytes(full_2, (BK*BN+BK*BM)*2)
                                                                        load_async(sA_new2, ta_void_ptr, full_2, (block_k_iter+2)*BK, num_block_m*BM)
                                                                        load_async(sB_new2, tb_void_ptr, full_2, (block_k_iter+2)*BK, num_block_n*BN)
    
                                                                        expect_bytes(full_3, (BK*BN+BK*BM)*2)
                                                                        load_async(sA_new3, ta_void_ptr, full_3, (block_k_iter+3)*BK, num_block_m*BM)
                                                                        load_async(sB_new3, tb_void_ptr, full_3, (block_k_iter+3)*BK, num_block_n*BN)
                                                        
                        
                        with claim(sA_slot0, scope=warp[8], offset=0) as wgmma_sA:
                            with claim(sA_slot1, scope=warp[8], offset=0) as sA_1_producer:
                                with claim(sA_slot2, scope=warp[8], offset=0) as sA_2_producer:
                                    with claim(sB_slot0, scope=warp[8], offset=0) as wgmma_sB:
                                        with claim(sB_slot1, scope=warp[8], offset=0) as sB_1_producer:
                                            with claim(sB_slot2, scope=warp[8], offset=0) as sB_2_producer:
                                                with claim(sA_slot3, scope=warp[8], offset=0) as sA_3_producer:
                                                    with claim(sB_slot3, scope=warp[8], offset=0) as sB_3_producer:
                                                        match warp:
                                                            case 8: 
                                                                with group(warp[4]):
                                                                    with wgmma_async():
                                                                        for m_it in range(0, B_WG_M/WGMMA_M, 1):
                                                                            index_a : int @ warp[4] = m_it * BM
                                                                            index_b : int @ warp[4] = m_it * BN
                                                                            unsafe("#pragma unroll")
                                                                            for k_it in range(0, 64/WGMMA_K, 1):
                                                                                intrinsic.wgmma256(wgmma_sA[64*(m_it + wg_idx*B_WG_M/WGMMA_M)*WGMMA_M + index_a + k_it*WGMMA_K], wgmma_sB[(index_b + (k_it * WGMMA_K))], 1, 1, 1, 0, 0, out=[d[m_it]])
                                                          
                                                                    with wgmma_async():
                                                                        for m_it in range(0, B_WG_M/WGMMA_M, 1):
                                                                            index_a2 : int @ warp[4] = m_it * BM
                                                                            # TOTALLY OKAY! I just don't want to add support for this the compiler right now
                                                                            unsafe("#pragma unroll")
                                                                            for k_it in range(0, 64/WGMMA_K, 1):
                                                                                intrinsic.wgmma256(sA_1_producer[64*(m_it + wg_idx*B_WG_M/WGMMA_M)*WGMMA_M + index_a2 + k_it*WGMMA_K], sB_1_producer[(index_b2 + (k_it * WGMMA_K))], 1, 1, 1, 0, 0, out=[d[m_it]])
    
                                                        
                                                             
                                                                    with wgmma_async():
                                                                        for m_it in range(0, B_WG_M/WGMMA_M, 1):
                                                                            index_a3 : int @ warp[4] = m_it * BM
                                                                            index_b3 : int @ warp[4] = m_it * BN
                                                                            unsafe("#pragma unroll")
                                                                            for k_it in range(0, 64/WGMMA_K, 1):
                                                                                intrinsic.wgmma256(sA_2_producer[64*(m_it + wg_idx*B_WG_M/WGMMA_M)*WGMMA_M + index_a3 + k_it*WGMMA_K], sB_2_producer[(index_b3 + (k_it * WGMMA_K))], 1, 1, 1, 0, 0, out=[d[m_it]])
    
                                                              
                                                                    with wgmma_async():
                                                                        for m_it in range(0, B_WG_M/WGMMA_M, 1):
                                                                            index_a4 : int @ warp[4] = m_it * BM
                                                                            index_b4 : int @ warp[4] = m_it * BN
                                                                            unsafe("#pragma unroll")
                                                                            for k_it in range(0, 64/WGMMA_K, 1):
                                                                                intrinsic.wgmma256(sA_3_producer[64*(m_it + wg_idx*B_WG_M/WGMMA_M)*WGMMA_M + index_a4 + k_it*WGMMA_K], sB_3_producer[(index_b4 + (k_it * WGMMA_K))], 1, 1, 1, 0, 0, out=[d[m_it]])
                                                            case 4:
                                                                pass
                             
                                                  
    
                    with claim(block_C, scope=warp[8], offset=num_block_n*BN*M + num_block_m*BM) as c_consumer:                 
                        match warp:
                            case 8:
                                with partition(c_consumer, dimension=thread[1], offset=0) as c_thrd:                 
                                    with group(warp[4]):
                                        warp_id: int @ warp[1] = id()
                                        with group(warp[1]):
                                            lane : int @ thread[1] = id()
                                            row : int @ thread[1] = warp_id * 16  + lane / 4
                                            unsafe("#pragma unroll")
                                            for m_it in range(0,  B_WG_M/WGMMA_M, 1):
                                                yo : int @ warp[1]  = m_it*WGMMA_M + wg_idx*B_WG_M
                                                with group(thread[1]):
                                                    if (row + yo + num_block_m*BM >= M):
                                                        continue
                                                    for w in range(0, WGMMA_N, 16):
                                                        if (w < w < N-num_block_n*BN):
                                                            col : int @ thread[1] =  w + 2*(tid % 4); 
                                                            c_thrd[col * M + row + yo] = d[m_it][w/16][0]
                                                            c_thrd[(col + 1) * M + (row) + yo] = d[m_it][w/16][1]
                                                            c_thrd[(col) * M + (row + 8) + yo] = d[m_it][w/16][2]
                                                            c_thrd[(col + 1) * M + (row + 8) + yo] = d[m_it][w/16][3]
    
                                                            c_thrd[(col + 8) * M + (row + 0) + yo] = d[m_it][w/16][4]
                                                            c_thrd[(col + 9) * M + (row + 0) + yo] = d[m_it][w/16][5]
                                                            c_thrd[(col + 8) * M + (row + 8) + yo] = d[m_it][w/16][6]
                                                            c_thrd[(col + 9) * M + (row + 8) + yo] = d[m_it][w/16][7]
                            case 4:
                                pass
     
    return
\end{minted}

\end{document}